\setlist{nosep}
\newtheorem{thm}{Theorem}[section]
\newtheorem{lem}[thm]{Lemma}
\newtheorem{prop}[thm]{Proposition}
\newtheorem{cor}[thm]{Corollary}
\theoremstyle{definition}
\newtheorem{rem}[thm]{Remark}
\newenvironment{ex}
  {\pushQED{\qed}\example}
  {\popQED\endexample}
\numberwithin{equation}{section}
\crefname{thm}{Theorem}{Theorems} 
\crefname{lem}{Lemma}{Lemmas}
\crefname{prop}{Proposition}{Propositions}
\crefname{cor}{Corollary}{Corollaries} 
\crefname{conj}{Conjecture}{Conjectures} 
\crefname{rem}{Remark}{Remarks}
\crefname{ex}{Example}{Examples}
\renewcommand{\k}{\kappa}
\renewcommand{\t}{\theta}
\newcommand{\R}{\mathbb{R}}
\newcommand{\Z}{\mathbb{Z}}
\newcommand{\C}{\mathbb{C}}
\DeclareMathOperator{\rank}{rank}
\DeclareMathOperator{\im}{im}
\newcommand{\wQ}{\widehat{Q}}
\newcommand{\wG}{\widehat{G}}
\newcommand{\fX}{\mathsf{X}}
\newcommand{\fY}{\mathsf{Y}}
\newcommand{\fZ}{\mathsf{Z}}
\newcommand{\Ak}{\mathcal{A}_\kappa}
\newcommand{\Ek}{\mathcal{E}_\kappa}
\newcommand{\mP}{\mathcal{P}}
\newcommand{\mG}{\mathcal{G}}
\newcommand{\mN}{\mathcal{N}}
\newcommand{\mH}{\mathcal{H}}
\newcommand{\mS}{\mathcal{S}}
\newcommand{\mT}{\mathcal{T}}
\newcommand{\mC}{\mathcal{C}}
\newcommand{\mZ}{\mathcal{Z}}
\newcommand{\mB}{\mathcal{B}}
\begin{document}

\title[Positive equilibria in mass action networks: geometry and bounds]{Positive equilibria in mass action networks: geometry and bounds}

\author{Murad Banaji$^1$ \and Elisenda Feliu$^2$}

\date{\today}

\footnotetext[1]{School of Mathematical Sciences, Lancaster University, United Kingdom. m.banaji@lancaster.ac.uk}
\footnotetext[2]{Department of Mathematical Sciences, University of Copenhagen, Denmark. efeliu@math.ku.dk}

\begin{abstract}
Any mass action network gives rise to a parameterised family of polynomial equations whose positive solutions are the positive equilibria of the network. Here, we consider alternative systems of equations, whose solutions are in smooth, one-to-one correspondence with positive equilibria of the network, and capture degeneracy or nondegeneracy of the corresponding equilibria. The construction leads us to consider partitions of networks in a natural sense, and we explore the implications of choosing different partitions. The alternative systems are in some situations simpler than the original mass action equations, which allows us to rapidly identify various algebraic and geometric properties of the positive equilibrium set. This includes the characterisation of toricity and local toricity, bounds on the number of positive nondegenerate equilibria on stoichiometric classes, semialgebraic descriptions of the parameter regions for multistationarity, and the study of bifurcations. After discussing the construction of the alternative systems, various consequences for particular classes of networks and numerous examples are presented. We also develop additional techniques specifically for quadratic networks, the most common class of networks in applications, and use these techniques to derive strengthened results for quadratic networks. 
\color{black}

\vspace{0.5cm}
\noindent
{\bf MSC.} 92E20, 34C08, 05E40, 37G10

\end{abstract}

\keywords{mass action networks, multistationarity, parameterisations of equilibria, polynomial dynamical systems, bifurcations}

\maketitle

\section{Introduction}
Often motivated by problems in systems biology,  the question of when chemical reaction networks (CRNs) admit multiple equilibria, and if so how many, has been studied extensively. For example: there are many results giving necessary conditions for multistationarity; some results with sufficient conditions for multistationarity; results providing explicit bounds on the number of equilibria; results on bifurcations leading to multistationarity; and results on properties of the parameter set for multistationarity. Much of the work focuses on CRNs with mass action kinetics ({\em mass action networks} for short), and this is the setting of this work. A small selection of references on counting equilibria, both in mass action networks and networks with other kinetics, includes: classical early work \cite{hornjackson,Feinberg1995class}; a survey on conditions for multistationarity \cite{joshi-shiu-III}; papers on injectivity-based approaches to multistationarity \cite{craciun,banajicraciun2,feliuwiufSIADS2013,MR12,Muller2015injectivity,shinarfeinbergconcord1}; papers on parameter regions for multistationarity \cite{conradi2019total,Conradi.2016aa,Telek2023}; and papers on systems with special properties \cite{Wang.2008aa,Millan2012toricsteadystates,Tang2020MultistabilityOS,hernandez2023independent,Craciun2009toric,CraciunHeltonWilliams}.

Underlying questions about multistationarity are a deeper set of questions about the geometry of the positive equilibrium set of a mass action network. In particular, what can we infer about this geometry based on the number of species, the number of reactions, the rank of the network, the configuration of its sources, the way that it ``factors'' over various partitions, and other purely combinatorial properties of the network? In this paper, we derive bounds on the number of positive nondegenerate equilibria and show that easily computed combinatorial properties of a network can be used to infer and characterise toricity or local toricity of the equilibrium set. On the way to these results we explore partitions of a network, defined in a natural way, and how properties of the equilibrium set of a network relate to the way it factors over various partitions. We remark that the main construction in \cite{regensburger:gale}, and several consequences presented by the authors, are closely related to some of the results we present here. 

The broad starting point we take follows the principles of Gale duality \cite{bihan-sottile-complete-gale,sottile:book}, adapted to the specific context of mass action networks. While positive equilibria of a mass action network are, naturally, the positive solutions of a system of polynomial equations, we use these principles to write down {\em alternative} systems of equations whose solutions on some polyhedron are in smooth, one-to-one correspondence with positive equilibria of the network, perhaps on some stoichiometric class, see in particular \Cref{thminj,,thm:degensolvability}. The basic idea of the construction is as follows (details are given in Section 3): positive equilibria satisfy a polynomial equation of the form $\Gamma (\k \circ x^A)=0$, where $x^A$ is a vector of monomials and $\k$ a vector of positive constants multiplying the monomial vector component-wise. For any fixed $\k$, a positive vector $x$ satisfies this equation if and only if $\k \circ x^A=z$ for some $z$ in the positive part of $\ker \Gamma$. By  using well-known algebraic descriptions of the image of a monomial map, one obtains a new system to be solved on the positive part of $\ker \Gamma$. Given any $z$ satisfying these new equations, one can recover all $x$ satisfying $\k \circ x^A=z$, and hence all solutions to the original system. 

This construction, which extends to zeros over the nonzero complex numbers, appears in many studies of the zeros of polynomial equations (see e.g. the review in \cite{dickenstein-binomials}), and has proven useful for the study of fewnomial systems \cite{bihan:descartes:2,bihan:gale}. In the context of reaction networks, this idea appears in disguise for example in the deficiency one theorem of Feinberg \cite{Feinberg1995class}, and explicitly in the language of algebraic geometry in several works of Gatermann and coauthors e.g. \cite[Section 7]{gatermann-sparse} and \cite[Section 3.2]{gatermann-hopf}. The construction provided the basis to study parameter regions for multistationarity in \cite{conradi-flockerzi},  and to count positive equilibria in \cite{helmer:gale}. It plays a central part in various algorithms used in the study of mass action networks gathered in the library \cite{muradgithub}. Very recently,  the authors of \cite{regensburger:gale} presented a general framework that works for systems of generalised polynomial inequalities (where exponents are allowed to be real numbers), and that exploits possible direct product decompositions of the positive part of $\ker \Gamma$. In this work, we also consider such decompositions but restrict to polynomial systems arising from reaction networks.

\color{black}
The alternative systems may include fewer variables and equations and/or have lower degree, than the original systems while, crucially, capturing the degeneracy or nondegeneracy of equilibria. In some important cases, even when the network has many species and reactions, the alternative system becomes trivial, or reduces to a single polynomial equation in one real variable, allowing one to use techniques such as Descartes' rule of signs and its generalisations, or Sturm's theorem (see Chapter~1 of \cite{Sturmfels02solvingsystems}, for example) to make claims about the positive equilibrium set. Even when not so simple, the alternative system may be easier to analyse than the original mass action equations using tools of computational algebra such as B\'ezout's theorem, the Bernstein–Khovanskii–Kushnirenko (BKK) theorem \cite{bernstein}, or cylindrical algebraic decomposition \cite[Chapter 11]{BasuAlgorithms}. 

Particular classes of networks for which we derive bounds on positive nondegenerate equilibria include networks with relatively few reactions or sources compared to the number of species, networks with few or no conservation laws, and networks with few affine dependencies amongst their sources (\Cref{thmnnn1,,thmfewsources,,thm:n_plus_2} for example). Specialising to quadratic networks, we get sharper or deeper results (\Cref{thmquadnn2n,,thmquadnn2nA} for example). Moreover, the theory sometimes allows us to find examples confirming that these bounds are achieved by examining only a relatively small number of candidate networks. 

As also highlighted in \cite{regensburger:gale}, the construction of the alternative system often provides explicit parameterisations of the equilibrium set of a mass action network. The construction is useful in the analysis of bifurcations in mass action networks both because explicit parameterisations of equilibria simplify many computations, and because, as we will show, the alternative equations preserve degeneracy/nondegeneracy of solutions of the original mass action equations. In fact, the analyses of bifurcations in mass action networks developed in \cite{banajiborosnonlinearity,BBH2024smallbif} relied on special cases of the construction presented here. Moreover, as we show in examples, the approach here can be directly used to identify potential bifurcation sets for bifurcations such as fold and cusp bifurcations \cite{kuznetsov:2023} near which the cardinality of the equilibrium set changes.

Also touched on in examples, but not explored in depth, is the potential of these techniques for characterising the parameter set for multistationarity. In some cases, the results allow an explicit semialgebraic characterisation of the set of rate constants at which a given mass action network admits a given number of equilibria on some stoichiometric class. Deriving such a characterisation by direct examination of the mass action equations can be quite nontrivial even for relatively simple networks.

We present numerous examples to illustrate the power of the techniques we develop; some of these are drawn from the biological literature (see \Cref{ex:2site,,exBIOMD,,ex664}, and \Cref{remodebase}). While our examples involve small to medium-sized networks, we expect that {\em inheritance} results, which tell us how networks inherit dynamical behaviours and bifurcations from their subnetworks, can be used alongside the results here to make claims about larger networks. In particular, nondegenerate multistationarity, and transversally unfolded bifurcations, identified using the techniques described here, can be ``lifted'' to larger, more complex, networks, see, e.g., \cite{banaji:boros:hofbauer:2024b,craciun-entrapped,banajipanteaMPNE,feliu:intermediates,joshishiu,cappelletti:flow}. Similarly, several of our results give easy criteria that guarantee the absence of multistationarity. These results can be used to identify minimal mechanisms underlying multistationarity in larger networks.

\section{Preliminaries on chemical reaction networks}\label{sec:background}

We write $\mathbb{R}^n_{\geq 0}:=\{x \in \mathbb{R}^n\,:\,x_i\geq 0,\,\,i=1,\ldots,n\}$ and $\mathbb{R}^n_+:=\{x \in \mathbb{R}^n\,:\,x_i> 0,\,\,i=1,\ldots,n\}$ for the nonnegative and positive orthants in $\mathbb{R}^n$ respectively; similar notation is used for these orthants in $\mathbb{Z}^n$. Vectors in $\mathbb{R}^n_{\geq 0}$ are termed {\bf nonnegative} and vectors in $\mathbb{R}^n_+$ are termed {\bf positive}. For $v\in \R^n$, $D_v$ is the $n \times n$ diagonal matrix with diagonal given by $v$. $\mathbb{C}^*$ denotes the complex torus $\mathbb{C}\setminus \{0\}$.

For a matrix $A=(a_{ij}) \in \mathbb{Z}_{\geq 0}^{m \times n}$ and $x\in \R^n_+$, we let $x^A\in \R^{m}_+$ denote the vector whose $i$th entry is $\prod_{j=1}^n x_j^{a_{ij}}$. Observe that $A$ encodes the exponents in \emph{rows}. Throughout the text, ``$\bm{1}$'' will denote a column vector of ones whose length is inferred from the context.   The symbol ``$\circ$'' will denote the entrywise product, and we will freely apply various functions and operations entrywise; in particular, $\ln v=(\ln v_1,\dots,\ln v_n)$ for $v\in \R^n_+$ and $u/v=(u_1/v_1,\dots,u_n/v_n)$ for $u\in \mathbb{C}^n$, $v\in (\mathbb{C}^*)^n$. Given a differentiable map $F$ between subsets of Euclidean space, $J_F$ refers to the Jacobian matrix of $F$.

Throughout, we say a property holds \emph{generically} in $X \subseteq \mathbb{C}^n$, if it holds in a nonempty Zariski open subset of $X$, that is, in a nonempty set of the form $ X\setminus V$ with $V$ being the zero locus of a set of polynomials.

{\bf Projections and partitions.} Given $X \subseteq \mathbb{R}^m$ and $B \subseteq \{1, \ldots, m\}$, write $X_B$ for the orthogonal projection of $X$ onto the coordinate hyperplane $\{x \in \mathbb{R}^m\,:\, x_i=0 \mbox{ if } i \not \in B\}$. When $X$ is a singleton, say $X=\{v\}$, we write $v_B$ rather than $\{v\}_B$, and note that $v_B = \bm{1}_B \circ v \in \R^m$. Given a partition $\mP = \sqcup_{i=1}^p P_i$ of $\{1, \ldots, m\}$, we refer to the subsets $P_i$ as {\bf blocks} of $\mP$. We refer to a partition with $p$ blocks as a ``$p$-partition''. Given two  partitions $\mP, \widetilde{\mP}$ of $\{1, \ldots, m\}$, we write $\mP \leq \widetilde{\mP}$ to mean that $\tilde{\mP}$ is a refinement of $\mP$. 

Given $X, Y_1, \ldots, Y_p \subseteq \mathbb{R}^m$, we abuse notation slightly and write $X = Y_{1} \oplus \cdots \oplus Y_{p}$ to mean that each element of $X$ can be written uniquely as a sum $y_1 + \cdots + y_p$ where $y_i \in Y_i$.  We say that $X \subseteq \mathbb{R}^m$ {\bf factors over $\mP$} if $X = X_{P_1} \oplus \cdots \oplus X_{P_p}$.
It is easily seen that if $X$ factors over two partitions, say $\mP$ and $\widetilde{\mP}$, then it factors over their common refinement.  
Assuming that $X \subseteq \R^m$ is nonempty and nonzero, we refer to the (unique) finest partition over which $X$ factors as the \textbf{finest partition} associated with $X$. We may also observe that if $X$ is a relatively open subset of a linear subspace $\mS \subseteq \R^m$, then if $X$ factors over a partition $\mP$, $\mS$ must also factor over $\mP$.   
Writing $\bm{1}_\mP$ for the $m \times p$ matrix whose $i$th column is $\bm{1}_{P_i}$, we will often use the easy fact that a linear subspace $\mS \subseteq \R^m$ factors over $\mP$ if and only if $(\bm{1}_\mP \lambda)\circ x \in \mS$ for all $x \in \mS, \lambda \in \R^p$. Consequently, if $\mS$ includes a vector none of whose components are zero, it must have dimension at least $p$.  Another immediate consequence is that if $\mS$ factors over $\mP$, then so does its orthogonal complement $\mS^\perp$.

 \smallskip
{\bf Basics on reaction networks. } To avoid extensive repetition, basic notation and terminology on reaction networks are presented only briefly, and the reader is referred to \cite{banajiborosnonlinearity,BBH2024smallbif} for more detail. These concepts will be illustrated in examples throughout the forthcoming sections.

A (chemical) {\bf complex} is a formal linear combination of (chemical) species with nonnegative integer coefficients termed {\bf stoichiometries}. The {\bf molecularity} of a complex is the sum of its coefficients. A (chemical) {\bf reaction} is the conversion of one complex, here termed the {\bf source}, into another distinct complex, here termed the {\bf product}. Mathematically, a reaction is simply an ordered pair of distinct complexes. A {\bf (chemical) reaction network} ({\bf CRN} for short) is a collection of reactions, and when the reactions are assumed to have mass action kinetics we refer to the CRN as a {\bf mass action network} (also sometimes referred to as a \emph{mass action system} in the literature).  Henceforth, whenever we refer to a ``reaction network'', or ``network'', without further qualification, this will mean a mass action network.

Having made some choices of orderings on species and reactions, a mass action network with $n$ species and $m$ reactions can be characterised by three objects:
\begin{itemize}
\item The {\bf stoichiometric matrix} $\Gamma \in \mathbb{Z}^{n \times m}$ whose $(i,j)$th entry is the net production of species $i$ in reaction $j$. 
\item The {\bf exponent matrix} $A \in \mathbb{Z}_{\geq 0}^{m \times n}$ whose $(i,j)$th entry is the stoichiometry of species $j$ in the source of the $i$th reaction. 
\item A vector of {\bf rate constants} $\k \in \mathbb{R}^m_+$.
\end{itemize}

The associated mass action ODE system can be written briefly as
\begin{equation}
\label{eqODE}
\dot x = g_\k(x):=\Gamma(\k \circ x^A)
\end{equation}
where $x \in \mathbb{R}^n_{\geq 0}$ is the vector of species concentrations. A standard computation shows that the Jacobian matrix of the right hand side of \eqref{eqODE} evaluated at $x\in \mathbb{R}^n_+$ is
\begin{equation}\label{eq:jac}
    J_{g_\k}(x) =  \Gamma\,  D_{\k\circ x^A}\,  A \,  D_{1/x}\, . 
\end{equation}

 A CRN is {\bf quadratic} if all its sources have molecularity two or less; equivalently, each polynomial on the right hand side of \eqref{eqODE} has degree two or less. It is {\bf bimolecular} if all sources {\em and products} have molecularity two or less.

The {\bf rank} $r$ of a CRN is defined as 
\[ r := \rank \Gamma \quad \leq \min\{n,m\}\, . \]  
If $r=n$, we say the network has {\bf full rank}. We refer to a CRN with $n$ species, $m$ reactions and rank $r$ as an {\bf $\boldsymbol{(n,m,r)}$ network}. 

The {\bf source rank} $s$ of a CRN   is the dimension of the affine hull of its source complexes, considered as points in $\mathbb{R}^n$. Equivalently, 
\[ s := \rank\, [A\,|\,\bm{1}]-1  \quad \leq \min\{n,m-1\}\,.\]

The intersections of cosets of $\mathrm{im}\,\Gamma$ with $\mathbb{R}^n_{\geq 0}$ (resp., $\mathbb{R}^n_+$) are termed {\bf stoichiometric classes} (resp., {\bf positive stoichiometric classes}) of the CRN. Positive stoichiometric classes are easily seen to be locally invariant under \eqref{eqODE}. A stoichiometric class is defined by  equations  
\begin{equation}\label{eq:Z1}
Z x -K=0 \, , \qquad  x\in \R^n_{\geq 0}\, , 
\end{equation}
where $K\in \R^{n-r}$ is a vector of real constants, called  {\bf total amounts}, and $Z\in \R^{(n-r) \times n}$ is a matrix whose rows form a basis of $\ker\Gamma^\top$ called a \textbf{matrix of conservation laws}. When referring to total amounts, we implicitly assume a choice of $Z$ has been made.

We refer to $\mC := \ker\Gamma \cap \mathbb{R}^m_+$ as the {\bf positive flux cone} of the CRN \cite{Clarke:1980tz}. A CRN is termed {\bf dynamically trivial} if $\mC=\emptyset$ and {\bf dynamically nontrivial} otherwise. Dynamically trivial CRNs admit no positive equilibria (and, in fact, no omega-limit sets intersecting the positive orthant). On the other hand, dynamically nontrivial mass action networks admit positive equilibria for at least some choices of rate constants \cite{banajiCRNcount}. Dynamically nontrivial CRNs are also sometimes called ``consistent''   (e.g., \cite{JoshiShiu2017}).

If $\mC\neq \emptyset$, then the closure $\overline{\mC}$ of $\mC$, namely $\ker\Gamma \cap \mathbb{R}^m_{\geq 0}$, is a closed, pointed, convex, polyhedral cone, satisfying $\mathrm{dim}\,\mC = \mathrm{dim}\,\overline{\mC} = m-r$. If $m-r \leq 2$, then $\overline{\mC}$ is simplicial, i.e., it has exactly $m-r$ extremal vectors (unique up to scaling); however, if $m-r \geq 3$, then $\overline{\mC}$ need not be simplicial. 

\smallskip
{\bf Positive, nondegenerate equilibria.} For fixed $\k \in \mathbb{R}^m_+$, we consider the set of positive equilibria of \eqref{eqODE}, namely,
\[\Ek := \{x \in \mathbb{R}^n_+ : \Gamma(\k \circ x^A) = 0 \}. \]
 It is clear that, for any fixed $\k \in \mathbb{R}^m_+$, $x \in \Ek$ if and only if
\begin{equation}\label{eqbasic}
\k \circ x^A \in \mC\,.
\end{equation}
We say that $x^*\in \mathcal{E}_\k$ is {\bf nondegenerate} if $J_{g_\k}(x^*)$ acts as a nonsingular transformation on $\mathrm{im}\,\Gamma$,  i.e., $\ker J_{g_\k}(x^*) \cap \mathrm{im}\,\Gamma = \{0\}$; and it is {\bf degenerate} otherwise \cite{banajipantea}. Equivalently, $x^*$ is nondegenerate if it is a nondegenerate equilibrium, in the usual sense, of the {\em restriction} of \eqref{eqODE} to the   stoichiometric class on which $x^*$ lies. Note that, by the implicit function theorem, a positive nondegenerate equilibrium is necessarily isolated on its stoichiometric class.

It is convenient also to define $x^*\in \mathcal{E}_\k$ to be {\bf strongly degenerate} if $\mathrm{rank}\,J_{g_\k}(x^*)<r$. Note that a strongly degenerate equilibrium is degenerate, but the converse is not necessarily true; however, for networks of full rank, ``strongly degenerate'' and ``degenerate'' are clearly equivalent.

A reaction network is termed {\bf nondegenerate} if it admits a positive nondegenerate equilibrium for some rate constants, and {\bf degenerate} otherwise. A network is {\bf strongly degenerate} if every positive equilibrium of the network is strongly degenerate. Dynamically trivial networks are included amongst the strongly degenerate (hence also the degenerate) networks.

From the general form \eqref{eq:jac} of the Jacobian matrix of a mass action network evaluated at a positive equilibrium, and using \eqref{eqbasic}, 
a mass action network is strongly degenerate if 
\begin{equation}\label{eq:strongdeg}
\mathrm{rank}(\Gamma\,  D_v\,  A) < r
\end{equation} 
for all $v\in \mC$; and is nondegenerate if and only if $\Gamma D_v A$ has an $r \times r$ principal minor which is not identically zero. The latter is easily inferred using the Cauchy-Binet formula (see e.g. \cite[Appendix A]{banajipantea}, or \cite[Prop. 5.3]{feliuwiufSIADS2013}). Equivalently, the network is nondegenerate if and only if the matrix 
\begin{equation}\label{eq:nondeg}
 \begin{pmatrix} \Gamma\, D_v\, A \, D_x \\ Z \end{pmatrix} \, 
\end{equation} 
has rank $n$ for some $x\in \R^n_+$ and $v\in \mC$,  with $Z$ being any matrix of conservation laws,
see e.g. \cite[Prop. 5.2]{feliuwiufSIADS2013}. As this condition holds for almost all values of $x,v$ if it holds for one, the condition can be easily verified by considering a random choice.
 
We recall the following results on dimension, nondegeneracy and emptiness of equilibrium sets from \cite[Thm 3.1, 3.4]{feliu:dimension}; see also \cite[Thm 3.7]{feliu:vertical}.

\begin{prop}
\label[proposition]{prop:nondeg}
Given a reaction network, let $\mathcal{Z} \subseteq \R^m_+$ be the set of rate constants $\k$ for which $\Ek\neq \emptyset$, and let $\mathcal{Z}_{sc} \subseteq \R^m_+ \times \R^{n-r}$ be the set of rate constants and total amounts $(\k,K)$ for which the corresponding stoichiometric class intersects $\Ek$. 
Then:
\begin{itemize}
 \item  The network is  strongly degenerate if and only if $\mathcal{Z}$ has empty Euclidean interior, equivalently $\Ek$ is generically empty. 
\item The network is degenerate if and only if $\mathcal{Z}_{sc}$ has empty Euclidean interior, equivalently for generic choices of $(\k,K)$, $\Ek$ does not intersect the stoichiometric class defined by $K$. 
 \end{itemize}
 
 Additionally:
 \begin{itemize}
  \item[(i)]  If the network is not strongly degenerate, then, for generic $\k$ in $\mathcal{Z}$, the dimension of $\Ek$ is $n-r$.

  \item[(ii)] If the network is nondegenerate, then (i) applies and also, for generic $(\k,K)$ in $\mathcal{Z}_{sc}$, the stoichiometric class defined by $K$ contains a finite number of positive equilibria in $\Ek$, and these are all nondegenerate. 
 \end{itemize}

\end{prop}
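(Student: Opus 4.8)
The plan is to exhibit $\mathcal{Z}$ and $\mathcal{Z}_{sc}$ as images of two explicit semialgebraic maps whose fibres are, respectively, the sets $\Ek$ and the sets of positive equilibria on individual stoichiometric classes, and then to read everything off from the ranks of the associated Jacobians together with standard facts from semialgebraic geometry. If $\mC=\emptyset$ the network is dynamically trivial, hence both strongly degenerate and degenerate, and $\mathcal{Z}$, $\mathcal{Z}_{sc}$, $\Ek$ are all empty, so every assertion holds trivially; so assume $\mC\neq\emptyset$, whence $\mC$ is a relatively open subset of the $(m-r)$-dimensional space $\ker\Gamma$, in particular a smooth semialgebraic manifold.

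Fix a matrix of conservation laws $Z$ and introduce the maps
\[ \Psi\colon \R^n_+\times\mC\to\R^m_+,\qquad \Psi(x,v)=v/x^A, \]
\[ G\colon \R^n_+\times\mC\to\R^m_+\times\R^{n-r},\qquad G(x,v)=(v/x^A,\,Zx), \]
which are semialgebraic because $A$ has nonnegative integer entries, so the defining relations are polynomial; their common domain is a smooth semialgebraic manifold of dimension $n+m-r$. By \eqref{eqbasic}, $\k\in\mathcal{Z}$ iff $\k=v/x^A$ for some $x\in\R^n_+$, $v\in\mC$, so $\mathcal{Z}=\im\Psi$; likewise $\mathcal{Z}_{sc}=\im G$, since a positive equilibrium on the stoichiometric class with total amounts $K$ is exactly an $x\in\R^n_+$ with $\k\circ x^A\in\mC$ and $Zx=K$. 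Moreover $x\mapsto(x,\k\circ x^A)$ gives a semialgebraic isomorphism of $\Ek$ onto the fibre $\Psi^{-1}(\k)$, and of the set of positive equilibria on the class defined by $K$ onto the fibre $G^{-1}(\k,K)$. Since $\mathcal{Z}$ and $\mathcal{Z}_{sc}$ are semialgebraic, each has nonempty Euclidean interior iff it is full-dimensional iff it is Zariski dense, so the ``empty interior'' and ``generically empty'' phrasings in the statement agree.

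The computational heart is the rank of $J_\Psi$ and $J_G$. Writing $\k=v/x^A$, a direct computation gives $\im J_\Psi(x,v)=D_{1/x^A}\bigl(D_v(\im A)+\ker\Gamma\bigr)$, so $J_\Psi$ attains its maximal rank $m$ at $(x,v)$ iff $D_v(\im A)+\ker\Gamma=\R^m$, which (apply $\Gamma$ and count dimensions) is equivalent to $\rank(\Gamma D_v A)=r$. By Sard's theorem, $\mathcal{Z}=\im\Psi$ is therefore full-dimensional iff $\rank(\Gamma D_v A)=r$ for some $v\in\mC$, i.e.\ iff the network is not strongly degenerate by \eqref{eq:strongdeg}; this proves the first bullet. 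The Jacobian $J_G$ is square, as $n+m-r=m+(n-r)$, and a similar computation identifies its kernel at $(x,v)$ with $D_x\ker(\Gamma D_v A)\cap\im\Gamma$ (the $\mC$-direction of a kernel vector being determined by its $x$-direction). Comparing with \eqref{eq:jac}, this kernel vanishes iff $\ker J_{g_\k}(x)\cap\im\Gamma=\{0\}$, i.e.\ iff $x$ is a nondegenerate equilibrium of $g_\k$, equivalently iff the matrix \eqref{eq:nondeg} has rank $n$ at $(x,v)$. Hence $\mathcal{Z}_{sc}=\im G$ is full-dimensional iff the network is nondegenerate; this proves the second bullet.

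For (i) and (ii): the loci $\{\rank J_\Psi<m\}$ and $\{\det J_G=0\}$ are, after clearing the positive denominators, zero sets of polynomials on the domain, so when the relevant nondegeneracy condition holds they are proper, hence of dimension $<n+m-r$, and so are their images in $\mathcal{Z}$, $\mathcal{Z}_{sc}$. For (i), if the network is not strongly degenerate then $\dim\mathcal{Z}=m$, so by the standard facts on fibres of semialgebraic maps (fibre dimension is upper semicontinuous with generic value $\dim(\text{domain})-\dim(\text{image})$) the generic fibre of $\Psi$ has dimension $(n+m-r)-m=n-r$; being a copy of $\Ek$, this gives $\dim\Ek=n-r$ for generic $\k\in\mathcal{Z}$. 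For (ii), if the network is nondegenerate then $\dim\mathcal{Z}_{sc}=n+m-r$, and for $(\k,K)$ outside the lower-dimensional image of $\{\det J_G=0\}$ every point of $G^{-1}(\k,K)$ is a point where $J_G$ is nonsingular; there $G$ is a local diffeomorphism, so $G^{-1}(\k,K)$ is discrete, hence finite (a discrete semialgebraic set is finite), and by the previous paragraph each of its points is a nondegenerate positive equilibrium on the class defined by $K$; with (i) this is (ii). I expect the main obstacle to be the linear-algebra bookkeeping in the third paragraph: pinning down $\im J_\Psi$ and $\ker J_G$ precisely and checking that maximal rank corresponds exactly to the failure of \eqref{eq:strongdeg}, respectively to \eqref{eq:nondeg}, rather than to a twisted variant; the semialgebraic inputs (Tarski--Seidenberg, Sard, generic fibre dimension, ``discrete semialgebraic $\Rightarrow$ finite'') are all standard and can be quoted.
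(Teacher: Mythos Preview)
The paper does not prove this proposition at all: it is quoted from \cite[Thm~3.1, 3.4]{feliu:dimension} (see also \cite[Thm~3.7]{feliu:vertical}), so there is no in-paper argument to compare against. Your proposal is a correct self-contained proof, and is very likely close in spirit to the arguments in those references, which also work with the incidence variety and semialgebraic/algebraic genericity.

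The linear-algebra bookkeeping you flag as the main risk does check out. Writing $J_\Psi(\delta x,\delta v)=D_{1/x^A}\bigl(\delta v-D_vA\,D_{1/x}\delta x\bigr)$ for $\delta v\in\ker\Gamma$, one gets $\im J_\Psi=D_{1/x^A}\bigl(\ker\Gamma+D_v\,\im A\bigr)$, and this equals $\R^m$ iff $\Gamma D_v\,\im A=\im\Gamma$, i.e.\ iff $\rank(\Gamma D_vA)=r$, exactly the negation of \eqref{eq:strongdeg}. For $G$, the kernel consists of $(\delta x,\delta v)$ with $\delta v=D_vA\,D_{1/x}\delta x\in\ker\Gamma$ and $Z\delta x=0$; projecting to $\delta x$ identifies it with $D_x\ker(\Gamma D_vA)\cap\im\Gamma=\ker J_{g_\k}(x)\cap\im\Gamma$, so $J_G$ is nonsingular precisely at nondegenerate equilibria, matching \eqref{eq:nondeg}. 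Two small points worth tightening when you write it out: for~(i) you need slightly more than upper semicontinuity of fibre dimension to get the \emph{equality} $\dim\Ek=n-r$ generically (Hardt trivialization or the standard stratification yielding $\dim X=\dim f(X)+\dim(\text{generic fibre})$ suffices); and for~(ii), to ensure the exceptional set in $\mathcal{Z}_{sc}$ is genuinely lower-dimensional, use that semialgebraic maps do not increase dimension, applied to the image of $\{\det J_G=0\}$.
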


We emphasise that \Cref{prop:nondeg} tells us that the existence of a positive nondegenerate  equilibrium for some choice of rate constants is the only requirement for the conclusions of (ii) to hold. A degenerate network that is not strongly degenerate will have $\mZ_{sc}$ with empty Euclidean interior while the interior of $\mZ$ will be nonempty. In particular (i) will hold but (ii) will not.

In \Cref{secdegen} we consider a class of (strongly) degenerate networks, which we will refer to as ``overdetermined'', for which the dimension of $\Ek$ is always larger than $n-r$ when nonempty. The latter might not hold for an arbitrary strongly degenerate network, see  \cite[Example 3.10]{feliu:dimension}.

\smallskip
{\bf Naive B\'ezout and BKK bounds.} For any fixed $\k$, the right hand side of \eqref{eqODE} is a vector of $n$ polynomials $g_{\k,1}(x),\dots,g_{\k,n}(x)$ in the variables $x=(x_1, \ldots, x_n)$.  Given any $r \times m$ matrix $\widehat{\Gamma}$ whose rows form a basis of the row-span of $\Gamma$, let $\hat{g}_{\k}(x) := \widehat{\Gamma}(\k \circ x^A)$. Then positive equilibria on the stoichiometric class with total amount $K$ are precisely the positive solutions to the square system
\begin{equation} \label{eqbasic0}
 \begin{aligned}
\hat{g}_{\k,j}(x)&= 0 \qquad && \quad j = 1, \ldots, r,\\ 
Z\, x - K  &= 0 \, , 
 \end{aligned}
\end{equation}
where $Z\in \R^{(n-r)\times n}$ is a matrix of conservation laws.
By B\'ezout's theorem, the product of degrees of  $\hat{g}_{\k,j}$ ($j = 1, \ldots, r$) provides an upper bound on the number of nondegenerate solutions to \eqref{eqbasic0} in $\C^n$, hence on the number of positive nondegenerate  equilibria of \eqref{eqODE} on any stoichiometric class. We refer to any such bound as a {\bf naive B\'ezout bound}. In general, the choice of $\widehat{\Gamma}$ is not unique, and so naive B\'ezout bounds are not in general unique. 

We may also consider the BKK theorem, which establishes that the mixed volume of the polynomials on the left hand side of \eqref{eqbasic0} provides an upper bound on the number of nondegenerate solutions in $(\mathbb{C}^*)^n$, hence of positive  nondegenerate equilibria on any stoichiometric class. This bound, here called a {\bf naive BKK bound}, is no larger than the naive Bézout bound for the same choice of rows of $\Gamma$.

If we know the sources and the rank of a network, but not its products, then we know all the monomials that might appear in \eqref{eqbasic0} but not their coefficients. In particular, if $d$ is the maximum degree of the monomials in $x^A$, then all naive B\'ezout bounds are clearly bounded above by $d^r$; we refer to $d^r$ as the {\bf naive B\'ezout source bound} of the network.
Similarly, by letting $P$ be the convex hull of all sources of the network, and $\Delta$ the standard $n$-dimensional simplex with vertices the origin and the standard vectors of $\R^n$, the BKK theorem   tells us that the number of positive nondegenerate  equilibria is at most the mixed volume of $r$ copies of $P$ and $n-r$ copies of $\Delta$. We refer to this bound as the {\bf naive BKK source bound} which, by the monotonic properties of mixed volumes, is not lower than any naive BKK bound. 

\smallskip
{\bf Fewnomial bounds on positive equilibria.} 
The previous are bounds on the number of nondegenerate solutions to \eqref{eqbasic0} in $(\mathbb{C}^*)^n$  or $\mathbb{C}^n$. We give now, for later use, some bounds on the number of nondegenerate {\em positive} equilibria based on the number of monomials of \eqref{eqbasic0}. The main technique behind these bounds is the rewriting of the system of equations using \emph{Gale duality}. We will employ these same principles throughout the remainder of this work. 

\begin{prop}\label[proposition]{prop:fewnomials}
For a system of $n\geq 2$ real polynomial equations in $n$ variables and with $n+k+1$ monomials, the following holds: 
\begin{itemize}
\item[(i)] If $k<0$, then the system has no positive solution for generic choices of the coefficients. 
 \item[(ii)] If $k=0$, then the system has either no positive solution, one positive  nondegenerate solution, or infinitely many positive solutions.
 \item[(iii)] If $k=1$, then the system has at most $n+1$ positive nondegenerate solutions.
 \end{itemize}
\end{prop}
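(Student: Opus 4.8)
My plan is to follow the Gale-duality philosophy recalled above: rewrite a system of $n$ polynomial equations supported on $N=n+k+1$ monomials $x^{a_1},\dots,x^{a_N}$ as a linear condition on the \emph{monomial vector} together with a binomial membership condition, and then count. Let $A\in\Z_{\geq 0}^{n\times N}$ have columns $a_1,\dots,a_N$, let $\mu\colon\R^n_+\to\R^N_+$ be $\mu(x)=(x^{a_1},\dots,x^{a_N})$, and write the system as $M\mu(x)=0$ with $M\in\R^{n\times N}$. A point $x\in\R^n_+$ solves the system exactly when $\mu(x)\in\ker M$, and since all coordinates of $\mu(x)$ are positive this means $\mu(x)\in\ker M\cap\R^N_+=:K$. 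The Jacobian of the left-hand side factors as $J=M J_\mu$ with $\rank J_\mu=\rank A$, so if $\rank M<n$ or $\rank A<n$ then $J$ is singular at every point and there are \emph{no} nondegenerate positive solutions; this is harmless for (ii) and (iii) and irrelevant for (i). Hence I may assume $\rank M=\rank A=n$, so that $\dim\ker M=\dim\Lambda=k+1$ for $\Lambda:=\ker A$, and $\mu$ is injective with image $\{y\in\R^N_+:y^\ell=1\text{ for all }\ell\in\Lambda\}$; positive solutions are then in bijection with the points of the relatively open convex cone $K$ (of dimension $k+1$) satisfying the $k+1$ binomial equations indexed by a basis of $\Lambda$.

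Part (i) is immediate: $N\le n$ forces a generic $M$ to have trivial kernel, so $\mu(x)\in\ker M$ is impossible. For (ii), $k=0$ gives $\dim\ker M=\dim\Lambda=1$: $K$ is empty (no solution) or an open ray $\R_+v$, and the single binomial equation restricted to the ray reads $s^{\sigma}v^\ell=1$ with $\sigma=\bm{1}^\top\ell$. If $\sigma\neq 0$ this determines $s>0$ uniquely, giving exactly one positive solution, which a short computation shows is nondegenerate (the kernel of $J$ at the solution $\mu(x)=sv$ is nontrivial only if $\bm{1}^\top\ell=\sigma=0$); if $\sigma=0$ the ray lies entirely on, or entirely off, the binomial hypersurface, giving infinitely many or no solutions. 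In the rank-deficient cases $\ker M$ is a flat of dimension $\ge 2$ and the single binomial equation restricted to it still leaves a free parameter, so a nonempty solution set contains a curve; together with the remark that a single polynomial with $n+1$ monomials has no isolated positive zero, this shows no anomalous finite count arises and the trichotomy in (ii) holds.

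For (iii), $k=1$: $K$ is empty (done) or the relative interior of a simplicial cone $\R_{\geq 0}v^{(1)}+\R_{\geq 0}v^{(2)}$ with $v^{(1)},v^{(2)}\in\R^N_{\geq 0}$. Parametrising its rays by $u(t)=v^{(1)}+tv^{(2)}$, $t\in(0,\infty)$ (with $u(t)\in\R^N_+$ throughout, since $K\neq\emptyset$) and writing $y=s\,u(t)$, the two binomial equations $y^{\ell^{(j)}}=1$ become $s^{\sigma_j}u(t)^{\ell^{(j)}}=1$ with $\sigma_j=\bm{1}^\top\ell^{(j)}$. If $\sigma_1=\sigma_2=0$ there is a free scaling parameter $s$, so every positive solution is non-isolated, hence degenerate, and we are done; otherwise eliminating $s$ reduces the problem to a \emph{single} equation $h(t):=\sum_i\ell_i\ln u_i(t)=0$ on $(0,\infty)$, where $\ell=\sigma_1\ell^{(2)}-\sigma_2\ell^{(1)}$ satisfies $\bm{1}^\top\ell=0$. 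Now $h'(t)=P(t)/Q(t)$ with $Q$ a product of at most $N=n+2$ linear factors that are positive on $(0,\infty)$ and $P$ a polynomial; the crucial point is that $\bm{1}^\top\ell=0$, together with bookkeeping of which entries of $v^{(2)}$ vanish, kills the top coefficient of $P$, forcing $\deg P\le n$. Hence $h'$ has at most $n$ zeros on $(0,\infty)$, so by Rolle's theorem $h$ has at most $n+1$ zeros there, giving at most $n+1$ positive solutions and a fortiori at most $n+1$ nondegenerate ones.

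The routine parts are the rank reductions and the linear algebra identifying the image of $\mu$; the real obstacle is the degree estimate $\deg P\le n$ in (iii), and it is precisely here that $k=1$ (as opposed to $k\ge 2$) is used: the Gale-dual object is then a function of a single variable, so one application of Rolle's theorem suffices, and the cancellation coming from $\bm{1}^\top\ell=0$ is exactly what improves the naive bound $n+2$ to the stated $n+1$. A secondary point requiring care is ensuring that degenerate or rank-deficient configurations never produce an exotic \emph{finite} count (such as exactly two positive solutions in case (ii)); this is dealt with by the "no isolated positive zero" observations indicated above.
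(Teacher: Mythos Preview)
The paper does not prove this proposition; it simply cites the literature (BKK for (i), Bihan--Santos--Spaenlehauer for (ii), Bihan's circuit paper for (iii)). Your argument via Gale duality is precisely the approach taken in those references, and the crucial step in (iii)---the cancellation of the top coefficient of $P$ coming from $\bm{1}^\top\ell=0$, which drops $\deg P$ from $n+1$ to $n$ and hence gives $n+1$ rather than $n+2$ via Rolle---is exactly Bihan's argument. So you have correctly reconstructed the cited proofs.

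Two minor points. First, in (iii) your degree count works cleanly once you split into cases: if all $u_i$ are nonconstant then $\sum_{i\in I}\ell_i=\sum_i\ell_i=0$ kills the leading coefficient; if some $u_i$ is constant then $|I|\le n+1$ already gives $\deg P\le n$ without any cancellation. You allude to this with ``bookkeeping of which entries of $v^{(2)}$ vanish'' but it is worth making explicit. Second, your handling of the rank-deficient cases in (ii) is slightly muddled: the aside about ``a single polynomial with $n+1$ monomials'' is neither needed nor obviously relevant. The clean argument is that if $\rank A<n$ the fibres of $\mu$ are positive-dimensional (so any solution lies on a curve of solutions), while if $\rank A=n$ but $\rank M<n$ the cone $K$ has dimension $\ge 2$ and for each ray direction the binomial equation picks out at most one point (when $\sigma\neq 0$) or the whole ray (when $\sigma=0$), again giving either no solutions or a positive-dimensional family. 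Either way no finite count other than $0$ or $1$ arises.
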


The first statement is a consequence of the BKK bound, while a proof of the second statement can be found e.g. in \cite[Prop. 3.3]{Bihan:positive_decorated}. Bihan gave the case $k=1$ in \cite[Prop. 2.1]{Bihan:circuit}, and sharper bounds  in the spirit of the univariate Descartes' rule of signs and which take into account the exponents and the coefficient matrix, are given in \cite{bihan:descartes:1,bihan:descartes:2}. Bounds for $k\geq 2$ are given in \cite{bihan:gale}, but these are too large to be of interest for the purposes of this work.

\section{Existence and number of positive equilibria: general theory}
\label{secmain}

In this section, we follow the principles of Gale duality (see \cite{bihan-sottile-complete-gale,sottile:book}) to provide a parameterisation of the set of positive equilibria of a mass action network. We use the language and exposition from \cite[Section 2.4]{BBH2024smallbif}, but we note closely related ideas have been exploited in several previous works about reaction networks \cite{helmer:gale,kaihnsa:connectivity}
 and in the recent framework for systems of polynomial inequalities \cite{regensburger:gale}. We consider decompositions of the flux cone as a direct product of cones as in \cite{bryan:partition2,bryan:partition,feliu:toric}. In \cite{regensburger:gale}, partitions are also considered in a more general set-up. 
 
The construction of the parameterisation in equation \eqref{eq:Ek_expression} in \Cref{thminj}(i) can be derived from \cite[Theorem 5]{regensburger:gale}, and also from \cite{bihan-sottile-complete-gale} when considering the trivial partition. However, we present the construction in full detail here, in order to keep the paper self-contained and adapted to the language of reaction networks. In \cref{rem:gale}, we elaborate briefly on the relationship to \cite[Theorem 5]{regensburger:gale}.

Throughout the section, we consider an arbitrary dynamically nontrivial $(n,m,r)$ mass action network $\mN$ with stoichiometric matrix $\Gamma$, exponent matrix $A$ and positive flux cone $\mC=\ker \Gamma\cap \R^m_+$.

 \subsection{Partition subnetworks}
Let us assume that $\mC$, hence, by our earlier remarks on partitions, $\ker\Gamma$, factor over some $p$-partition $\mP = \{P_1, \ldots, P_p\}$ of $\{1, \ldots, m\}$. Note that this assumption is without loss of generality as we may always choose $\mP$ to be the trivial partition, in which case $p=1$. In this case, we say that $\mN$ {\bf factors over $\bm{\mP}$}. (Whenever this terminology is used, it will always be implicitly assumed that the network in question is dynamically nontrivial.) As $\dim \ker \Gamma = m-r$ and $\mC\neq \emptyset$, we have $m-r\geq p$. Each subnetwork $\mN_i$ having the same species as $\mN$ but only the reactions indexed by $P_i$ is called a {\bf partition subnetwork} of $\mN$. 

The partition $\mP$ induces a partition on the set of reactions, hence on the sets of rate constants, of columns of $\Gamma$, and of rows of $A$. However, the partition need not induce a partition on the set of sources, as the same source may occur in reactions in distinct blocks of $\mP$. 

Set $m_i := |P_i|$, and write $\Gamma^{(i)}\in \Z^{n\times m_i}$ for the submatrix of $\Gamma$ with columns indexed by $P_i$; $A^{(i)}\in \Z_{\geq 0}^{m_i\times n}$ for the submatrix of $A$ consisting of rows indexed by $P_i$; and $\k^{(i)}\in \R^{m_i}_+$ for the subvector of $\k$ with entries indexed by $P_i$.
 An important and easy observation is that if $\ker\Gamma$ factors over $\mP$, then 
\begin{equation}\label{eq:Gammasum}
\mathrm{im}\,\Gamma = \mathrm{im}\,\Gamma^{(1)} \oplus  \cdots  \oplus \mathrm{im}\,\Gamma^{(p)}\,.
\end{equation}
Consequently the rank of a network is the sum of the ranks of its partition subnetworks. Equilibria of the network are solutions of $0 = \Gamma(\k\circ x^A) = \sum_{i=1}^p\Gamma^{(i)}(\k^{(i)}\circ x^{A^{(i)}})$. From \eqref{eq:Gammasum}, this is equivalent to
\[
\Gamma^{(i)}(\k^{(i)}\circ x^{A^{(i)}}) = 0, \,\qquad i = 1, \ldots, p\,,
\]
namely, equilibria of $\mN$ are  precisely the common equilibria of the partition subnetworks $\mN_i$. If, for example, some $\mN_i$ has no positive equilibria, the same holds for $\mN$. We will later make use of the following result.

\begin{lem}
  \label[lemma]{lem:degensubnet} 
 Assume that a dynamically nontrivial $(n,m,r)$ network $\mN$ factors over a $p$-partition $\mP$ of $\{1, \ldots, m\}$. Fix $\k \in \R^m_+$, and suppose that some positive equilibrium $x$ of $\mN$ is a strongly degenerate equilibrium of the partition subnetwork $\mN_j$ for some $j$. 
Then $x$ is also strongly degenerate as an equilibrium of $\mN$. Consequently, if, for some $\k$ and some $j$, all equilibria of $\mN_j$ are strongly degenerate, then this holds for $\mN$. If, for some $j$, $\mN_j$ is strongly degenerate, then so is $\mN$.
\end{lem}
\vspace{-\baselineskip}
\begin{proof}
  With the notation above we write $\Gamma = [\Gamma^{(1)}\,|\,\cdots\,|\,\Gamma^{(p)}]$. Fix $\k \in \R^m_+$ and a positive equilibrium $x$ of $\mN$. Recall that $x$ is also an equilibrium of $\mN_j$ for each $j$. Let $r_i=\rank \Gamma^{(i)}$ so that, by \eqref{eq:Gammasum}, we have $r=r_1+\dots+r_p$.  
  Using \eqref{eq:jac} we obtain 
\[
J := J_{g_\k}(x) = \Gamma \, D_{\k\circ x^A} \, A\,D_{1/x} = \sum_{i=1}^p J_i\,, \quad \mbox{where} \quad J_i:=\Gamma^{(i)}\,D_{\k^{(i)}\circ x^{A^{(i)}}} A^{(i)} D_{1/x}\,.
\]
But $\rank J_i\leq \rank \Gamma^{(i)} = r_i$ for all $i$, and if $x$ is a strongly degenerate equilibrium of $\mN_j$, then $\mathrm{rank}\, J_j < r_j$. As $\mathrm{im}\,J = \mathrm{im}\,J_1 \oplus \cdots \oplus \mathrm{im}\,J_p$, and hence $\mathrm{rank}\,J = \sum_i\mathrm{rank}\,J_i$, it follows that $\mathrm{rank}\,J < r$, and so $x$ is strongly degenerate as an equilibrium of $\mN$. The remaining claims now follow immediately.
\end{proof}

\subsection{The main equation for positive equilibria}\label{subsec:1}
We present an important equation satisfied by positive equilibria of a mass action network, a special case of which appeared as Equation~2 in \cite{BBH2024smallbif}. We assume that the network factors over a $p$-partition $\mP$. 
 
We begin by parameterising the positive flux cone $\mC$ in a natural way. Write $\mC_i:=\mC_{P_i}$ for tidiness, so that $\mC = \mC_1 \oplus \cdots \oplus \mC_{p}$. For each $i = 1, \ldots, p$, choose a flat cross section of $\mC_i$, say $\widehat{\mC}_i$, and write down an affine injective map 
\[h^{(i)} \colon Y_i \longrightarrow \widehat{\mC_i}\]
with domain $Y_i \subseteq \mathbb{R}^{\dim \mC_i-1}$ an open polytope. We can now write a general point in $\mC$ uniquely as $\sum_{i=1}^p\lambda_i\,h^{(i)}(\alpha^{(i)})$ with $(\lambda_i, \alpha^{(i)}) \in \mathbb{R}_+ \times Y_i$. Writing 
\[ Y := Y_1 \times \cdots \times Y_p \subseteq \mathbb{R}^{m-r-p},\quad  \quad \widehat{\mC}:= \widehat{\mC}_1 \oplus \cdots \oplus \widehat{\mC}_p \subseteq \R^m_+\, , \]
and 
$\alpha := (\alpha^{(1)}, \ldots, \alpha^{(p)}) = (\alpha_1, \ldots, \alpha_{m-r-p})$ for an element of $Y$,
we define the affine injective map
\begin{equation}\label{eq:h}
h:= h^{(1)} \oplus \cdots \oplus h^{(p)} \colon \quad Y \longrightarrow  \  \widehat{\mC}  \, \qquad 
(\alpha^{(1)}, \ldots, \alpha^{(p)}) \mapsto  \sum_{i=1}^p\,h^{(i)}(\alpha^{(i)}) \,. 
\end{equation}
We may naturally extend the domain of $h$ to all of $\R^{m-r-p}$ if needed. In the special case where $\mC_i$ is one dimensional, and so $\widehat{\mC_i}$ consists of a single point, say $v^{(i)}$, we may simply write $h^{(i)}(\alpha^{(i)}) = v^{(i)}$.

Write $\lambda := (\lambda_1, \ldots, \lambda_{p})^\top$. For given $\k \in \mathbb{R}^m_{+}$, \eqref{eqbasic} is satisfied by some $x \in \mathbb{R}^n_+$ if and only if there exist $(\lambda, \alpha) \in \mathbb{R}^p_+ \times Y$ such that
\begin{equation}
\label{eqbasicfull} 
\k \circ x^A = \sum_{i=1}^p\lambda_i\,h^{(i)}(\alpha^{(i)}) = (\bm{1}_{\mP}\lambda)\circ h(\alpha)\,.
\end{equation}

\begin{rem}
\label[remark]{rem:Jacla}
Using \eqref{eq:jac} and \eqref{eqbasicfull}, the Jacobian matrix of $g_\k$ at $x \in \Ek$, can be written
\begin{equation}
\label{eq:jac1}
J_{g_\k}(x):= \Gamma \,D_{(\bm{1}_\mP \lambda) \circ h(\alpha)}\,A\,D_{1/x}\,,
\end{equation}
where $\lambda$ and $\alpha$ are (uniquely) determined by \eqref{eqbasicfull}. 
\end{rem}
 We can take logarithms and rearrange \eqref{eqbasicfull} to get
\begin{equation}
\label{eqbasiclog1}
\ln\k + [A\,|\,\bm{1}_\mP]\,\left(\begin{array}{r}\ln x\\-\ln\lambda\end{array}\right) = \ln h(\alpha)\,.
\end{equation}
Equation \eqref{eqbasiclog1} is the crucial equation satisfied by positive equilibria of a mass action network. The special case of Equation \eqref{eqbasiclog1} where $\mP$ is trivial, namely $\bm{1}_\mP = \bm{1}$, appeared in \cite{BBH2024smallbif}. 
A similar equation appeared in \cite[Eq. (12a)-(12l)]{conradi-flockerzi}, where the authors consider a parameterisation of the positive flux cone, instead of restricting to a section.

\begin{rem}[The geometry of Equation~\eqref{eqbasiclog1}]
 As $(\k, x,\lambda)$ varies in $\R^m_+ \times \R^n_+ \times \R^p_+$, the left hand side of \eqref{eqbasiclog1} defines a family of cosets of the hyperplane $\mH_{\mP}:= \mathrm{im}\,[A\,|\,\bm{1}_\mP]$ in $\mathbb{R}^m$. As $\ln\,h(\cdot)$ is a smooth, injective, immersion of $Y$ into $\R^m$, the right hand side of \eqref{eqbasiclog1} defines an $(m-r-p)$-dimensional manifold
\begin{equation}\label{eq:Y}
\mathcal{Y} := \big\{\ln h(\alpha)\,:\,\alpha \in Y\big\}\, \subseteq \mathbb{R}^m\,.
\end{equation}
For any fixed $\k \in \mathbb{R}^m_+$, points of intersection of $\mathcal{Y}$ and $\ln\k + \mH_{\mP}$ correspond to solutions of \eqref{eqbasiclog1} and hence to positive equilibria of \eqref{eqODE}. This correspondence is one-to-one if and only if $\ker\,[A\,|\,\bm{1}_\mP]$ is trivial.
\end{rem}

\begin{rem}[Decompositions of Equation~\eqref{eqbasiclog1}]
\label[remark]{remsubsyst}
Observe that solutions to \eqref{eqbasiclog1} are precisely the common solutions to the corresponding set of equations for each block, namely, for $i = 1, \ldots, p$,
\begin{equation}
\label{eqbasiclog1blocks}
\ln\k^{(i)} + [A^{(i)}\,|\,\bm{1}]\,\left(\begin{array}{r}\ln x\\-\ln\lambda_i\end{array}\right) = \ln\,h_i(\alpha^{(i)})\,.  
\end{equation}
where $h_i(\alpha^{(i)})$ refers to the subvector of $h^{(i)}(\alpha^{(i)})$ indexed by $P_i$. 
More precisely, $(x, \lambda, \alpha, \k)$ together solve \eqref{eqbasiclog1}, if and only if $(x, \lambda_i, \alpha^{(i)}, \k^{(i)})$ together solve \eqref{eqbasiclog1blocks} for each $i = 1, \ldots, p$. Of course, the $i$th equation is precisely the equation we would derive if our goal was to find positive equilibria of the $i$th partition subnetwork (using the trivial partition). 
\end{rem}

The next lemma tells us that associated with the parameterisation $h$ are natural decompositions of $\ker \Gamma$. 
Recall that $h\colon Y \subseteq \R^{m-r-p} \to \R^{m}_+$ is an affine injective map with constant Jacobian matrix, say $J_h$. Define $\mH$ to be the unique $(m-r-p)$-dimensional linear subspace of $\R^m$ parallel to $\mathrm{im}\,h$, i.e., $\mH:=\mathrm{im}\,J_h$ when $m>r+p$ and $\mH = \{0\}$ when $m=r+p$. Given $c \in \widehat{\mC}$, define $\mS_c :=  \{c \circ (\bm{1}_\mP\lambda): \lambda \in \R^p\}$  and $\mS_c^+ :=  \{c \circ (\bm{1}_\mP\lambda): \lambda \in \R^p_+\}$.
\begin{lem}
\label[lemma]{lem:kerG}
Consider a dynamically nontrivial $(n,m,r)$ network  which factors over a $p$-partition $\mP$. Let $\overline{\mC}$ denote the closure of $\mC$. 
\begin{itemize}
\item[(i)] $\mH \subseteq \ker \Gamma$ and $\mH$ factors over $\mP$.

\item[(ii)] $\mH$ has trivial intersection with $\overline{\mC}$ (hence also with $-\overline{\mC}$).

\item[(iii)] Given $c \in \widehat{\mC}$, $\mH$ and $\mS_c$ have trivial intersection.

\item[(iv)] Given $c_1, c_2 \in \widehat{\mC}$ with $c_1 \neq c_2$, $\mS_{c_1}^+$ and $\mS_{c_2}^+$ have empty intersection. 

\item[(v)] For any $c \in \widehat{\mC}$, we have $\ker\Gamma = \mH \oplus \mS_c$, so that, for any fixed $\alpha \in Y$,
\begin{equation}
  \label{kerGdecomp}
\ker\Gamma = \{J_hz + (\bm{1}_\mP\lambda)\circ h(\alpha): (z,\lambda) \in \R^{m-r-p} \times \R^p\}\,.
\end{equation}

\end{itemize}

\end{lem}

\vspace{-\baselineskip}\begin{proof}
In the case when $m=r+p$, $\widehat{\mC}$ is a singleton, $\mH = \{0\}$,  and all the claims become immediate; so assume $m>r+p$.

(i) That $\mH \subseteq \ker \Gamma$ is obvious. To see that $\mH$ factors over $\mP$, we observe that $J_h$ has block diagonal structure, namely $\Pi_{P_i} J_h = J_h \Pi_i$, where for each $i$, $\Pi_{P_i}$ is the natural projection of $\R^m$ onto $\R^m_{P_i}$, and $\Pi_i$ is the natural projection of $\mathrm{span}\,Y$ onto $\mathrm{span}\,Y_i$, regarded as a subspace of $\mathrm{span}\,Y$. 

(ii) Suppose that $J_hz \in \overline{\mC}\backslash\{0\}$ for some $z \in \R^{m-r-p}$. Then for any $\alpha \in Y$, extending the domain of $h$ if necessary, $h(\alpha+ tz) = h(\alpha) + tJ_hz \in \mC$, hence $\widehat{\mC}$,
for all $t \geq 0$, contradicting that $\widehat{\mC}$ is bounded by construction. This confirms that $\mH$ has trivial intersection with $\overline{\mC}$.

(iii)  If the claim is not true, then $c \circ (\bm{1}_\mP\lambda) = J_hz$ for some $c \in \widehat{\mC}$, $\lambda \neq 0$ and $z \neq 0$; choose $i$ such that $\lambda_i \neq 0$, and assume, w.l.o.g., that $\lambda_i>0$. We then have, using the proof of (i),
 \[
J_h(\Pi_i z) = \Pi_{P_i}(J_h z)= \Pi_{P_i}\left((\bm{1}_\mP\lambda)\circ c\right) = \lambda_i \, c_{P_i}  \in \overline{\mC}\backslash\{0\}\,,
\]
contradicting (ii).

(iv) Suppose that $c_1 \circ (\bm{1}_\mP\lambda) = c_2 \circ (\bm{1}_\mP\lambda')$ for $\lambda, \lambda' \in \R^p_+$, equivalently,
\[
(c_1-c_2) \circ (\bm{1}_\mP\lambda) = c_2 \circ (\bm{1}_\mP(\lambda'-\lambda))\,.
\]
Using (i), the left hand side of this equation is in $\mH$, while the right hand side is in $\mS_{c_2}$. So, by (iii) both must be zero, contradicting the fact that $c_1 \neq c_2$ and $\bm{1}_{\mP}\lambda \in \R^p_+$. 

(v) Fix $c \in \widehat{\mC}$. By (i), $\mH\subseteq \ker\Gamma$ is an $(m-r-p)$-dimensional linear subspace, while $\mS_c\subseteq \ker\Gamma$ is a $p$-dimensional linear subspace. Moreover, $\mH$ and $\mS_c$ have trivial intersection from (iii). 
\end{proof}
\color{black}

\subsection{Four matrices associated with a general mass action network}\label{subsec:matrices}
Consider a dynamically nontrivial $(n,m,r)$ network that factors over a $p$-partition $\mP$ of $\{1,\dots,m\}$. Define
\[ s_{\mP}:=\mathrm{rank}\,[A\,|\,\bm{1}_\mP] - p \ \geq 0\, .
\] 
In the case $p=1$, $s_{\mP}$ is the source rank of the network. Clearly $s_{\mP} \leq \min\{n,m-p\}$. 

We will be interested in the following matrices $W$, $Q$, $Z$ and $G$ which are not, in general, uniquely defined. Moreover, $W$, $Q$ and $G$ depend on the partition $\mP$; when we wish to emphasise this, we may write $W_\mP$, and so forth. 
\begin{itemize}
\item An $(m-p-s_{\mP})\times m$ full rank matrix $W$, termed a {\bf solvability matrix} for reasons to become clear below, whose rows are a basis of $\ker\,[A\,|\,\bm{1}_\mP]^\top$, i.e., such that
\[W\, [A\,|\,\bm{1}_\mP] = 0\, .\]

\smallskip
\item An $(n+p) \times (n-s_{\mP})$ full rank matrix $Q$ whose columns are a basis of $\ker\,[A\,|\,\bm{1}_\mP]$, i.e., 
\[ [A\,|\,\bm{1}_\mP]\, Q = 0\, . \] 
Also define $\wQ$ to be the $n \times (n-s_{\mP})$ matrix consisting of the first $n$ rows of $Q$, which we will refer to as a {\bf toricity matrix} for reasons to become clear below.  It is easily confirmed that $\ker \wQ = \{0\}$, hence $\rank \wQ = \rank Q = n-s_\mP$.

\smallskip
\item An $(n+p) \times m$ matrix $G$ which is a \textbf{generalised inverse} of $[A\,|\,\bm{1}_\mP]$ \cite{James1978TheGI}, i.e., such that 
\[ [A\,|\,\bm{1}_\mP]\,G\,[A\,|\,\bm{1}_\mP] = [A\,|\,\bm{1}_\mP]\, . \] 
Also define $\wG$ to be the $n \times m$ matrix consisting of the first $n$ rows of $G$. 

\smallskip
\item Recall the $(n-r) \times n$ full rank matrix $Z$  of  \textbf{conservation laws}, whose  rows form a basis of $\ker \Gamma^\top$, i.e., such that \[Z\, \Gamma = 0 \, .\] 

\end{itemize}

\begin{rem}[Empty matrices]
  \label[remark]{rem:emptymat}
Some or all of $W$, $Q$ and $Z$ may be empty: 
\begin{itemize}
 \item $Z$ is empty if and only if the network has full rank, i.e. $r=n$. We explore this special case in some detail in the sections to follow. 
 \item $W$ is empty if and only if $[A\,|\,\bm{1}_\mP]$ is a surjective matrix, i.e. $m=\mathrm{rank}\,[A\,|\,\bm{1}_\mP] = s_\mP+p$. In this case we will say that the sources are \textbf{$\mP$-independent}.  In the case $p=1$, $\mP$-independence is precisely affine independence of the sources. 
 \item $Q$ is empty if and only if $[A\,|\,\bm{1}_\mP]$ is an injective matrix, i.e. $n=s_{\mP}$. In this case, $G$ is a left inverse of $[A\,|\,\bm{1}_\mP]$. 
 \item If $W$ and $Z$ are empty, then so is $Q$: $m=s_\mP +p$ and  $r=n$ together give $s_\mP = m-p \geq r =n$, hence $s_\mP = n$.
 \item If $W$ and $Q$ are empty, then $m=n+p$, consequently $[A\,|\,\bm{1}_\mP]$ is an invertible square matrix, and $G$ is its ordinary inverse.
 \end{itemize}
 \end{rem}

In what follows we will assume that $W, Q$ and $Z$ have been chosen to be integer matrices whenever they are nonempty. 
On the other hand, $G$ has rational entries, and it may or may not be possible to choose it to be an integer matrix. 
 
\begin{rem}\label[remark]{rem:trivial}
Only $Z$ requires knowledge of reaction vectors of the network. The matrices $W$, $Q$ and $G$ are associated only with the multiset of sources of the network and the partition $\mP$. This observation will allow us often to make simultaneous claims for all networks with a given multiset of sources. 
\end{rem}

\begin{rem}[Geometrical interpretation of the toricity matrix $\wQ$]
Geometrically, each column of $\wQ$ is orthogonal to the affine hull of the sources of the reactions in each block of $\mP$. In particular, the columns of $\wQ$ form a basis of the intersection of the orthogonal complements of the directions of each of these affine hulls. Note that the latter may be strictly larger than the orthogonal complement of the direction of the affine hull of the sources.
\end{rem}

\begin{ex}[When $p=1$, $W$ encodes affine dependencies amongst the sources]
Consider the trivial partition and a $2$-species and $5$-reaction network with sources, $[A\,|\,\bm{1}]$, and choice of $W$ as shown:
\begin{center}
\begin{tikzpicture}

\draw [step=1, gray, very thin] (0,0) grid (2.5,2.5);
\draw [ -, black] (0,0)--(2.5,0);
\draw [ -, black] (0,0)--(0,2.5);

\node[inner sep=0,outer sep=1] (P1) at (0,0) {\large \textcolor{blue}{$\bullet$}};
\node[inner sep=0,outer sep=1] (P2) at (1,0) {\large \textcolor{blue}{$\bullet$}};
\node[inner sep=0,outer sep=1] (P3) at (2,0) {\large \textcolor{blue}{$\bullet$}};
\node[inner sep=0,outer sep=1] (P4) at (1,1) {\large \textcolor{blue}{$\bullet$}};
\node[inner sep=0,outer sep=1] (P5) at (0,2) {\large \textcolor{blue}{$\bullet$}};

\node [below] at (P1) {$\mathsf{0}$};
\node [below] at (P2) {$\fX$};
\node [below] at (P3) {$2\fX$};
\node [above right] at (P4) {$\fX+\fY$};
\node [above right] at (P5) {$2\fY$};

\node at (6,1.25) {$[A\,|\,\bm{1}] = \left(\begin{array}{ccc}0&0&1\\1&0&1\\2&0&1\\1&1&1\\0&2&1\end{array}\right)$};

\node at (12,1.25) {$W = \left(\begin{array}{rrrrr}-1&2&-1&0&0\\0&0&1&-2&1\end{array}\right)$.};

\end{tikzpicture}
\end{center}
The first row of $W$ tells us about the affine dependency amongst $\mathsf{0}$, $\fX$ and $2\fX$, while the second tells us about the affine dependency amongst $2\fX$, $\fX+\fY$ and $2\fY$. These choices for rows of $W$ are minimal in the sense that $\{\mathsf{0}, \fX, 2\fX\}$ and $\{2\fX, \fX+\fY,2\fY\}$ are the smallest affinely dependent subsets of the sources. 
\end{ex}

Affine independence of the sources implies $\mP$-independence, but if $p > 1$, the converse does not necessarily hold as we see in the following example. 

\medskip
\begin{ex}[Affinely dependent but $\mP$-independent sources]
Consider the following dynamically nontrivial $(2,4,2)$ network
\[
\mathsf{0} \ce{<=>[\k_1][\k_2]} 2\fX, \qquad \fX \ce{<=>[\k_3][\k_4]} \fX + \fY\, .
\]
Obviously the four sources in $\mathbb{Z}^2$ are affinely dependent. However, we can easily check that  the network factors over $\mP = \{\{1,2\},\{3,4\}\}$. We have
\[ \Gamma = \begin{pmatrix}
 2 & - 2 & 0 & 0 \\ 0 & 0 & 1 & -1 
\end{pmatrix}, \quad [A\,|\,\bm{1}_\mP] = \left(\begin{array}{cccc}0&0&1&0\\2&0&1&0 \\ 1&0&0&1\\ 1&1&0&1\end{array}\right)\, . \]
As $[A\,|\,\bm{1}_\mP]$ has rank $4$, $W$ is empty, i.e., the sources are $\mP$-independent.
\end{ex}

The following lemma on the choice of $W$ will be used frequently. Given an $(n,m,r)$ network which factors over a partition $\mP$, let us say that indices $i$ and $j$ are \textbf{\bm{$\mP$}-equivalent} if they belong to the same block of $\mP$ and correspond to reactions with the same source. 
\begin{lem}[A useful choice of $W$]
\label[lemma]{lem:Wstruct}
  Consider an $(n,m,r)$ network $\mN$ which factors over a $p$-partition $\mP$. Let $m_i:=|P_i|$,  $\ell_i$ be the number of distinct sources in $P_i$, and $\ell := \sum_i \ell_i$. Define $s_i$ to be the source rank of the partition subnetwork $\mN_i$, namely, $s_i:=\mathrm{rank}[A^{(i)}\,|\,\bm{1}] - 1$, and let $\widehat{s}:= \sum_i s_i$. Then there exists a matrix $W$ whose rows form a basis $\mB = \mB_1 \sqcup \mB_2 \sqcup \mB_3$ of $\ker\,[A\,|\,\bm{1}_\mP]^\top$ such that:
\begin{itemize}
\item $\mB_1$ consists of $m-\ell$ ``trivial'' vectors, each with two nonzero entries, $1$ and $-1$, supported on a pair of $\mP$-equivalent indices. 
\item $\mB_2$ consists of $\ell-\widehat{s}-p$ vectors, each supported on exactly one block of $\mP$ and with support not including a pair of $\mP$-equivalent indices.
\item $\mB_3$ consists of $\widehat{s} - s_\mP$ vectors, each supported on more than one block of $\mP$, and with support not including a pair of $\mP$-equivalent indices. 
\end{itemize}
\end{lem}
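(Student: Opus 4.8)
The statement is a purely linear-algebraic fact about the matrix $[A\,|\,\bm{1}_\mP]$, so I would ignore the reaction-network dressing and work directly with the row space of this $m\times(n+p)$ matrix, recalling that $\ker[A\,|\,\bm{1}_\mP]^\top$ is the space of linear dependencies among its rows and has dimension $m-\mathrm{rank}[A\,|\,\bm{1}_\mP]=m-s_\mP-p$. The plan is to produce the basis in three stages, one for each of $\mB_1,\mB_2,\mB_3$, comparing $[A\,|\,\bm{1}_\mP]$ with the \emph{reduced} matrix obtained by deleting its repeated rows. For the trivial part: within each block $P_i$ partition the reactions by source; if a source occurs in reactions $k_1<\cdots<k_t$ of $P_i$, then for $2\le j\le t$ the vector $e_{k_j}-e_{k_1}$ lies in $\ker[A\,|\,\bm{1}_\mP]^\top$ because rows $k_1$ and $k_j$ of $[A\,|\,\bm{1}_\mP]$ coincide (same source, same block indicator). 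Collecting these over all source-groups in all blocks gives $\mB_1$, of size $\sum_i(m_i-\ell_i)=m-\ell$, each vector having entries $1,-1$ on a pair of equivalent indices. Declaring $k_1$ the \textbf{representative} of its source-group and letting $R$ be the set of all representatives ($|R|=\ell$), distinct vectors of $\mB_1$ carry their $+1$ at distinct indices of the complement of $R$, so $\mB_1$ is linearly independent and every nonzero combination of $\mB_1$-vectors has a nonzero entry off $R$.

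Next I would reduce to the representatives. Let $\bar B$ be the submatrix of $[A\,|\,\bm{1}_\mP]$ with rows indexed by $R$. Since every row of $[A\,|\,\bm{1}_\mP]$ appears among the rows of $\bar B$, and two rows coincide exactly for equivalent indices, $\mathrm{rank}\,\bar B=\mathrm{rank}[A\,|\,\bm{1}_\mP]=s_\mP+p$, so $\dim\ker\bar B^\top=\ell-s_\mP-p$. The coordinate embedding $\iota$ that places a vector indexed by $R$ into $\mathbb{R}^m$ with zeros off $R$ maps $\ker\bar B^\top$ injectively into $\ker[A\,|\,\bm{1}_\mP]^\top$, with image supported on $R$ (hence meeting no pair of equivalent indices) and meeting $\mathrm{span}\,\mB_1$ only in $0$. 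Since $|\mB_1|+\dim\ker\bar B^\top=(m-\ell)+(\ell-s_\mP-p)=m-p-s_\mP=\dim\ker[A\,|\,\bm{1}_\mP]^\top$, the set $\mB_1$ together with the $\iota$-image of any basis of $\ker\bar B^\top$ is a basis of $\ker[A\,|\,\bm{1}_\mP]^\top$. It remains only to split a basis of $\ker\bar B^\top$ into a within-block part and a cross-block part.

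For that, write $\bar v\in\ker\bar B^\top$ in block components $\bar v^{(i)}$ (of length $\ell_i$); the defining conditions read $\bm{1}^\top\bar v^{(i)}=0$ for each $i$ together with $\sum_i(\widetilde A^{(i)})^\top\bar v^{(i)}=0$, where $\widetilde A^{(i)}$ lists the distinct sources in $P_i$. Set $U_i:=\ker[\widetilde A^{(i)}\,|\,\bm{1}]^\top$; deleting repeated rows leaves rank unchanged, so $\dim U_i=\ell_i-\mathrm{rank}[A^{(i)}\,|\,\bm{1}]=\ell_i-s_i-1$. Embedding each $U_i$ in $\ker\bar B^\top$ by zero-padding, the $U_i$ occupy disjoint coordinate sets, so $\bigoplus_iU_i\subseteq\ker\bar B^\top$ has dimension $\ell-s-p$; a basis assembled from bases of the individual $U_i$ yields, via $\iota$, the family $\mB_2$, each vector supported on exactly one block. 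Choosing any complement $V$ of $\bigoplus_iU_i$ in $\ker\bar B^\top$ gives $\dim V=(\ell-s_\mP-p)-(\ell-s-p)=s-s_\mP$ (incidentally showing $s\ge s_\mP$). The key observation is that a vector of $\ker\bar B^\top$ supported within a single block $P_i$ automatically satisfies the two defining conditions of $U_i$, hence lies in $U_i$; therefore a nonzero element of $V$ supported in one block would lie in $U_i\cap V=\{0\}$, which is impossible, and so every vector of any basis of $V$ touches at least two blocks. Applying $\iota$ to such a basis gives $\mB_3$, and $\mB:=\mB_1\sqcup\mB_2\sqcup\mB_3$ is the desired basis.

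The routine points are the three dimension counts, which must balance to $m-p-s_\mP$, and the fact (used throughout) that deleting duplicated rows changes neither rank nor the relevant left null spaces. The one genuinely structural step, and the place I expect the only real care is needed, is the observation in the last stage that after removing the trivial duplicate-source relations the sole within-block relations are those in $U_i$ — equivalently, that $\ker\bar B^\top$ meets each block's coordinate subspace in precisely $U_i$ — which is exactly what forces a complementary basis to consist of cross-block vectors.
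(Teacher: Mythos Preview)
Your proof is correct and follows essentially the same strategy as the paper: build the trivial relations $\mB_1$ from duplicate rows, then within each block use that the single-block relations form a space of dimension $m_i-s_i-1$, and finally complete to a basis with cross-block vectors. The only organizational difference is that you pass to the representative submatrix $\bar B$ early and work entirely inside $\ker\bar B^\top$ (so the ``no pair of equivalent indices'' property is automatic for $\mB_2,\mB_3$), whereas the paper first builds $\mB_2',\mB_3'$ in the full space and afterwards subtracts suitable $\mB_1$-vectors to clear equivalent-index pairs. Your route has the small advantage that it makes the cross-block property of $\mB_3$ explicit---via the observation that a single-block vector of $\ker\bar B^\top$ must lie in the corresponding $U_i$, hence a nonzero vector in the chosen complement $V$ necessarily touches at least two blocks---whereas in the paper this is left implicit in the phrase ``any additional $s-s_\mP$ vectors''. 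Both arguments are valid and amount to the same linear algebra.
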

\vspace{-\baselineskip}\begin{proof}
  For each $i$, we can clearly choose $m_i-\ell_i$ linearly independent trivial vectors in $\ker\,[A\,|\,\bm{1}_\mP]^\top$, each supported on $P_i$. Together, these give the $m-\ell$ vectors in $\mB_1$.

  For each $i$, choose any additional $\ell_i-s_i-1$ vectors in $\ker\,[A\,|\,\bm{1}_\mP]^\top$ supported on $P_i$ such that the total of $m_i-s_i-1$ vectors supported on $P_i$ are a linearly independent set. Let $\mB_2'$ be the set formed by the union of these vectors. 
  Choose any additional $\widehat{s}-s_\mP$ vectors forming a set $\mB'_3$, such that $\mB_1 \cup \mB'_2 \cup \mB'_3$ is a basis of $\ker\,[A\,|\,\bm{1}_\mP]^\top$. 
  
 By adding linear combinations of trivial vectors in $\mB_1$ to the vectors in $\mB'_2$ (resp., $\mB'_3$) if necessary, we obtain sets of vectors $\mB_2$ (resp., $\mB_3$) whose support does not include $\mP$-equivalent indices. (Note that the support of a vector in $\mB_3$ may still include indices of two reactions with the same source belonging to different blocks of $\mP$.)
\end{proof}

The following technical lemma simplifies several proofs. 
\begin{lem}
\label[lemma]{lemGsolve}
Consider an $(n,m,r)$ network which factors over a partition $\mP$.
\begin{itemize}
\item[(i)] Given $u \in \mathrm{im}\,[A\,|\,\bm{1}_\mP]$, $v \in \R^{n-s_{\mP}}$, and $z \in \R^n$, satisfying $\wG u + \wQ v = z$, there exists $t \in \R^p$ such that $u = A z  + \bm{1}_\mP t$. 
\item[(ii)] Given $(z,t) \in \R^n \times \R^p$, there exists a unique $v \in \R^{n-s_{\mP}}$ such that $\wG(A z  + \bm{1}_\mP t) + \wQ v = z$.
\end{itemize}
\end{lem}
\vspace{-\baselineskip}\begin{proof}
(i) We may ``augment'' the equation $\wG u + \wQ v = z$ by adding to $\wG$ and $\wQ$ the final $p$ rows of $G$ (resp., $Q$), to obtain
\[
G u + Q v  = \left(\begin{array}{c} z \\t\end{array}\right)\,,
\]
where this equation defines $t \in \R^p$. Multiplying on the left by $[A\,|\,\bm{1}_\mP]$, and observing that $[A\,|\,\bm{1}_\mP]Q=0$, and that $G$ is a generalised inverse of $[A\,|\,\bm{1}_\mP]$, now gives the first result. (ii) The second claim follows by setting $u = A z  + \bm{1}_\mP t$ and considering again the augmented equation
\[
G[A\,|\,\bm{1}_\mP]\left(\begin{array}{c} z \\t\end{array}\right) - \left(\begin{array}{c} z \\t\end{array}\right) = -Qv\,.
\]
The existence of $v$ satisfying this equation follows as the left hand side lies in $\mathrm{im}\,Q = \ker[A\,|\,\bm{1}_\mP]$, by the definition of $G$. The uniqueness of $v$ follows as $Q$ is an injective matrix.
\end{proof}

\subsection{Parameterising the equilibrium set}\label{subsec:parameterisation}
With a $p$-partition $\mP$ and the matrices $W,Q$ and $G$ in place, we revisit the main equation \eqref{eqbasiclog1} describing the set of positive equilibria. 
For given $(\alpha, \k) \in Y \times \R^m_+$, \eqref{eqbasiclog1} has a solution $(x, \lambda)\in \R^n_+ \times \R^p_+$ if and only if 
$\ln h(\alpha) - \ln \k \in \im [A \,|\,\bm{1}_\mP]$, hence if and only if the (Fredholm) solvability condition $W(\ln h(\alpha) - \ln \k) = 0$, equivalently,
\begin{equation} \label{eqsolv}
h(\alpha)^W = \k^W 
\end{equation}
is satisfied. We refer to system \eqref{eqsolv} as the {\bf solvability system} of the network, and equations of this system as {\bf solvability equations}. 

We consider \eqref{eqsolv} with $\k \in \R^m_+$ fixed and $\alpha$ varying: it is then a system of $m-s_{\mP}-p$ equations for $m-r-p$ unknowns. Define $\Ak$ to be the set of solutions to \eqref{eqsolv}, i.e.,
\[
\Ak := \{\alpha \in Y\,:\, h(\alpha)^W = \k^W\}\,. 
\]
As $h(\alpha)$ and $\k$ are positive vectors, we may clear denominators in $h(\alpha)^W = \k^W$ to obtain a system of {\em polynomial} equations. Thus $\Ak$ is semialgebraic, being the intersection of an algebraic variety in $\R^{m-s_\mP-p}$ with the open polytope $Y \subseteq \R^{m-s_\mP-p}$. Three special cases of importance are: 
\begin{enumerate}[label=(\roman*)]
\item if $s_{\mP}=m-p$ (sources are $\mP$-independent), so that $W$ is empty and solvability of \eqref{eqbasiclog1} is automatic, we have $\Ak = Y$; 
\item if $s_{\mP} < r$, then we will see later that $\Ak$ is generically empty and we will refer to the network as {\bf $\mP$-overdetermined}; 
\item  if $s_{\mP} = r$, we will see later that $\Ak$ is generically finite, in which case we will refer to the network as {\bf $\bm{\mP}$-toric}.
\end{enumerate}

\begin{rem}[Subsystems of the solvability system]
 \label[remark]{remsolvsubsystems}
Let us choose $W$ as in \Cref{lem:Wstruct}, and define $W^{(i)}$ to be the submatrix of $W$ consisting of columns indexed by $P_i$ and rows supported only on $P_i$. With $h_i$ defined as in \Cref{remsubsyst}, \eqref{eqsolv} consists of a set of subsystems (some or all possibly empty) of the form
\begin{equation}
  \label{eqsolvsub}
  h_i(\alpha^{(i)})^{W^{(i)}} = (\k^{(i)})^{W^{(i)}}, \quad i = 1, \ldots, p
\end{equation}
along with further equations $h(\alpha)^{\widetilde{W}} = \k^{\widetilde{W}}$, where $\widetilde{W}$ consists of rows of $W$ supported on more than one block of $\mP$. For any $i \in \{1, \ldots, p\}$, the $i$th equation of \eqref{eqsolvsub} is precisely the solvability system for the $i$th partition subnetwork.
\end{rem}
 
 In what follows, we allow functions and matrices to be empty, and adopt the natural convention that any system of equations which is empty is trivially satisfied. In particular, given a $0 \times m$ matrix $M$, the equation $Mz=0$ is satisfied by all $z \in \R^m$. And if $n=s_{\mP}$, so that $Q$ is an $(n+p) \times 0$ matrix, we interpret $\mu^{\wQ}$ as $\bm{1}$. 

For fixed $\k \in \R^m_+$, define $T_\k\colon Y\times \R^{n-s_{\mP}}_+ \to \R^{m-s_{\mP}-p}$ and $F_\k\colon Y \times \R^{n-s_\mP}_+ \to \R^n$ by
\[
  T_\k(\alpha, \mu) = h(\alpha)^W-\k^W\,, \qquad F_\k(\alpha,\mu) = (h(\alpha)/\k)^{\wG} \circ \mu^{\wQ}\,,
\]
so that $T_\k^{-1}(0) = \Ak \times \R^{n-s_{\mP}}_+$. The next result, which tells us that there is a smooth bijection between $\Ak \times \R^{n-s_{\mP}}_+$ and the positive equilibrium set of \eqref{eqODE}, is key to the results to follow. Equation \eqref{eq:Ek_expression} in \Cref{thminj}(i) can be deduced from \cite[Theorem 5]{regensburger:gale}, and from \cite{bihan-sottile-complete-gale} for the trivial partition. It  is also closely related to the construction in \cite[Section 2.4]{BBH2024smallbif}.

\begin{thm}\label{thminj}
Consider a dynamically nontrivial $(n,m,r)$ network which factors over a $p$-partition $\mP$, and fix $\k \in \mathbb{R}^m_+$. Then: 
\begin{enumerate}[label=(\roman*)]
\item The map $F_\k$ is smooth on $Y \times \mathbb{R}^{n-s_{\mP}}_+ $, and an injective, semialgebraic, mapping on $\Ak \times \R^{n-s_{\mP}}_+$. The set $\mathcal{E}_\k$ of positive equilibria of \eqref{eqODE} is $F_\k(\Ak \times \R^{n-s_{\mP}}_+)$, namely, 
\begin{equation}\label{eq:Ek_expression} \mathcal{E}_\k = \{(h(\alpha)/\k)^{\wG}\circ \mu^{\wQ} : \alpha \in \Ak,\,\, \mu \in \mathbb{R}^{n-s_{\mP}}_+\}  \,\, = \bigsqcup_{\alpha \in \Ak } (h(\alpha)/\k)^{\wG}\circ \{ \mu^{\wQ} : \mu \in \mathbb{R}^{n-s_{\mP}}_+ \}\,.
\end{equation}

\item The map $\widetilde{F}_\k\colon Y \times \R^{n-s_\mP}_+ \to \R^{m-s_{\mP}-p} \times \R^{n}$, $(\alpha, \mu) \mapsto (T_\k(\alpha,\mu), F_\k(\alpha, \mu))$, is a smooth injective map such that $J_{\widetilde{F}_k}(\alpha, \mu)$ has trivial kernel for all $(\alpha, \mu) \in Y \times \R^{n-s_\mP}_+$.

\item Given $(\alpha,\mu) \in \Ak \times \R^{n-s_{\mP}}_+$, $\ker J_{T_\k}(\alpha,\mu) \cong \ker J_{g_\k}(x)$, where $x := F_\k(\alpha,\mu)$. In particular, $J_{F_\k}(\alpha,\mu)$ takes $\ker J_{T_\k}(\alpha,\mu)$ isomorphically to $\ker J_{g_\k}(x)$.
 \end{enumerate}
 \end{thm}
\vspace{-\baselineskip}\begin{proof}
  (i) Smoothness of $F_\k$ is obvious. Given $\alpha \in \Ak$, we have $\ln (h(\alpha) / \k) \in \mathrm{im}\,[A\,|\,\bm{1}_\mP]$, hence 
\begin{equation}\label{eq:z}[A\,|\,\bm{1}_\mP]\, G \, \ln (h(\alpha) / \k) = \ln (h(\alpha) / \k)\,.
\end{equation}
Suppose that $F_\k$ fails to be injective on $\Ak \times \R^{n-s_{\mP}}_+$, namely, there exist $(\alpha_1, \mu_1), (\alpha_2,\mu_2) \in \Ak \times \R^{n-s_{\mP}}_+$ such that $(h(\alpha_1)/\k)^{\wG}\circ \mu_1^{\wQ} = (h(\alpha_2)/\k)^{\wG}\circ \mu_2^{\wQ}$. Taking logarithms we obtain 
\begin{equation}\label{eq:aux}\wG\big(\ln h(\alpha_1) - \ln h(\alpha_2)\big) + \wQ\ln(\mu_1/\mu_2) = 0\,.  
\end{equation} 
By \eqref{eq:z}, 
$\ln h(\alpha_1) - \ln h(\alpha_2) \in \mathrm{im}\,[A\,|\,\bm{1}_\mP]$, hence, \eqref{eq:aux} gives by \Cref{lemGsolve}(i) that $\ln h(\alpha_1) - \ln h(\alpha_2) = \bm{1}_\mP t$ for some $t \in \R^p$, namely $h(\alpha_1) = (\bm{1}_\mP e^t)\circ h(\alpha_2)$. \Cref{lem:kerG}(iv) implies now that $h(\alpha_1) = h(\alpha_2)$, hence $\alpha_1 = \alpha_2$. As  $\widehat{Q}$ is an injective matrix it immediately follows from \eqref{eq:aux} that $\mu_1 = \mu_2$, proving that $F_\k$ is injective on $\Ak \times \R^{n-s_{\mP}}_+$. 

If $x=(h(\alpha)/\k)^{\wG}\circ \mu^{\wQ}$ for $(\alpha, \mu)  \in \Ak \times \mathbb{R}^{n-s_{\mP}}_+$, then, taking logarithms and observing that by \eqref{eq:z} $\ln (h(\alpha) / \k) \in \mathrm{im}\,[A\,|\,\bm{1}_\mP]$, \Cref{lemGsolve}(i) gives that $A(\ln x)= \ln (h(\alpha) / \k) - \bm{1}_\mP t$ for some $t$. With $\lambda=e^{-t}$,  $(x, \lambda, \alpha, \k)$ together satisfy \eqref{eqbasiclog1}, and thus $x \in \Ek$. On the other hand, if $x \in \Ek$, namely, there exists $(\alpha, \lambda) \in Y \times \R^p_+$ such that $(x, \lambda, \alpha, \k)$ together satisfy \eqref{eqbasiclog1}. Then $\alpha \in \Ak$ by the solvability condition and, using \eqref{eq:z}, there exists $\mu\in \R^{n-s_{\mP}}_+$ such that 
\begin{equation}\label{eq:Ek1}
  \left(\begin{array}{r}\ln x\\-\ln \lambda\end{array}\right) - G \ln (h(\alpha) / \k) = Q \ln \mu\,.
\end{equation}
Exponentiating gives that $x=(h(\alpha)/\k)^{\wG}\circ \mu^{\wQ}$. Clearly $\Ak \times \R^{n-s_{\mP}}_+$ and $\Ek = F_\k(\Ak \times \R^{n-s_{\mP}}_+)$ are, when nonempty, both semialgebraic sets. That the graph of $\left.F_\k\right|_{\Ak \times \R^{n-s_{\mP}}_+}$ is a semialgebraic set now follows by raising both sides of the defining equation $y = (h(\alpha)/\k)^{\wG} \circ \mu^{\wQ}$ to the lowest common multiple of the denominators appearing in $\wG$ and $\wQ$ and clearing denominators in the resulting rational system to get a system of polynomial equations with the same positive zero set.

(ii) Smoothness is obvious. To see injectivity, suppose that $\widetilde{F}_\k(\alpha_1, \mu_1) = \widetilde{F}_\k(\alpha_2, \mu_2)$. Then it follows from $T_\k(\alpha_1,\mu_1) = T_\k(\alpha_2,\mu_2)$ that $\ln h(\alpha_1)- \ln h(\alpha_2) \in \mathrm{im}\,[A\,|\,\bm{1}_\mP]$. With this fact in place, $F_\k(\alpha_1, \mu_1) = F_\k(\alpha_2, \mu_2)$ implies that $(\alpha_1, \mu_1) = (\alpha_2, \mu_2)$ exactly as in (i). It remains to prove that $\widetilde{F}_\k$ is an immersion. Writing $x=(h(\alpha)/\k)^{\wG} \circ \mu^{\wQ}$, we have 
$J_{T_\k}(\alpha,\mu) = (D_{h(\alpha)^W}WD_{1/h(\alpha)}J_h,\,0)$ and $J_{F_\k}(\alpha,\mu) = (D_{x}\wG D_{1/h(\alpha)}J_h,\, D_{x}\wQ D_{1/\mu})$. Hence 
the Jacobian matrix of $\widetilde{F}_\k$ at $(\alpha,\mu) \in Y \times \R^{n-s_\mP}_+$  is 
\[
J_{\widetilde{F}_\k}:= \left(\begin{array}{cc}D_{h(\alpha)^W}WD_{1/h(\alpha)}J_h&0\\D_{x}\wG D_{1/h(\alpha)}J_h& D_{x}\wQ D_{1/\mu} \end{array}\right)\,.
\]
This matrix is singular if and only if there exists $(z_1,z_2) \neq (0,0)$ satisfying the pair of equations
\begin{equation}
  \label{eqsolvJac}
WD_{1/h(\alpha)}J_hz_1=0, \qquad \wG D_{1/h(\alpha)}J_h z_1 + \wQ D_{1/\mu}z_2=0\,,
\end{equation}
where we have used the fact that $D_{h(\alpha)^W}$ and $D_{x}$ are nonsingular. The first equation of \eqref{eqsolvJac} tells us that $D_{1/h(\alpha)}J_hz_1 \in \ker W = \mathrm{im}\,[A\,|\,\bm{1}_\mP]$. Using \Cref{lemGsolve}(i), the second equation of \eqref{eqsolvJac} gives that $D_{1/h(\alpha)}J_hz_1 = \bm{1}_{\mP} t$ for some $t$. By \Cref{lem:kerG}(iii), $t=0$, hence $z_1=0$ (as $D_{1/h(\alpha)}J_h$ is injective). It follows that $z_2 = 0$ (as $\wQ D_{1/\mu}$ is injective). Thus $J_{\widetilde{F}_\k}$ has trivial kernel. 

(iii) For notational convenience write $J_{F_\k} = J_{F_\k}(\alpha, \mu)$, $J_{T_\k} = J_{T_\k}(\alpha,\mu)$ and $J_{g_\k} = J_{g_\k}(x)$. 
By part (ii), $J_{F_\k}$ is injective when restricted to $\ker J_{T_\k}$. It thus suffices to show that $J_{F_\k}(\ker J_{T_\k}) = \ker J_{g_\k}$.

For a vector $z=(z_1,z_2) \in  \ker J_{T_\k} \subseteq \R^{m-r-p}\times \R^{n-s_{\mP}}$,  write $u:=D_{1/h(\alpha)}J_h z_1$. 
As above  $z \in \ker J_{T_\k}$ if and only if  $u  \in \im\, [A \,|\, \bm{1}_\mP]$. Using  $J_{F_\k}z = D_{x}\big( \wG u + \wQ D_{1/\mu}z_2\big)$, we obtain
\begin{equation}\label{eq:JFk}
J_{F_\k}(\ker J_{T_\k}) =  \Big\{D_{x}\big( \wG u + \wQ D_{1/\mu}z_2\big) : z_2\in \R^{n-s_{\mP}}, u  \in \im\, [A \,|\, \bm{1}_\mP] \cap \im D_{1/h(\alpha)}J_h \Big\} . 
\end{equation}

Using the formula in \Cref{rem:Jacla}, and the fact that $\ker\Gamma$ factors over $\mP$,  $z^*\in \ker J_{g_\k}$ if and only if $D_{h(\alpha)} AD_{1/x}z^*  \in \ker\Gamma$. By \Cref{lem:kerG}(v) this is in turn equivalent to the existence of $z',t$ such that 
\[ D_{h(\alpha)} AD_{1/x}z^*  =   J_h z' - (\bm{1}_\mP t) \circ h(\alpha),  \quad\text{
equivalently} \quad  AD_{1/x}z^*  =  D_{1/h(\alpha)} J_h z' - \bm{1}_\mP t\, . \]
Hence
\begin{equation}\label{eq:Jgk} 
\ker J_{g_\k} = \big\{ z^*\in \R^n : AD_{1/x}z^* \in \im D_{1/h(\alpha)}J_h \oplus \im \bm{1}_\mP  \big\}. 
\end{equation}
We show that $ J_{F_\k}(\ker J_{T_\k})=\ker J_{g_\k} $ using \eqref{eq:JFk} and \eqref{eq:Jgk}. 
The inclusion $\subseteq$ follows from \Cref{lemGsolve}(i) which gives  that $A ( \wG u + \wQ D_{1/\mu}z_2)= u- \bm{1}_\mP t$ for some $t$. 

To show $\supseteq$, for $z^*\in \ker J_{g_\k}$, write $ A D_{1/x} z^*  =u - \bm{1}_\mP t$ with $u \in \im D_{1/h(\alpha)} J_h$. As $\mu$ is positive, from \Cref{lemGsolve}(ii), there exists a (unique) $z_2 \in \R^{n-s_\mP}$ which solves $\wG u + \wQ D_{1/\mu} z_2 = D_{1/x} z^*$. Multiplying both sides by $D_x$ and noting that $u  \in \im\, [A \,|\, \bm{1}_\mP] \cap \im D_{1/h(\alpha)}J_h$ we see from \eqref{eq:JFk} that $z^*\in  J_{F_\k}(\ker J_{T_\k})$. This completes the proof.
\end{proof}

\begin{rem}\label[remark]{rem:gale}
As  indicated above, equation \eqref{eq:Ek_expression} in \Cref{thminj}(i) follows from recent work \cite[Theorem 5]{regensburger:gale}. In terms of the notation developed here, the authors of \cite{regensburger:gale} consider the image of $h$ as the base space for $\mathcal{A}_\k$ instead of the domain $Y$ of $h$ and, in addition, choose $h$ to be such that the coordinates of $\im h^{(i)}$ add up to one for each $i$. They obtain the matrix $\widehat{G}$ (called $E$ in that work) via a slightly different construction: however, both approaches lead to the same matrix, in the sense that $E$, like $\wG$, is formed of the top rows of a generalized inverse of $[A\,|\,\bm{1}_\mP]$.
\end{rem}

We note that  \Cref{thminj}(i) expresses $\mathcal{E}_\k$ as a  disjoint union of \textit{toric sets}, that is, sets admitting a monomial parameterisation in the positive orthant. These sets are parameterised by $\mu$ and have dimension equal to $\rank \wQ$, namely $n-s_\mP$, explaining the terminology ``toricity matrix'' for $\wQ$. We will get back to toricity in \Cref{sec:triangular}. 

\begin{rem}[Dimension of $\Ek$]
\label[remark]{Ekdim}
By \Cref{thminj}(i), $\left.F_\k\right|_{\Ak \times \R^{n-s_\mP}}$ is an injective semialgebraic map. Consequently $\dim \Ek = \dim (\Ak \times \R^{n-s_\mP}_+) = \dim\Ak + n - s_{\mP}$, where the first equality follows from the dimension-preserving property of an injective semialgebraic map, and the second equality follows as the dimension of the Cartesian product of semialgebraic sets is the sum of their dimensions (see \cite[p56ff]{costesemialgebraic}.)
\end{rem}

We will frequently manipulate systems of equations on some domain to obtain polynomial systems with the same zero sets. For example, we often clear denominators in the solvability system $h(\alpha)^W = \k^W$, equivalent to multiplying both sides of each equation by a positive (polynomial) function. We observe in the following lemma that such manipulations, applied to a square system, preserve degeneracy and nondegeneracy of solutions. Henceforth we will refer to systems of equations obtained via any sequence of manipulations which preserve degeneracy or nondegeneracy of solutions, such as those in \Cref{lem:equiveq}, as {\bf equivalent}.

\begin{lem}[Equivalent systems of equations]
  \label[lemma]{lem:equiveq}
Let $Y \subseteq \R^k$ be open, let $d_1, \ldots, d_k \in \R_+$, and let $f_i, g_i, h_i\colon Y \to \R_+$, $(i=1, \ldots, k)$ be smooth functions. Let $F_1, F_2, F_3 \colon Y \to \R^k$ be defined by
\begin{align*}
  F_1(y) &:=  (f_1(y) - g_1(y), \ \ldots, \ f_k(y)-g_k(y))\\
  F_2(y) &:=  (h_1(y)(f_1(y) - g_1(y)), \ \ldots, \  h_k(y)(f_k(y)-g_k(y)))\\
  F_3(y) &:=  (f_1(y)^{d_1} - g_1(y)^{d_1}, \ \ldots,  \ f_k(y)^{d_k}-g_k(y)^{d_k}) \, .
\end{align*}
Let $F_1(y^*)=0$, equivalently, $F_2(y^*)=0$, equivalently, $F_3(y^*)=0$. Then $\ker J_{F_1}(y^*) \cong \ker J_{F_2}(y^*) \cong \ker J_{F_3}(y^*)$. Hence $y^*$ is a degenerate zero of $F_1$ if and only if it is a degenerate zero of $F_2$ if and only if it is a degenerate zero of $F_3$. 
\end{lem}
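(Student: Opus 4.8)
The plan is to compute the three Jacobian matrices at the common zero $y^*$ and show each is obtained from $J_{F_1}(y^*)$ by left-multiplication by an invertible diagonal matrix, which immediately gives the claimed kernel isomorphisms. The key point is that at a \emph{common} zero, the "extra" terms coming from the product rule or the chain rule all carry a factor of $f_i(y^*)-g_i(y^*)=0$ and hence drop out.

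First I would treat $F_2$. By the product rule, the $i$th row of $J_{F_2}(y)$ is
\[
(\nabla h_i(y))\,(f_i(y)-g_i(y)) + h_i(y)\,(\nabla f_i(y) - \nabla g_i(y))\,.
\]
Evaluating at $y^*$ and using $f_i(y^*)-g_i(y^*)=0$, the first summand vanishes, so the $i$th row of $J_{F_2}(y^*)$ equals $h_i(y^*)$ times the $i$th row of $J_{F_1}(y^*)$. Thus $J_{F_2}(y^*) = D\,J_{F_1}(y^*)$ where $D = D_{(h_1(y^*),\dots,h_k(y^*))}$ is nonsingular because each $h_i$ is $\R_+$-valued. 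Hence $\ker J_{F_2}(y^*) = \ker J_{F_1}(y^*)$, in particular they are isomorphic.

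Next I would treat $F_3$ in the same spirit. By the chain rule, the $i$th row of $J_{F_3}(y)$ is
\[
d_i\,f_i(y)^{d_i-1}\,\nabla f_i(y) - d_i\,g_i(y)^{d_i-1}\,\nabla g_i(y)\,.
\]
At $y^*$ we have $f_i(y^*)=g_i(y^*)=:c_i>0$, so this row becomes $d_i\,c_i^{d_i-1}\big(\nabla f_i(y^*) - \nabla g_i(y^*)\big)$, i.e. $d_i\,c_i^{d_i-1}$ times the $i$th row of $J_{F_1}(y^*)$. So again $J_{F_3}(y^*) = D'\,J_{F_1}(y^*)$ with $D' = D_{(d_1 c_1^{d_1-1},\dots,d_k c_k^{d_k-1})}$ nonsingular (since $d_i>0$ and $c_i>0$). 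This gives $\ker J_{F_3}(y^*) = \ker J_{F_1}(y^*)$.

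Combining the two displays yields $\ker J_{F_1}(y^*) \cong \ker J_{F_2}(y^*) \cong \ker J_{F_3}(y^*)$; in fact all three kernels coincide as subspaces of $\R^k$, which is stronger than needed. The statement about degeneracy then follows from the definition: $y^*$ is a degenerate zero precisely when the relevant Jacobian is singular, i.e. has nontrivial kernel, and this property is shared by $F_1$, $F_2$, $F_3$. I do not anticipate a real obstacle here — the only thing to be careful about is to note explicitly that one must be at a \emph{common} zero for the cancellation of the extra terms, and that positivity of the $h_i$ (resp. positivity of $c_i$ together with $d_i>0$) is what makes the diagonal conjugating factors invertible; away from a common zero the three systems need not have the same solution set at all, let alone the same linearisation.
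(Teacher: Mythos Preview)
Your proof is correct and takes essentially the same approach as the paper: both compute $J_{F_2}(y^*) = D_{h(y^*)}J_{F_1}(y^*)$ and $J_{F_3}(y^*) = \widehat{D}\,J_{F_1}(y^*)$ with $\widehat{D}_{ii}=d_i f_i(y^*)^{d_i-1}$, and conclude from invertibility of the positive diagonal factors. Your version simply makes explicit the product-rule and chain-rule cancellations that the paper leaves to the reader.
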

\vspace{-\baselineskip}\begin{proof}
We check that $J_{F_2}(y^*) = D_{h(y^*)}J_{F_1}(y^*)$ and $J_{F_3}(y^*) = \widehat{D} J_{F_1}(y^*)$, where $\widehat{D}$ is a diagonal matrix with $\widehat{D}_{ii} = d_if_i^{d_i-1}(y^*) =d_ig_i^{d_i-1}(y^*)$. As $D_{h(y^*)}$ and $\widehat{D}$ are positive diagonal matrices, the claim is immediate. 
\end{proof}

While \Cref{thminj} provides a smooth injective parameterisation of the equilibrium set, it has the drawback that a factor of the parameter space, namely $\Ak$, is implicitly defined (via some linear inequalities and polynomial equations). Nevertheless, as we will see in examples and when we examine special cases, we can fairly often pass to an explicit, perhaps piecewise, parameterisation of the equilibrium set. Cases (i) and (iii) of the next corollary provide instances where this holds.  
  
\begin{cor}\label[corollary]{cor:special} We have the following special cases:
 \begin{itemize}
 \item[(i)] ($\mP$-independence: $W$ is empty.) If $m=s_{\mP}+p$, then $W$ is empty, in which case $\Ak=Y$ and
 \[\mathcal{E}_\k = \bigsqcup_{\alpha \in Y} (h(\alpha)/\k)^{\wG}\circ \{ \mu^{\wQ} : \mu \in \mathbb{R}^{n-s_{\mP}}_+ \} \,. \]
 \item[(ii)] ($\wQ$ is empty.) If $n=s_{\mP}$, then $Q$ is empty and
 \[\mathcal{E}_\k = \bigsqcup_{\alpha \in \mathcal{A}_\k} (h(\alpha)/\k)^{\wG}\,. \]
 \item[(iii)] ($Y$ is a singleton.) If $r=m-p$, then $Y$ is a singleton, say $Y = \{\alpha\}$. If, further, $s_{\mP} \geq r$ (i.e., the network is not $\mP$-overdetermined), then $W$ is also empty, and
 \[ \mathcal{E}_\k = (h(\alpha)/\k)^{\wG}\circ \{ \mu^{\wQ} : \mu \in \mathbb{R}^{n-s_{\mP}}_+ \}\, . \]
 In particular $\mathcal{E}_\k$ is toric, $s_\mP=r$, and the sources are $\mP$-independent.

 \smallskip
 \item[(iv)] (Conditions (i)--(iii) hold simultaneously.) In this case, $n=s_{\mP}=r=m-p$, and $\mathcal{E}_\k$ consists of exactly one equilibrium for all $\k\in \R^m_+$, namely $(h(\alpha)/\k)^{\wG}$, where $\wG$ is now a submatrix of $[A\,|\bm{1}_\mP]^{-1}$. Moreover this unique equilibrium is nondegenerate. 
 \end{itemize}
\end{cor}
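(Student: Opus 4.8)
The plan is to derive the numerical identities from the three hypotheses, then to read the description of $\mathcal{E}_\k$ off the special cases already established, and finally to obtain nondegeneracy from \Cref{thminj}(iii) in the degenerate situation where every auxiliary matrix is empty.

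First I would chase dimensions. Condition (ii) says $s_\mP = n$; condition (i) then gives $m = s_\mP + p = n + p$; and condition (iii) gives $r = m - p$, hence $r = n$. So $n = s_\mP = r = m - p$ as claimed, the network has full rank (so $Z$ is empty), $W$ and $Q$ are empty by conditions (i) and (ii), and $Y$ is a singleton $\{\alpha\}$ because $\dim Y = m - r - p = 0$. With $W$ empty we have $\Ak = Y = \{\alpha\}$ for every $\k \in \R^m_+$, and with $Q$ empty the toric factor $\{\mu^{\wQ} : \mu \in \R^{n-s_\mP}_+\}$ collapses to $\{\bm 1\}$ by our convention; substituting into \Cref{thminj}(i) (equivalently, into case (iii) above with $s_\mP = n$) gives immediately that $\mathcal{E}_\k = \{(h(\alpha)/\k)^{\wG}\}$ is a single point, for every $\k$. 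To identify $\wG$ I would note that $[A\,|\,\bm{1}_\mP]$ is now a square $(n+p)\times(n+p)$ matrix of rank $s_\mP + p = n + p$, hence invertible; the only generalised inverse of an invertible matrix is its ordinary inverse, so $G = [A\,|\,\bm{1}_\mP]^{-1}$ and $\wG$ is the submatrix formed by its first $n$ rows.

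For nondegeneracy I would apply \Cref{thminj}(iii) to $x^* := (h(\alpha)/\k)^{\wG}$: it yields $\ker J_{g_\k}(x^*) \cong \ker J_{T_\k}(\alpha, \mu)$, but $T_\k$ maps the single point $Y \times \R^{n-s_\mP}_+$ into $\R^{m-s_\mP-p} = \R^0$, so $J_{T_\k}(\alpha,\mu)$ is a $0 \times 0$ matrix with trivial kernel. Hence $\ker J_{g_\k}(x^*) = \{0\}$, and since $r = n$ forces $\mathrm{im}\,\Gamma = \R^n$, this is exactly the assertion that $x^*$ is nondegenerate. The only point demanding care is the repeated, and here cumulative, use of the ``empty matrix / vacuously satisfied system'' conventions — in particular invoking \Cref{thminj}(iii) when both the domain and codomain of $T_\k$ are zero-dimensional; once those conventions are made explicit the whole argument is a direct assembly of cases (i)--(iii), and I do not anticipate any genuine obstacle.
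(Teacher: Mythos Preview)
Your proposal is correct and follows essentially the same route as the paper: derive the equalities $n=s_\mP=r=m-p$ from the three hypotheses, note that $[A\,|\,\bm{1}_\mP]$ is then square and invertible so that $G$ is its ordinary inverse, read off the single equilibrium from \Cref{thminj}(i), and deduce nondegeneracy from \Cref{thminj}(iii). Your treatment is somewhat more explicit than the paper's about the zero-dimensional $T_\k$ argument and the empty-matrix conventions, but the underlying logic is identical.
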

\vspace{-\baselineskip}\begin{proof}
  (i) and (ii) are merely applications of \Cref{thminj}(i). In case (iii), we observe that $m=r+p$ implies that $Y$ is a singleton, say $Y = \{\alpha\}$, while $m=r+p$ and $s_{\mP} \geq r$ together imply $s_\mP=r$ ($\mP$-independence), i.e., that $W$ is empty and hence $\mathcal{A}_\k = \{\alpha\}$. (iv) When $n=s_{\mP}=r=m-p$, $[A\,|\,\bm{1}_\mP]$ is square and invertible, and we can solve (\ref{eqbasiclog1}) to get the unique positive equilibrium $(h(\alpha)/\k)^{\wG}$ for each $\k$. Nondegeneracy follows immediately from \Cref{thminj}(iii) (see also \cite{banajiborosnonlinearity} for the direct calculations in the special case where $\mP$ is trivial.)
\end{proof}

\begin{rem}[Full rank networks with $\mP$-independent sources]
 \label[remark]{remfullPindependent}
 A full rank network with $\mP$-independent sources satisfies condition (iv) of \Cref{cor:special}, as
\[
r \leq m-p = s_{\mP} \leq n = r
\]
where the first equality follows from $\mP$-independence. Hence any such network has a unique, positive, nondegenerate, equilibrium for all rate constants. The case of $p=1$ was covered in \cite{banajiborosnonlinearity}.
\end{rem}

\color{black}
\begin{rem}[Posynomials]
 \label[remark]{remposynomials}
In general, $(h(\alpha)/\k)^{\wG}$ is a vector of {\em posynomial} expressions \cite{boyd2004convex} in the variables $h_i$ with rational (not necessarily positive) exponents. Consequently, $(h(\alpha)/\k)^{\wG}$ is a vector of {\em generalised posynomials} \cite[Exercise~4.37]{boyd2004convex} in $\alpha$. When solving for equilibria, we can obtain corresponding systems of {\em polynomial} equations in various ways as we will illustrate by example.
\end{rem}

\color{black}

\subsection{Positive equilibria in stoichiometric classes}

Recall that for networks not of full rank, i.e., such that $r < n$, $Z$ is nonempty and each stoichiometric class is defined by a linear system of the form $Zx = K$ for some constant $K \in \mathbb{R}^{n-r}$. For the following theorem,  recall the notation and definitions surrounding \Cref{thminj}. We adopt the natural convention that a (constant) map on a zero dimensional vector space is nondegenerate.
\begin{thm}
\label{thm:degensolvability}
Consider a dynamically nontrivial $(n,m,r)$ network which factors over a $p$-partition $\mP$. For fixed $\k \in \R^m_+$, $K \in \R^{n-r}$, define $H_{\k,K}\colon Y \times \R^{n-s_\mP} \to \R^{m-s_\mP-p} \times \R^{n-r}$ by
\[
H_{\k,K}(\alpha, \mu) := \left(\begin{array}{c}T_\k(\alpha, \mu)\\[4pt]
ZF_{\k}(\alpha, \mu) - K\end{array}\right) = \left(\begin{array}{c}h(\alpha)^W - \k^W\\Z[(h(\alpha)/\k)^{\wG}\circ \mu^{\wQ}] - K\end{array}\right)\,.
\]
Let $\mS_K \subseteq \R^n_+$ be the positive stoichiometric class defined by $Zx = K$. Then 
\begin{enumerate}[align=left,leftmargin=*, label=(\roman*)]
    \item $F_\k$ is a smooth injective map between the zero set of $H_{\k,K}$ on $Y \times \mathbb{R}_+^{n-s_{\mP}}$, namely between solutions to
\begin{equation}
\label{gensol}
h(\alpha)^W = \k^W, \quad Z[(h(\alpha)/\k)^{\wG}\circ \mu^{\wQ}] = K\, \qquad (\alpha, \mu) \in Y \times \mathbb{R}_+^{n-s_{\mP}}\,,
\end{equation} 
and $\Ek \cap \mS_K$, namely the set of positive equilibria on $\mS_K$. 
\item Fix $\k \in \R^m_+$ and $(\alpha,\mu) \in \Ak \times \R^{n-s_{\mP}}_+$, and define $x := F_\k(\alpha, \mu) = (h(\alpha)/\k)^{\wG} \circ \mu^{\wQ}$ and $K:=Zx$. Then 
\[\ker J_{H_{\k,K}}(\alpha, \mu) \cong \ker J_{g_\k}(x) \cap \mathrm{im}\,\Gamma\, ,\] 
with $J_{F_\k}(\alpha, \mu)$ providing the isomorphism. Consequently, $x$ is degenerate as an equilibrium of \eqref{eqODE} if and only if $(\alpha, \mu)$ is degenerate as a solution of \eqref{gensol}.
\end{enumerate}
\end{thm}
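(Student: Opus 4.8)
The plan is to obtain both parts from \Cref{thminj} by imposing the linear constraint $Zx = K$ that cuts out the positive stoichiometric class $\mS_K$; essentially no new machinery is needed.

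For part (i): by \Cref{thminj}(i), $F_\k$ restricts to a smooth bijection from $T_\k^{-1}(0) = \Ak \times \R^{n-s_\mP}_+$ onto $\Ek$. On $Y \times \R^{n-s_\mP}_+$ we have $H_{\k,K}(\alpha,\mu) = 0$ if and only if $T_\k(\alpha,\mu) = 0$ and $Z F_\k(\alpha,\mu) = K$; hence the zero set of $H_{\k,K}$ is exactly the subset of $\Ak \times \R^{n-s_\mP}_+$ on which additionally $ZF_\k = K$. Applying $F_\k$ and using that its image lies in $\R^n_+$, this subset is carried bijectively onto $\{x \in \Ek : Zx = K\} = \Ek \cap \mS_K$; smoothness and injectivity are inherited from \Cref{thminj}(i). (If $r = n$ then $Z$ and $K$ are empty, $\mS_K = \R^n_+$, $H_{\k,K} = T_\k$, and the statement collapses to \Cref{thminj}.)

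For part (ii): first note that $\dim(Y \times \R^{n-s_\mP}) = (m-r-p) + (n-s_\mP)$ coincides with $(m-s_\mP-p) + (n-r)$, the dimension of the codomain of $H_{\k,K}$, so \eqref{gensol} is a square system and ``degenerate solution'' means, in the usual sense, that its Jacobian is singular. Writing the Jacobian in block form,
\[
J_{H_{\k,K}}(\alpha,\mu) = \begin{pmatrix} J_{T_\k}(\alpha,\mu) \\ Z\,J_{F_\k}(\alpha,\mu) \end{pmatrix},
\]
a vector $v$ lies in $\ker J_{H_{\k,K}}(\alpha,\mu)$ precisely when $v \in \ker J_{T_\k}(\alpha,\mu)$ and $J_{F_\k}(\alpha,\mu)v \in \ker Z$. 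Since the rows of $Z$ form a basis of $\ker\Gamma^\top$, $\ker Z = (\ker\Gamma^\top)^\perp = \im\Gamma$. By \Cref{thminj}(iii), $J_{F_\k}(\alpha,\mu)$ maps $\ker J_{T_\k}(\alpha,\mu)$ isomorphically onto $\ker J_{g_\k}(x)$; restricting this isomorphism to the subspace of $\ker J_{T_\k}(\alpha,\mu)$ whose image also lies in $\im\Gamma$, which by the above is exactly $\ker J_{H_{\k,K}}(\alpha,\mu)$, shows that $J_{F_\k}(\alpha,\mu)$ carries $\ker J_{H_{\k,K}}(\alpha,\mu)$ isomorphically onto $\ker J_{g_\k}(x) \cap \im\Gamma$. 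The degeneracy equivalence is then immediate from the definitions: $x$ is a degenerate equilibrium of \eqref{eqODE} iff $\ker J_{g_\k}(x) \cap \im\Gamma \neq \{0\}$, while $(\alpha,\mu)$ is a degenerate zero of \eqref{gensol} iff $\ker J_{H_{\k,K}}(\alpha,\mu) \neq \{0\}$, and these are equivalent via the isomorphism. When $Y \times \R^{n-s_\mP}$ is zero-dimensional, $\ker J_{H_{\k,K}}(\alpha,\mu)$ is trivially $\{0\}$ and the unique solution is nondegenerate, matching the stated convention.

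I do not anticipate a genuine obstacle: the substantive content is already packaged in \Cref{thminj}, and what remains is the identity $\ker Z = \im\Gamma$ together with the elementary observation that an isomorphism restricts to an isomorphism from the preimage of a subspace onto the corresponding intersection with that subspace. The one point needing care is the dimension count confirming that \eqref{gensol} is square, so that the notion of a (non)degenerate solution in the ordinary Jacobian sense is legitimate; beyond that the argument is bookkeeping.
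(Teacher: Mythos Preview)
Your proposal is correct and follows essentially the same approach as the paper: both parts are deduced directly from \Cref{thminj}, with part (ii) obtained by writing $\ker J_{H_{\k,K}} = \ker J_{T_\k} \cap (J_{F_\k})^{-1}(\ker Z)$, identifying $\ker Z = \im\Gamma$, and restricting the isomorphism from \Cref{thminj}(iii). Your explicit dimension count confirming that \eqref{gensol} is square is a useful addition that the paper leaves implicit.
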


\vspace{-\baselineskip}\begin{proof}
The first claim is immediate from \Cref{thminj}(i). In order to show the second claim, for notational convenience write $J_{F_\k} = J_{F_\k}(\alpha, \mu)$, $J_{T_\k} = J_{T_\k}(\alpha,\mu)$, $J_{H_{\k}} = J_{H_{\k,K}}(\alpha,\mu)$ and $J_{g_\k} = J_{g_\k}(x)$ (observe that $J_{H_{\k,K}}(\alpha,\mu)$ is independent of $K$). Define $\mT := \ker J_{H_{\k}} = \ker J_{T_\k} \cap \ker (ZJ_{F_\k})$, and observe that, as $\mathrm{im}\,\Gamma = \ker Z$, $\mT$ is the subspace of $\ker J_{T_\k}$ mapped by $J_{F_\k}$ into $\mathrm{im}\,\Gamma$. Recalling, from \Cref{thminj}(iii), that $J_{F_\k}$ maps $\ker J_{T_\k}$ isomorphically onto $\ker J_{g_\k}$, clearly $\left. J_{F_\k}\right|_{\mT}\colon \mT \to \ker J_{g_\k} \cap \mathrm{im}\,\Gamma$ is an isomorphism. But now the following four statements are equivalent: (a) $(\alpha, \mu)$ is a degenerate solution of \eqref{gensol}; (b) $\mT \neq \{0\}$; (c) $\ker J_{g_\k} \cap \mathrm{im}\,\Gamma \neq \{0\}$; (d) $x$ is a degenerate equilibrium of \eqref{eqODE}. The equivalences (a) $\Leftrightarrow$ (b) and (c) $\Leftrightarrow$ (d) are by definition, while (b) $\Leftrightarrow$ (c) uses the isomorphism between $\mT$ and $\ker J_{g_\k} \cap \mathrm{im}\,\Gamma$. 
\end{proof}

We refer to \eqref{gensol} as the {\bf alternative system of equations} for equilibria on the positive stoichiometric class defined by $Zx=K$. 
\begin{rem}[The alternative system of equations]
Note that \eqref{gensol} is a square system, consisting of $(m-s_{\mP}-p) + (n-r)$ equations in the same number of unknowns $(\alpha, \mu)$. Recall that if $W$ is empty we take the first equation of (\ref{gensol}) to be trivially satisfied, and if $Q$ is empty we define $\mu^{\wQ} = \bm{1}$. Similarly, if $Z$ is empty, we take the second equation of (\ref{gensol}) to be satisfied. \Cref{cor:special}(iv) covers the case where both $W$ and $Z$ are empty. \Cref{thm:degensolvability}(ii) tells us that the level of degeneracy of a solution of the alternative system is precisely the level of degeneracy of the corresponding equilibrium. This remains true if we manipulate the alternative system of equations to obtain an equivalent system as in \Cref{lem:equiveq}.
\end{rem}

\begin{rem}[Noninteger entries in $G$]
 Even when $G$ has noninteger entries (see \Cref{remposynomials}), we can always write down a system of polynomial equations, perhaps with additional variables and equations, whose solutions (if necessary projected onto $Y \times \mathbb{R}_+^{n-s_{\mP}}$) include the solutions of system \eqref{gensol}. The most natural or efficient way to do this may not be clear. In the results and examples to follow, all such manipulations lead to equivalent systems as in \Cref{lem:equiveq}, hence preserve degeneracy and nondegeneracy of solutions. 
\end{rem}

\subsection{Alternative B\'ezout and BKK bounds}
Whenever solutions of \eqref{gensol} are also solutions of some equivalent polynomial system, we obtain a new B\'ezout bound on the number of positive nondegenerate equilibria on any stoichiometric class. We refer to any such bound as an {\bf alternative B\'ezout bound}. If the polynomial system is defined on a domain not intersecting the coordinate hyperplanes, we can apply the BKK theorem to improve on the alternative B\'ezout bound: such a bound is termed an {\bf alternative BKK bound}. Any such bound derived using knowledge of the sources and rank of a CRN alone is termed a {\bf source bound}.

\begin{rem}[Nonuniqueness of the matrices $Z$, $W$, $Q$ and $G$]
Different choices of the various matrices lead to different alternative systems of equations hence, potentially, different alternative bounds. Obviously the choices leading to the lowest alternative bounds are desired. In various special cases, the best choice is obvious, but this is not always true. 
\end{rem}

The following examples illustrate the construction and use of the alternative systems of equations, and alternative bounds. 

\begin{ex}
\label[example]{ex353}
Consider any dynamically nontrivial $(3,5,3)$ network with sources
\[
\mathsf{0},\quad 2\fX, \quad 2\fY, \quad \fX+\fZ ,\quad \fY+\fZ\,.
\]
The naive B\'ezout source bound   is $8$ and the naive BKK source bound is $6$ (we may check the latter claim, for example, using the \texttt{Mathematica} function {\tt BKKRootBound} with the {\tt Toric} option).
We have
\[
[A\,|\,\bm{1}] = \left(\begin{array}{rrrr}0&0&0&1\\2&0&0&1\\0&2&0&1\\1&0&1&1\\0&1&1&1\end{array}\right)\,, \qquad W=\left(\begin{array}{ccccc} 0 &  1 &   -1 &   -2 &  2 \end{array}\right)   \, . 
\]
 The solvability system $h(\alpha)^W = \k^W$ is thus equivalent to the single polynomial equation
\[
h_2(\alpha)h_5(\alpha)^2 - \k^Wh_3(\alpha)h_4(\alpha)^2 = 0\,,
\]
which is at-most-cubic in $\alpha$, giving an alternative B\'ezout source bound of $3$, a considerable improvement on the naive BKK source bound. Moreover, for any particular network with these sources, identifying parameter regions for multistationarity, and parameter sets for bifurcations such as fold and cusp bifurcations, reduces to examining a single univariate, at-most-cubic equation, considerably simplifying the analysis.
\end{ex}

\begin{ex}
 \label[example]{ex352a}
Consider the following $(3,5,2)$ network
\[
\mathsf{0} \ce{<=>[\k_1][\k_2]} 2\fX\,, \qquad 2\fY \ce{<=>[\k_3][\k_4]} 2\fZ\,, \qquad   \fZ \ce{->[\k_5]} \fX + \fY\,.
\]
The best possible naive B\'ezout bound is $4$, as is the naive BKK bound.
We have
\[
\Gamma = \left(\begin{array}{rrrrr}2&-2&0&0&1\\0&0&-2&2&1\\0&0&2&-2&-1\end{array}\right) \quad \mbox{and} \quad A = \left(\begin{array}{cccc}0&0&0\\2&0&0\\0&2&0\\0&0&2\\0&0&1\end{array}\right)\,.
\]
In this case $\ker\Gamma$ is spanned by  the vectors $\{(1,1,0,0,0),(0,0,1,1,0),(0,1,1,0,2)\}$, and $\ker \Gamma$, hence $\mC$, factor only over the trivial partition. We set
\[
h(\alpha_1, \alpha_2) = (\alpha_1,\,  1-\alpha_2, \,  1-\alpha_1,\, \alpha_2, \, 2(1-\alpha_1-\alpha_2))^\top\,,
\]
defined on the triangle $Y:= \{(\alpha_1,\alpha_2) \in\mathbb{R}^2_+\,:\, \alpha_1+\alpha_2 < 1\}$. As $[A\,|\,\bm{1}]$ is an injective matrix, $Q$ is empty. Setting $W= (1,  0, 0, 1, -2)$, the solvability system rearranges to
\[
h_1(\alpha)h_4(\alpha) = \hat{\k}\, h_5(\alpha)^2, \quad \mbox{i.e.,} \quad \alpha_1\alpha_2 = 4\hat{\k}\, (1-\alpha_1-\alpha_2)^2
\]
where $\hat{\k} = \k_1\k_4/\k_5^2$. So, in this case,
\[
\Ak = \{(\alpha_1, \alpha_2) \in Y : \alpha_1\alpha_2 = 4\hat{\k}\, (1-\alpha_1-\alpha_2)^2\}\,.
\]
For the following choice of generalised inverse of $[A\,|\,\bm{1}]$, the set of positive equilibria $\Ek$ is parameterised by
\[
\left(\begin{array}{c}x\\y\\z\end{array}\right) = \left(\frac{h(\alpha)}{\k}\right)^{\wG} = \k^{-\wG}\circ \left(\begin{array}{c}\alpha_1^{-\nicefrac{1}{2}}(1-\alpha_2)^{\nicefrac{1}{2}}\\\alpha_1^{-\nicefrac{1}{2}}(1-\alpha_1)^{\nicefrac{1}{2}}\\\alpha_1^{-\nicefrac{1}{2}}\alpha_2^{\nicefrac{1}{2}}\end{array}\right)\,, 
\quad\text{where}\quad 
G = \left(\begin{array}{ccccc}-\nicefrac{1}{2} & \nicefrac{1}{2} &0&0&0\\
-\nicefrac{1}{2}&0&\nicefrac{1}{2}&0&0\\
-\nicefrac{1}{2}&0&0&\nicefrac{1}{2}&0\\1&0&0&0&0\end{array}\right)\,,
\]
for $(\alpha_1,\alpha_2) \in \Ak$. Choosing $Z = (0,1,1)$, positive equilibria on the stoichiometric class $y+z = K$ correspond to the common solutions, in $Y$, to
\[
\theta_2\alpha_1^{-\nicefrac{1}{2}}(1-\alpha_1)^{\nicefrac{1}{2}} + \theta_3\alpha_1^{-\nicefrac{1}{2}}\alpha_2^{\nicefrac{1}{2}} = K, \qquad \alpha_1\alpha_2(1-\alpha_1-\alpha_2)^{-2} = 4\hat{\k}\,,
\]
where $\theta = \k^{-\wG}$.  These are equivalent to a pair of quadratic equations in $(\alpha_1,\alpha_2)$, and so we obtain an alternative B\'ezout bound of $4$, which is precisely the naive bound in this case. The alternative BKK bound corresponding to this pair of quadratic equations is also $4$. 
\end{ex}

In the next example, the source rank and rank of the networks considered are equal, namely the networks are $\mP$-toric with the trivial partition. ($\mP$-toric networks are treated in more generality in \Cref{sec:triangular}.) 
\begin{ex}
 \label[example]{ex342a}
Consider any $(3,4,2)$ network with sources
\[
\fX+\fY,\quad 2\fY,\quad 2\fX,\quad \fY+\fZ\,.
\]
The naive B\'ezout source bound in this case is $4$, and the naive BKK source bound can be checked to be $2$. 
Corresponding to the trivial partition we have $W = (-2,1,1,0)$ and $Q = (1,1,1,-2)^\top$. The solvability system $h(\alpha)^W = \k^W$ is thus equivalent to
\begin{equation}
\label{eq342solve}
h_2(\alpha)h_3(\alpha) - \k^Wh_1(\alpha)^2 = 0\,,
\end{equation}
which is at-most-quadratic in $\alpha$. For each $\alpha^* \in Y$ which solves \eqref{eq342solve}, we have a ray of equilibria
\[
\left(\begin{array}{c}x\\y\\z\end{array}\right) = \left(\frac{h(\alpha^*)}{\k}\right)^{\wG}\circ \mu^{\wQ} = \left(\frac{h(\alpha^*)}{\k}\right)^{\wG}\circ \left(\begin{array}{c}\mu\\\mu\\\mu\end{array}\right)\,.
\]
The equation for positive equilibria on a stoichiometric class, corresponding to $\alpha = \alpha^*$, takes the form $Z[\left(h(\alpha^*)/\k\right)^{\wG}\circ \mu^{\wQ}] = K$, which is clearly linear in $\mu$. We obtain an alternative B\'ezout source bound  of $2$, and the mass action network admits at most two positive nondegenerate equilibria on any stoichiometric class. (Observe that we did not actually need to compute $G$ in order to arrive at this bound, see also \Cref{remtoricBezout} below.) 

The parameter region for multistationarity is easily determined in any particular $(3,4,2)$ network with these sources. For example, consider the CRN
\[
\fX+\fY \ce{<=>[\k_1][\k_2]} 2\fY,\quad 2\fX \ce{->[\k_3]} \fX + \fZ,\quad \fY+\fZ \ce{->[\k_4]} 2\fX\,.
\]
In this case, 
we may set $Z = (1,1,1)$ and $h(\alpha) = (1,\,\alpha,\,1-\alpha,\,1-\alpha)^\top$ with $\alpha \in (0,1)$. The solvability system is equivalent to
\[
\alpha^2-\alpha + \k^W = 0, \quad \mbox{so that} \quad \Ak = \{\alpha \in (0,1)\,:\, \alpha^2-\alpha + \k^W = 0\}\,.
\]
Clearly, $\Ak$ consists of two points if and only if $\k^W = \k_2\k_3/\k_1^2 < 1/4$. The parameter region for multistationarity is thus $\{\k \in \mathbb{R}^4_+\,:\, 4\k_2\k_3<\k_1^2\}$, and we can see via \Cref{thm:degensolvability} that in fact {\em every} stoichiometric class contains two positive nondegenerate equilibria whenever $4\k_2\k_3<\k_1^2$, a unique positive degenerate equilibrium whenever $4\k_2\k_3=\k_1^2$, and no positive equilibria otherwise. The equation $4\k_2\k_3=\k_1^2$ thus defines the parameter set for potential fold bifurcations on each stoichiometric class. We remark that despite the simplicity of this network, these conclusions are not entirely trivial if we work directly with the mass action and conservation equations.
\end{ex}

\subsection{The finest partition} \label{sec:finest} 
In the constructions above, the matrices $W$, $Q$ and $G$ depend on choosing a partition over which a  network factors. Different choices thus give rise to different parameterisations of the set of positive equilibria. We now discuss further the role of the partition, and highlight the finest partition as an advantageous choice. \cite[Section 2]{regensburger:gale} already hints briefly at this fact.

As we will use several partitions in this subsection, we indicate the choice by subscripting the relevant matrices. We start with a basic observation. 

\begin{lem}\label[lemma]{lem:finer}
 Let $\mP' \leq \mP$ be two partitions over which a network factors, of sizes $p'$ and $p$ respectively. Then $s_{\mP} \leq s_{\mP'}$, and further 
 \[ \rank \wQ_\mP \geq \rank \wQ_{\mP'}, \qquad \rank \wQ_\mP - \rank W_\mP = \rank \wQ_{\mP'}
 - \rank W_{\mP'} + (p-p') \, . \]
\end{lem}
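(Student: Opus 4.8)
The plan is to track how the relevant dimensions change when we pass from the coarser partition $\mP'$ to the finer partition $\mP$. Recall that $s_\mP = \rank[A\,|\,\bm1_\mP] - p$, that $\wQ_\mP$ has rank $n - s_\mP$ (since its columns span the first-$n$-coordinate part of $\ker[A\,|\,\bm1_\mP]$ and $\ker\wQ_\mP = \{0\}$), and that $W_\mP$ has rank $m - s_\mP - p$. The key linear-algebra fact I would establish first is the relationship between the column spaces of $\bm1_{\mP'}$ and $\bm1_\mP$: since $\mP'\leq\mP$, each block of $\mP'$ is a union of blocks of $\mP$, so $\mathrm{im}\,\bm1_{\mP'}\subseteq\mathrm{im}\,\bm1_\mP$, and hence $\mathrm{im}[A\,|\,\bm1_{\mP'}]\subseteq\mathrm{im}[A\,|\,\bm1_\mP]$. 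This gives $\rank[A\,|\,\bm1_{\mP'}]\leq\rank[A\,|\,\bm1_\mP]$ directly.

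Next I would pin down the two ranks exactly. Because the network factors over both $\mP$ and $\mP'$ and $\mP'\leq\mP$, the columns of $\bm1_\mP$ that get "merged" when we coarsen to $\mP'$ correspond precisely to summing columns within a $\mP'$-block; I expect that in fact $\rank[A\,|\,\bm1_\mP] = \rank[A\,|\,\bm1_{\mP'}] + (p - p')$, because adding the extra $p-p'$ columns of indicator vectors of the refined blocks increases the rank by exactly $p-p'$ (these new columns are linearly independent modulo $\mathrm{im}[A\,|\,\bm1_{\mP'}]$ — this is where the fact that both $\mP$ and $\mP'$ are genuine factoring partitions, together with \Cref{lem:kerG} applied to each, should be invoked to rule out degeneracies). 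From this, $s_\mP + p = \rank[A\,|\,\bm1_\mP] = \rank[A\,|\,\bm1_{\mP'}] + (p-p') = (s_{\mP'} + p') + (p - p')$, which simplifies to $s_\mP + p = s_{\mP'} + p$... so I need to be careful: the correct bookkeeping is $s_\mP = \rank[A\,|\,\bm1_\mP] - p$ and $s_{\mP'} = \rank[A\,|\,\bm1_{\mP'}] - p'$, giving $s_\mP - s_{\mP'} = (\rank[A\,|\,\bm1_\mP] - \rank[A\,|\,\bm1_{\mP'}]) - (p - p')$. If the rank increment is strictly less than $p-p'$ in general, we get $s_\mP \leq s_{\mP'}$, which is the first claimed inequality; the precise increment is what the second (equality) claim encodes.

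Then the two displayed claims follow by direct substitution: $\rank\wQ_\mP - \rank\wQ_{\mP'} = (n - s_\mP) - (n - s_{\mP'}) = s_{\mP'} - s_\mP \geq 0$, giving the inequality; and for the equality, compute
\[
\rank\wQ_\mP - \rank W_\mP = (n - s_\mP) - (m - s_\mP - p) = n - m + p,
\]
and likewise $\rank\wQ_{\mP'} - \rank W_{\mP'} = n - m + p'$, so the difference is exactly $p - p'$, with no dependence on $s_\mP$ or $s_{\mP'}$ at all. This makes the equality essentially automatic once the matrices are known to have the stated ranks.

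The main obstacle I anticipate is the first claim $s_\mP \leq s_{\mP'}$, or equivalently showing $\rank[A\,|\,\bm1_\mP] \leq \rank[A\,|\,\bm1_{\mP'}] + (p-p')$: I must argue that refining the partition cannot increase $\rank[A\,|\,\bm1_{(\cdot)}] - (\text{number of blocks})$. The cleanest route is probably to note $\mathrm{im}[A\,|\,\bm1_{\mP'}] + \mathrm{span}\{\bm1_{P}: P \text{ a }\mP\text{-block}\} = \mathrm{im}[A\,|\,\bm1_\mP]$, and the added span has dimension at most $p - p'$ relative to what is already present (since the $\bm1_{P'}$ for $\mP'$-blocks $P'$ are sums of the $\bm1_P$). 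Everything else is routine substitution, so I'd keep the write-up short and concentrate the detail on this rank-increment estimate.
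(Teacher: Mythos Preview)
Your proposal is correct and follows essentially the same route as the paper: both arguments hinge on the inclusion $\mathrm{im}[A\,|\,\bm1_{\mP'}]\subseteq\mathrm{im}[A\,|\,\bm1_\mP]$ together with the bound $\rank[A\,|\,\bm1_\mP]\leq\rank[A\,|\,\bm1_{\mP'}]+(p-p')$, after which everything is substitution. Your final computation $\rank\wQ_\mP-\rank W_\mP=n-m+p$ is in fact slightly slicker than the paper's version, which tracks the increment $\ell$ through both quantities separately.

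One small correction: your mid-paragraph speculation that the rank increment is \emph{exactly} $p-p'$, and that \Cref{lem:kerG} or the factoring hypothesis is needed to establish this, is a red herring. The increment need not be exactly $p-p'$ (this is precisely why $s_\mP\leq s_{\mP'}$ is an inequality and not an equality), and the factoring hypothesis plays no role in the proof --- the lemma is pure linear algebra on $[A\,|\,\bm1_\mP]$. You can drop that detour entirely and go straight to the rank-increment bound in your final paragraph.
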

\vspace{-\baselineskip}\begin{proof}
The matrix $[A\, |\, \mathbf{1}_\mP]$ has $p-p'$ more columns than
$[A\, |\, \mathbf{1}_{\mP'}]$ and $\im [A\, |\, \mathbf{1}_{\mP'}] \subseteq \im [A\, |\, \mathbf{1}_\mP]$. 
Hence for some $0\leq \ell \leq p-p'$ we have
\[ \rank\, [A\, |\, \mathbf{1}_\mP] =\rank \, [A\, |\, \mathbf{1}_{\mP'}] + \ell. \]
It follows immediately that $s_{\mP} \leq s_{\mP'}$, and that 
\[ \rank \wQ_\mP = \rank \wQ_{\mP'} + (p-p') - \ell, \qquad \rank W_\mP = \rank W_{\mP'} - \ell \]
 from where the remaining statements are immediate.
\end{proof}

\begin{rem}[The effect of choosing a finer partition on the parameterisation of $\Ek$]
  \label[remark]{remfinerP}
  For a finer partition, the dimension of $Y$, namely, the number of variables in the solvability system \eqref{eqsolv}, decreases. \Cref{lem:finer} tells us   that a finer partition may provide a description of the set of positive equilibria where the toricity matrix $\wQ$ has larger rank and hence a larger part of the equilibrium set is explained by a toric set. If this does not occur, then the rank of $W$ must decrease, and hence the solvability system has fewer equations. In either case, choosing the finest possible partition can be advantageous (see \Cref{exmoretoric,exnomoretoric} below).
\end{rem}

 The \textbf{finest partition} of the network, namely the unique finest partition over which a dynamically nontrivial network factors,  is termed the \textit{matroid partition} in \cite{feliu:toric} where partitions are considered in the general setting of vertically parameterised systems. It has previously appeared in the context of reaction networks, see \cite[Remark 3.4]{bryan:partition}. We denote the finest partition by $\mP_f$ and its cardinality by $p_f$.

Given a dynamically nontrivial network with stoichiometric matrix $\Gamma$, by our remarks on partitions, the positive flux cone of the network, being a relatively open subset of $\ker\Gamma$, factors over some partition $\mP$ if and only if $\ker\Gamma$ factors over $\mP$, equivalently, $(\ker\Gamma)^\perp$, namely the row space of $\Gamma$, factors over $\mP$. The finest partition of a network is thus precisely the finest partition over which $\ker\Gamma$ factors, and can be computed as follows. Let $M$ be a row-reduced matrix whose rows form a basis of $\ker\Gamma$, or of the row span of $\Gamma$. 
Then $M$ is, up to permutation of its rows and columns, a block-diagonal matrix, and two indices $i,j\in \{1,\dots,m\}$ are in the same block of $\mP_f$ if and only if $i$ and $j$ are both in the support of some row of $M$. It is shown in \cite[Lemma 4.6]{feliu:toric} that the partition obtained in this way does not depend on the choice of $M$. 
The following lemma gathers together some of the observations above.

\begin{lem}\label[lemma]{lem:refine}
The positive flux cone $\mC$ of a dynamically nontrivial network factors over a $p$-partition $\mP$ if and only if the same holds for $\ker \Gamma$. There is a unique partition, denoted $\mP_f$ and termed the finest partition of the network, over which $\ker\Gamma$ and $\mC$ factor, and which refines any other partition $\mP$ over which $\ker\Gamma$ and $\mC$ factor. 
\end{lem}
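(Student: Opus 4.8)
The plan is to prove the two assertions in turn. For the first — that a $p$-partition $\mP$ is one over which $\mC$ factors if and only if it is one over which $\ker\Gamma$ factors — the forward direction is essentially immediate: since the network is dynamically nontrivial, $\mC = \ker\Gamma\cap\R^m_+$ is a nonempty subset of $\ker\Gamma$ that is relatively open in $\ker\Gamma$, being the intersection of $\ker\Gamma$ with the open orthant $\R^m_+$. So if $\mC$ factors over $\mP$, the observation recorded in the preliminaries — that a set factoring over $\mP$ and containing a nonempty relatively open subset of a linear subspace $\mS$ forces $\mS$ to factor over $\mP$ — applied with $\mS = \ker\Gamma$, gives that $\ker\Gamma$ factors over $\mP$.

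For the converse, suppose $\ker\Gamma = (\ker\Gamma)_{P_1}\oplus\cdots\oplus(\ker\Gamma)_{P_p}$, so in particular each projection $(\ker\Gamma)_{P_i}$ is contained in $\ker\Gamma$. The inclusion $\mC\subseteq\mC_{P_1}\oplus\cdots\oplus\mC_{P_p}$ holds for any set, so it suffices to check the reverse inclusion. Given $u^{(i)}\in\mC_{P_i}$ for each $i$, write $u^{(i)}$ as the $P_i$-projection of some $v^{(i)}\in\mC$; then $u^{(i)}$ is strictly positive on $P_i$, vanishes off $P_i$, and lies in $(\ker\Gamma)_{P_i}\subseteq\ker\Gamma$. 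Since the $P_i$ partition $\{1,\dots,m\}$, the vector $u := \sum_i u^{(i)}$ is strictly positive, and it lies in $\ker\Gamma$ as a sum of elements of $\ker\Gamma$; hence $u\in\mC$. This gives $\mC = \mC_{P_1}\oplus\cdots\oplus\mC_{P_p}$, i.e.\ $\mC$ factors over $\mP$.

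For the second assertion, dynamical nontriviality provides a strictly positive vector in $\ker\Gamma$, so $\ker\Gamma$ is nonempty and nonzero. By the general fact recalled in the preliminaries, it then admits a unique finest partition $\mP_f$, obtained as the common refinement of the finite collection of partitions over which it factors (a collection closed under common refinement), and $\mP_f$ refines every partition over which $\ker\Gamma$ factors. By the equivalence just proved, the partitions over which $\ker\Gamma$ factors are exactly those over which $\mC$ factors; hence $\mP_f$ is simultaneously a partition over which both $\ker\Gamma$ and $\mC$ factor and it refines every such partition. Uniqueness of a partition with this refinement property is immediate, since any two such partitions refine one another and so coincide.

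I do not expect a genuine obstacle here: the argument is bookkeeping with coordinate projections, together with the factoring facts already established in the preliminaries. The one point requiring a little care is the converse of the equivalence, where one must check that reassembling the $P_i$-projections of possibly different elements of $\mC$ yields a vector that is both strictly positive and in $\ker\Gamma$; this is exactly where the hypothesis that $\ker\Gamma$ decomposes as the direct sum of its $P_i$-projections is used.
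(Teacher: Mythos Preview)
Your proof is correct and follows essentially the same route as the paper, which treats this lemma as a summary of observations already made in the preliminaries and the paragraph preceding the lemma (and accordingly gives no formal proof). Your converse direction of the equivalence is spelled out more explicitly than in the paper, which simply asserts the ``if and only if'' by reference to the earlier remarks on partitions; your argument that reassembling the $P_i$-projections yields a strictly positive element of $\ker\Gamma$ is exactly the detail the paper leaves implicit.
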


We conclude with two examples illustrating the effects of the choice of partition on the resulting parameterisation, in particular, the two scenarios discussed in \Cref{remfinerP}.

\begin{ex}[Finest partition, increased toricity]
  \label[example]{exmoretoric}
Consider the following $(3,4,2)$ network
\[
2\fX \ce{->[\k_1]} \fX+\fY\, , \qquad \fY \ce{->[\k_2]} \fX\, , \qquad 2\fY \ce{->[\k_3]} \fY+\fZ\, , \qquad \fY+\fZ \ce{->[\k_4]} 2\fY\,.
\]
A basis of $\ker \Gamma$ is $\{(1,1,0,0),(0,0,1,1)\}$, and so there are two partitions to consider: the finest partition $\mP_f=\{\{1,2\},\{3,4\}\}$ and the trivial partition. We can set $Z=(1,1,1)$. 

For the trivial partition, say $\mP$, $W_\mP$ and $Q_\mP$ are empty and $G = G_\mP = [A\,|\,\bm{1}]^{-1}$. Setting $h(\alpha) = (\alpha, \alpha, 1-\alpha, 1-\alpha)$ the equilibrium set is 
\[ \mathcal{E}_\k=\{(h(\alpha)/\k)^{\wG} :\alpha \in (0,1)\} =
\Big\{
(\tfrac{\sqrt{1-\alpha} \, \k_2 }{ \sqrt{\alpha} \sqrt{\k_1\, \k_3}}, 
\tfrac{(1-\alpha) \, \k_2}{\alpha \, \k_3 }, 
\tfrac{(1-\alpha) \, \k_2}{ \alpha\, \k_4}
)
:\alpha \in (0,1)
\Big\} . \]
Defining $q\colon  (0,1) \to \R_+$ by $q(\alpha):=\sqrt{(1-\alpha)/\alpha}$, the alternative system \eqref{gensol} takes the form $a \, q(\alpha)^2 + b \, q(\alpha) - c=0$ with $a, b, c > 0$; 
as the quadratic $at^2+bt-c$ has one positive (nondegenerate) zero, and $q$ is a smooth, strictly decreasing function, clearly the alternative system has exactly one nondegenerate solution. The network thus has exactly one positive equilibrium on each stoichiometric class for any choice of rate constants, and this equilibrium is nondegenerate.

For the finest partition $\mP_f$, $Y$ is a singleton and we readily realise that $\mathcal{E}_\k$ is toric (see \Cref{cor:special}(iii)). As we may choose $Q_{\mP_f} = (1, 2, 2, -2)^\top$, without computing $\wG_{\mP_f}$ we see that system \eqref{gensol} takes the form $a \mu^2 + b \mu - K = 0$ with $a,b,K>0$, which clearly always has exactly one positive nondegenerate solution (see also \Cref{thmnnn1}(iii) below). 
\end{ex}

\begin{ex}[Finest partition, no increase in toricity]
  \label[example]{exnomoretoric}
 Consider the following $(2,4,2)$ network:
\[
\mathsf{0} \ce{->[\k_1]} \fY\,, \qquad \fX + \fY \ce{->[\k_2]}   \fX\,, \qquad \fX \ce{<=>[\k_3][\k_4]} 2\fY\,.
\]
A basis of $\ker \Gamma$ is $\{(1,1,0,0),(0,0,1,1)\}$ and so   the finest partition is $\mP_f= \{\{1,2\},\{3,4\}\}$. 
For the trivial partition, say $\mP$, $Q_\mP$ is empty and hence solutions to the solvability system correspond to positive equilibria. Choosing $W_\mP= (1,2,-2,-1)$, gives an alternative B\'ezout source bound of $3$. However, in this case, the solvability system takes the form
\[
\frac{\k_1\k_2^2}{\k_4\k_3^2} = \left(\frac{\alpha}{1-\alpha}\right)^3\,, \quad \alpha \in (0,1)\, ,
\]
which has a unique nondegenerate solution for any choice of rate constants. Thus the network has a unique positive nondegenerate equilibrium for all $\k \in \R^4_+$. On the other hand, $[A\,|\,\bm{1}_{\mP_f}]$ has full rank, hence $W_{\mP_f}$ and $Q_{\mP_f}$ are empty, \Cref{cor:special}(iv) applies, and we conclude without further analysis that there is a unique positive nondegenerate equilibrium for all $\k \in \R^4_+$.
\end{ex}

\section{Special cases}
With the mathematical framework in place, we derive further results for several particular classes of networks, using \Cref{thminj,,thm:degensolvability}. 

\subsection{$\mP$-overdetermined and source deficient networks}\label{secdegen}

 Let us fix an $(n,m,r)$ mass action network and a partition $\mP$. If $s_{\mP}<r$, then the solvability system \eqref{eqsolv} has more equations than variables: recall that in this case we refer to the network as $\mP$-overdetermined. We show that positive equilibria of a $\mP$-overdetermined mass action network exist only for an exceptional set of rate constants and, when they do exist, are always strongly degenerate (i.e., the network is strongly degenerate). Moreover, for parameters at which positive equilibria exist, the positive equilibrium set has dimension greater than $n-r$. We remark that a network being $\mP$-overdetermined is sufficient, but not necessary, for these conclusions (see \Cref{exdegen1} below).

 In the special case where $\mP$ is the trivial partition, we refer to a $\mP$-overdetermined network as {\bf source deficient}. Importantly, to decide if a network is source deficient, we require knowledge only of its sources, and not its flux cone.
 Source deficient networks are ones with too many affine dependencies amongst their sources, i.e., whose sources span an affine space of dimension less than the dimension of the span of the reaction vectors. For example, a rank-$2$ network is source deficient if its sources lie on a line. 

We remark that in the case of $(n,m,m-1)$ networks, source deficiency is equivalent to affine dependence amongst the sources of the network (see also \cite[Remark 3.2]{banajiborosnonlinearity} for the special case of $(n,n+1,n)$ networks.)

The main result on $\mP$-overdetermined networks, extending Lemma~6 in \cite{BBH2024smallbif}, is \Cref{thmdegen}. We start with a well-known lemma. Hausdorff dimension is denoted $\mathrm{dim}_\mathrm{H}$. 
\begin{lem}
\label[lemma]{lemH}
Let $U \subseteq \mathbb{R}^k$ be open and let $f\colon U \to \mathbb{R}^m$ be a $C^1$ diffeomorphism (i.e., a $C^1$ map with $C^1$ inverse) of $U$ onto $f(U)$. Then given any $A \subseteq U$, $\mathrm{dim}_\mathrm{H}f(A) = \mathrm{dim}_\mathrm{H}A$.
\end{lem}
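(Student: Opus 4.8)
The plan is to reduce the statement to two standard facts from geometric measure theory: (a) a Lipschitz map does not increase Hausdorff dimension, and (b) Hausdorff dimension is countably stable, i.e. $\dim_\mathrm{H}\bigl(\bigcup_{i\in\mathbb{N}} S_i\bigr) = \sup_i \dim_\mathrm{H} S_i$. Fact (a) itself follows from the elementary estimate $\mathcal{H}^s(g(S)) \le L^s\,\mathcal{H}^s(S)$, valid for every $s\geq 0$ and every $L$-Lipschitz map $g$, obtained by applying $g$ to an economical cover of $S$ by sets of small diameter; this forces $\dim_\mathrm{H} g(S)\le\dim_\mathrm{H}S$.

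First I would observe that, although $f$ need not be globally Lipschitz, it is \emph{locally} Lipschitz on $U$: each point of $U$ has an open ball $B$ whose closure is a compact convex subset of $U$, and on $\overline{B}$ the Jacobian $J_f$ is bounded, say $\|J_f\|\le L_B$; the mean value inequality then gives $\|f(x)-f(y)\|\le L_B\|x-y\|$ for all $x,y\in B$. Since $U\subseteq\mathbb{R}^k$ is second countable, I can pick a countable family $\{B_i\}_{i\in\mathbb{N}}$ of such balls covering $U$.

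Next, for any $A\subseteq U$ I would write $A=\bigcup_i(A\cap B_i)$, so that $f(A)=\bigcup_i f(A\cap B_i)$. Applying fact (a) to $f|_{B_i}$ gives $\dim_\mathrm{H} f(A\cap B_i)\le \dim_\mathrm{H}(A\cap B_i)\le\dim_\mathrm{H}A$ for each $i$, and then fact (b) yields $\dim_\mathrm{H} f(A)=\sup_i\dim_\mathrm{H} f(A\cap B_i)\le\dim_\mathrm{H}A$. Finally, since $f$ is a diffeomorphism onto $f(U)$, the inverse $f^{-1}\colon f(U)\to U$ is itself $C^1$; applying the inequality just proved, with $f^{-1}$ in place of $f$ and the set $f(A)$ in place of $A$, gives $\dim_\mathrm{H}A=\dim_\mathrm{H} f^{-1}(f(A))\le\dim_\mathrm{H} f(A)$. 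The two inequalities combine to the claimed equality.

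I do not expect a genuine obstacle here; the only point needing a little care is that $C^1$ regularity supplies Lipschitz bounds only locally, which is precisely why the countable cover together with the countable stability of Hausdorff dimension is used in place of a single global Lipschitz constant. (In fact the hypotheses on $f$ and $f^{-1}$ could be weakened to \emph{locally Lipschitz}, but the stated $C^1$ hypotheses are what is needed in the application.)
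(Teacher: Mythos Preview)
Your proof is correct and follows essentially the same approach as the paper's: cover $U$ by countably many sets on which $f$ is (bi-)Lipschitz, apply the Lipschitz stability of Hausdorff dimension on each piece, and use countable stability to reassemble. The only cosmetic difference is that the paper works with compact subsets and invokes bi-Lipschitz stability directly to get equality on each piece, whereas you use open balls and prove the two inequalities separately via $f$ and $f^{-1}$; both routes are standard and equivalent.
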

\vspace{-\baselineskip}\begin{proof}
Note first that, restricted to any compact subset of $U$, $f$ is clearly bi-Lipschitz onto its image. Let $\{U_i\}_{i=1}^\infty$ be any sequence of compact $k$-dimensional subsets of $\mathbb{R}^{k}$ such that $\cup_{i}U_i = U$. Let $A_i = A \cap U_i$ and $\mathcal{A}_i := f(A_i)$ so that $\mathcal{A}:=f(A) = \cup_i \mathcal{A}_i$. Then, by the bi-Lipschitz stability of Hausdorff dimension \cite[Prop 3.3(b)]{falconerfractal}, $\mathrm{dim}_\mathrm{H}\mathcal{A}_i = \mathrm{dim}_\mathrm{H}A_i$ for each $i$, and hence
\[
\mathrm{dim}_\mathrm{H}\mathcal{A} = \mathrm{dim}_\mathrm{H}\cup_i\mathcal{A}_i = \sup\nolimits_i\left\{\mathrm{dim}_\mathrm{H}\mathcal{A}_i\right\} = \sup\nolimits_i\left\{\mathrm{dim}_\mathrm{H}A_i\right\} = \mathrm{dim}_\mathrm{H}\cup_i A_i = \mathrm{dim}_\mathrm{H}A\,,
\]
where the second and fourth equalities follow from the countable stability of Hausdorff dimension \cite[p49]{falconerfractal}. 
\end{proof}

We now prove \Cref{thmdegen}, the main theorem on $\mP$-overdetermined networks. We note that Claim (iv), which follows easily from \Cref{thminj}(i), already implies Claims (i)-(iii)  by \cite[Thm 3.1]{feliu:dimension}. However, we present a self-contained proof.
 
\begin{thm}
\label{thmdegen}
Consider a $\mP$-overdetermined $(n,m,r)$ mass action network, i.e., a network which factors over a partition $\mP$ of $\{1, \ldots, m\}$, and such that $s_{\mP} < r$. Then
\begin{enumerate}[label=(\roman*)]
\item The set of $\k \in \mathbb{R}^m_+$ for which $\Ek \neq\emptyset$ has $m$-dimensional Lebesgue measure $0$.
\item For generic $\k \in \mathbb{R}^m_+$, $\Ek = \emptyset$. 
\item The network is strongly degenerate. 
\item For $\k \in \mathbb{R}^m_+$ such that $\Ek \neq \emptyset$, $\Ek$ has dimension at least $n-r+1$.
\end{enumerate}

\end{thm}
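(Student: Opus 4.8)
The plan is to prove (iv) first, as it is the strongest statement, and then deduce (i)--(iii) from it and from the structure of the solvability system. For (iv), note that by \Cref{thminj}(i) we have $\Ek = F_\k(\Ak \times \R^{n-s_{\mP}}_+)$, so $\Ek \neq \emptyset$ forces $\Ak \neq \emptyset$. Then \Cref{Ekdim} gives $\dim \Ek = \dim \Ak + n - s_{\mP}$; since a nonempty semialgebraic set has nonnegative dimension, $\dim \Ek \geq n - s_{\mP}$, and the hypothesis $s_{\mP} \leq r-1$ yields $\dim \Ek \geq n-r+1$.

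For (i), the idea is to describe the set of admissible rate constants explicitly and bound its dimension. Passing to logarithmic coordinates, and writing $\mH_{\mP} := \im[A\,|\,\bm{1}_\mP]$ and $\mathcal{Y} = \{\ln h(\alpha) : \alpha \in Y\}$ as in \eqref{eq:Y}, the remark on the geometry of \eqref{eqbasiclog1} shows that $\Ek \neq \emptyset$ if and only if the coset $\ln\k + \mH_{\mP}$ meets $\mathcal{Y}$; equivalently, $B := \{\ln\k : \Ek \neq \emptyset\} = \mathcal{Y} + \mH_{\mP}$, the image of the smooth map $Y \times \mH_{\mP} \to \R^m$, $(\alpha, w) \mapsto \ln h(\alpha) + w$. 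Its domain has dimension $\dim Y + \dim \mH_{\mP} = (m-r-p) + (s_{\mP}+p) = m - r + s_{\mP} \leq m-1$. The image of a $C^1$ map from a manifold of dimension strictly below $m$ into $\R^m$ has $m$-dimensional Lebesgue measure zero (cover the domain by countably many compacta, on each of which the map is Lipschitz, and use that Lipschitz maps do not increase Hausdorff dimension). Since $\exp\colon\R^m\to\R^m_+$ is a diffeomorphism, \Cref{lemH} (or simply that diffeomorphisms preserve null sets) shows $\{\k\in\R^m_+ : \Ek\neq\emptyset\} = \exp(B)$ is Lebesgue-null, which is (i).

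For (ii), observe that $\{(\alpha,\k)\in Y\times\R^m_+ : h(\alpha)^W = \k^W\}$ is semialgebraic (clear denominators in the solvability equations \eqref{eqsolv}), hence so is its projection $\mathcal{Z} := \{\k\in\R^m_+ : \Ek\neq\emptyset\}$, by the Tarski--Seidenberg theorem. A semialgebraic set of Lebesgue measure zero has dimension strictly less than $m$, so its Zariski closure is a proper algebraic subset $V\subsetneq\R^m$; then $\Ek=\emptyset$ on the nonempty Zariski-open set $\R^m_+\setminus V$, i.e.\ $\Ek$ is generically empty. Finally, (iii) follows from \Cref{prop:nondeg}, since a network is strongly degenerate if and only if $\Ek$ is generically empty; alternatively, the measure-zero conclusion of (i) already shows $\mathcal{Z}$ has empty Euclidean interior.

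The arguments are all short; the only point requiring care is the bookkeeping in (i)---correctly identifying $B$ with $\mathcal{Y}+\mH_{\mP}$ and checking that its dimension equals $m-r+s_{\mP}$---together with recording the standard fact that a smooth image of a lower-dimensional manifold is Lebesgue-null. I do not anticipate a substantial obstacle.
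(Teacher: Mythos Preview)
Your proposal is correct, and for parts (i), (ii), and (iv) it follows essentially the same route as the paper: the same dimension count $m-r+s_{\mP}<m$ for the set $\mathcal{Y}+\mH_{\mP}$ in logarithmic coordinates (your version is slightly more streamlined, skipping the orthogonal projection onto $\mH_{\mP}^\perp$ that the paper uses, but the content is the same), the same Tarski--Seidenberg argument for (ii), and the same observation that $\Ek$ contains a copy of $\R^{n-s_{\mP}}_+$ for (iv).

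The one substantive difference is in (iii). You deduce strong degeneracy from (ii) via \Cref{prop:nondeg}, which is a black-box equivalence quoted from \cite{feliu:dimension}. The paper explicitly acknowledges this route is valid (see the sentence immediately preceding its proof), but chooses instead to give a short self-contained algebraic argument: it verifies directly that $\mathrm{rank}(\Gamma D_v A)<r$ for all $v\in\mC$, splitting into the cases $\mathrm{rank}\,A<r$ versus $\mathrm{rank}\,A=r$ with $\mathrm{im}\,\bm{1}_{\mP}\cap\mathrm{im}\,A\neq\{0\}$. Your route is shorter and perfectly acceptable given the paper's stated preliminaries; the paper's route avoids dependence on the external reference and makes the mechanism of degeneracy explicit.
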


\vspace{-\baselineskip}\begin{proof}
(i) Let $p:=|\mP|$. Observe that $\mH_{\mP}:= \mathrm{im}\,[A\,|\,\bm{1}_\mP] = \ker W$ has dimension $s_{\mP}+p < r+p$, so its orthogonal complement $\mH_{\mP}^\perp$ has dimension $q:=m-s_{\mP}-p$. For brevity, let $k:= m-r-p$, so that by assumption $k<q$. Recall that $Y \subseteq \mathbb{R}^{k}$, the domain of $h$, is open and note that clearly $\hat{h}\colon Y \to \mathbb{R}^m$ defined by $\hat{h}(y) = \ln h(y)$ is a $C^1$ diffeomorphism onto its image. Then, by \Cref{lemH}, $\mathrm{dim}_\mathrm{H}\mathcal{Y} = k$, where recall $\mathcal{Y} = \hat{h}(Y)$. Let $\Pi\colon \mathcal{Y} \to \mH_{\mP}^\perp$ be the orthogonal projection from $\mathcal{Y}$ onto $\mH_{\mP}^\perp$. As $\Pi$ is clearly Lipschitz, and $\mathrm{dim}_\mathrm{H}$ is nonincreasing under Lipschitz maps \cite[Prop 3.3(a)]{falconerfractal}, $\mathrm{dim}_\mathrm{H}\Pi(\mathcal{Y}) \leq k < q$.

Now note that the set of rate constants for which solutions to \eqref{eqbasiclog1} exist is
\[
\{\k \in \mathbb{R}^m_{+} : (\ln \k+ \mH_{\mP}) \cap \mathcal{Y} \neq \emptyset\} = \{\k \in \mathbb{R}^m_{+}: (\ln \k+ \mH_{\mP}) \cap \mH_{\mP}^\perp \in \Pi(\mathcal{Y})\} = \mathrm{exp}(\Pi(\mathcal{Y}) \times \mH_{\mP})\,.
\]
We identify $\mH_{\mP}$ with $\mathbb{R}^{m-q}$ and note that $\mathrm{dim}_\mathrm{H}\left(\Pi(\mathcal{Y}) \times \mathbb{R}^{m-q}\right) = \mathrm{dim}_H\Pi(\mathcal{Y}) + m-q$ (see, for example, \cite{Tricot_1982}). 
As a consequence,
\[
\mathrm{dim}_\mathrm{H}\left(\mathrm{exp}(\Pi(\mathcal{Y}) \times \mathbb{R}^{m-q})\right) = \mathrm{dim}_\mathrm{H}\left(\Pi(\mathcal{Y}) \times \mathbb{R}^{m-q}\right) = \mathrm{dim}_H\Pi(\mathcal{Y}) + m-q<m\,,
\]
where the first equality follows from \Cref{lemH}, and the final inequality from above. Hence $\mathrm{exp}(\Pi(\mathcal{Y}) \times \mH_{\mP})$ has $m$-dimensional Hausdorff measure $0$, and consequently $m$-dimensional Lebesgue measure $0$. This confirms that the system admits positive equilibria for only an exceptional set of rate constants.

(ii)   As the set $\{(x,\k) \in \R^n_+ \times \R^m_+:\Gamma(\k \circ x^A) = 0\}$ is semialgebraic, the Tarski-Seidenberg theorem  \cite[Prop. 5.2.1]{RAG} tells us that so is its projection onto $\R^m_+$; namely, the set $\mathcal{Z}$ of rate constants for which positive equilibria exist is semialgebraic. As $\mathcal{Z}$ has zero Lebesgue measure by (i), it does not contain an open ball and consequently its semialgebraic dimension is strictly smaller than $m$. The same holds for its closure in $\R^m_+$ in the Zariski topology \cite[Prop. 2.8.2]{RAG}. Therefore $\mathcal{Z}$ is contained in a proper algebraic subvariety of $\R^m_+$ giving (ii). 

(iii)   The claim that all positive equilibria are degenerate would be an easy corollary of (ii) via the implicit function theorem. To show strong degeneracy, recall from \eqref{eq:strongdeg} that it is enough to show that $J':= \Gamma \,D_{v} \,A$ has rank strictly smaller than $r$ for each fixed $v\in \mC$. The condition $\mathrm{rank}\,[A\,|\,\bm{1}_\mP]<r+p$, holds if and only if either (1) $\mathrm{rank}\,A < r$; or (2) $\mathrm{rank}\,A := \hat{r} \geq r$ and $\mathrm{dim}(\mathrm{im}\bm{1}_\mP \cap \mathrm{im}\,A) > \hat{r} - r$. We consider each case:
\begin{enumerate}
\item[(1)] If $\mathrm{rank}\,A<r$, then clearly $\mathrm{rank}\,J'<r$, since $J'$ includes $A$ as a factor. 
\item[(2)] Assume that $\mathrm{rank}\,A:= \hat{r} \geq r$ and there exist linearly independent $\{w_1,\ldots,w_{\hat{r}-r+1}\} \subseteq \mathrm{im}\bm{1}_\mP \cap \mathrm{im}\,A$.
The first assumption gives the existence of linearly independent vectors $v_1,\dots,v_{n-\hat{r}}\in \ker A\subseteq \ker J'$. For each $i \in \{1, \ldots, \hat{r}-r+1\}$, choose $\hat{v}_i$ such that $w_i=A\hat{v}_i$. Clearly $\{\hat{v}_1,\ldots,\hat{v}_{\hat{r}-r+1}\}$ are linearly independent. The fact that $w_i \in \mathrm{im}\bm{1}_\mP$ and $\ker \Gamma$ factors over $\mP$ implies that $D_{v}w_i\in \ker \Gamma$ for each $i$. Thus $\hat{v}_i \in \ker J'\backslash\ker A$ for each $i$, and so $v_1,\dots,v_{n-\hat{r}}, \hat{v}_1, \ldots, \hat{v}_{\hat{r}-r+1}$ are linearly independent vectors in $\ker J'$. This implies that $\mathrm{dim}\,\ker J' \geq n-r+1$, and hence $\rank J' < r$ as claimed. 
\end{enumerate}
 
(iv) Let $\k$ be a rate constant such that positive equilibria exist, namely, \eqref{eqbasiclog1} admits a solution $(x_0, \lambda_0)$ for some $\alpha \in Y$. By \Cref{thminj}(i), the equilibrium set then includes the branch
\[
\{x_0 \circ \mu^{\wQ}\,:\,\mu \in \mathbb{R}^{n-s_{\mP}}_+\} \, ,
\]
which has dimension $n-s_{\mP} > n-r$ by \Cref{lemH} as the map $\mu \mapsto x_0 \circ \mu^{\wQ}$ is a smooth $C^1$ diffeomorphism onto its image. 
\end{proof}

\begin{rem}[Dimension of the equilibrium set in $\mP$-overdetermined networks]
\Cref{thmdegen}(iv) tells us that the positive equilibrium set $\Ek$ of a $\mP$-overdetermined network has dimension greater than $n-r$ when it is nonempty. In the special case $r=n$ (networks of full rank), this means that, when nonempty, $\Ek$ has positive dimension, and it immediately follows that positive equilibria are not isolated and hence {\em must} be degenerate, equivalently, strongly degenerate. In the case where $r < n$, when $\Ek$ is nonempty, $\dim \Ek + r > n$, and so we expect intersections of $\Ek$ and stoichiometric classes generally to have positive dimension when nonempty. \Cref{ex:degenisolated} demonstrates that a zero dimensional intersection is possible, though \Cref{thmdegen} tells us that even in this case equilibria on this class are strongly degenerate.
\end{rem}

\begin{ex}[A source deficient network with an isolated equilibrium on a stoichiometric class]
 \label[example]{ex:degenisolated}
 The following dynamically nontrivial $(3,4,2)$ network is source deficient, having rank $2$, but only two distinct sources:
\begin{center}
\begin{tikzpicture}[scale=1.2]
 \node at (0,0) {$\fX + 2\fY + \fZ$};
 \node[rotate=30] at (1.5,0.6) {$\ce{->[\k_1]}$};
 \node at (1.5,0.1) {$\ce{->[\k_2]}$};
 \node[rotate=-30] at (1.5,-0.4) {$\ce{->[\k_3]}$};
 \node at (2.8,0.8) {$2\fX + 2\fY$};
 \node at (3.0,0) {$2\fX + \fY + \fZ$};
 \node at (2.8,-0.8) {$2\fY + 2\fZ$};
 \node at (7,0.05) {$2\fX + 3\fY + 2\fZ \ce{->[\k_4]}  2\fX + 4\fY + \fZ\,.$};
\end{tikzpicture}

\end{center}
Stoichiometric classes have equation $x+y+z = K$ for $K\in \R_+$. Setting $\k_1=\k_2=\k_4=1$, $\k_3=2$ the reader may confirm that positive equilibria are the positive solutions to $x\, y\, z=1$. Hence there is a unique equilibrium on the stoichiometric class defined by $x+y+z=3$. By \Cref{thmdegen}, this equilibrium is strongly degenerate.
\end{ex}

It is not necessary for a network to be $\mP$-overdetermined for it to be strongly degenerate, as illustrated by the next example.
\begin{ex}[A strongly degenerate network which is not $\mP$-overdetermined]
 \label[example]{exdegen1}
Consider the dynamically nontrivial $(2,4,2)$ network
\[
2\fX \ce{->}  3\fX,\quad 2\fX \ce{->} \fX+\fY,\quad 0 \ce{->} \fY,\quad 2\fY \ce{->} \fY\,.
\]
The reader may immediately check that $\mP_f$ is the trivial partition  and the network is not source deficient. However, the network is degenerate; equivalently, as the network has full rank, it is strongly degenerate.
\end{ex}

\begin{rem}[$\mP$-overdetermined networks are $\mP_f$-overdetermined]
 \label[remark]{overdet}
By \Cref{lem:finer}, $s_{\mP_f}$ is minimal amongst $\{s_{\mP}\}$ for all partitions $\mP$ over which the network factors, as $\mP_f$ refines them all. It follows   that if a network is $\mP$-overdetermined for any partition $\mP$, then it also is $\mP_f$-overdetermined. The converse is not true as illustrated by the following example. 
\end{rem}

\begin{ex}[A $\mP_f$-overdetermined network that is not source deficient]
 \label[example]{ex:Pfoverdet}
Consider the following dynamically nontrivial $(2,4,2)$ network:
\[
\fX \ce{->} \fX + \fY, \quad 2\fX \ce{->} \fX + \fY, \quad 2\fY \ce{->} \fX + \fY, \quad \fY \ce{->} \mathsf{0}\,.
\]
The reader may easily confirm that the finest partition is $\mP_f=\{\{1,4\},\{2,3\}\}$ and the network is $\mP_f$-overdetermined as $s_{\mP_f} = 1 < 2=r$. However, the network is not source deficient as the source rank is $s  = 2 = r$. 
\end{ex}

Even if a network is not $\mP$-overdetermined, it may still occur that some subnetwork is overdetermined, giving rise to conclusions similar to those of \Cref{thmdegen}. Recall from \Cref{lem:degensubnet} that a network is strongly degenerate if it has a strongly partition subnetwork. This occurs, in particular, if some partition subnetwork is source deficient.

\begin{thm}\label{thm:linear_few}
 Assume that an $(n,m,r)$ network $\mN$ factors over a $p$-partition $\mP$ of $\{1, \ldots, m\}$. Choose $W$ as in \Cref{lem:Wstruct}, fix $i$, and let $w_i$ denote the number of rows of $W$ supported on $P_i$. If, for some $i$, $\dim \mC_i \leq w_i$, then $\mN$ does not admit positive equilibria for generic $\k$ in $\R^m_+$, and is strongly degenerate.
\end{thm}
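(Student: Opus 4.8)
The plan is to reduce the whole statement to \Cref{thmdegen} applied to the single partition subnetwork $\mN_i$. The only substantive step is to recognise that the hypothesis $\dim\mC_i\le w_i$ says exactly that $\mN_i$ is source deficient. For this I would use the structure of $W$ provided by \Cref{lem:Wstruct}: the rows of $W$ whose support is contained in $P_i$ are the $m_i-\ell_i$ trivial vectors of $\mB_1$ supported on $P_i$ together with the $\ell_i-s_i-1$ vectors of $\mB_2$ supported on $P_i$ (the vectors of $\mB_3$ meet at least two blocks, so none of them is supported on $P_i$), where $\ell_i$ is the number of distinct sources in $P_i$ and $s_i:=\rank[A^{(i)}\,|\,\bm{1}]-1$ is the source rank of $\mN_i$. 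Hence $w_i=(m_i-\ell_i)+(\ell_i-s_i-1)=m_i-s_i-1$; equivalently, by \Cref{remsolvsubsystems}, the rows of $W$ supported on $P_i$ form a solvability matrix of $\mN_i$, which has $m_i-s_i-1$ rows. Since $\dim\mC_i=\dim\ker\Gamma^{(i)}=m_i-r_i$ with $r_i:=\rank\Gamma^{(i)}$ (as $\mC_i$ is a nonempty relatively open subset of $\ker\Gamma^{(i)}$), the inequality $\dim\mC_i\le w_i$ is precisely $m_i-r_i\le m_i-s_i-1$, i.e.\ $s_i<r_i$, which is source deficiency of $\mN_i$.

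With this translation done, the rest is a short chain of results already in hand. The subnetwork $\mN_i$ is dynamically nontrivial, since its positive flux cone $\ker\Gamma^{(i)}\cap\R^{m_i}_+$ is $\mC_i$, which is nonempty because $\mC=\mC_1\oplus\cdots\oplus\mC_p\neq\emptyset$. Being source deficient, $\mN_i$ is $\mP$-overdetermined with respect to its trivial partition ($s_\mP=s_i<r_i$ there), so \Cref{thmdegen}(iii) gives that $\mN_i$ is strongly degenerate. Then \Cref{lem:degensubnet}, applied with $j=i$, upgrades this to strong degeneracy of $\mN$. Finally \Cref{prop:nondeg}, which identifies strong degeneracy with $\Ek$ being generically empty, delivers the remaining claim that $\mN$ has no positive equilibria for generic $\k\in\R^m_+$. (Alternatively one could obtain generic emptiness directly from \Cref{thmdegen}(ii) for $\mN_i$, lifting the exceptional subvariety of $\R^{m_i}_+$ to one of $\R^m_+$, but going through \Cref{prop:nondeg} is cleaner once strong degeneracy is established.)

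I do not anticipate a real obstacle: the mathematical content is the counting identity $w_i=m_i-s_i-1$ and the resulting reinterpretation of the hypothesis as ``$\mN_i$ is source deficient'', after which everything is a direct invocation of \Cref{thmdegen}, \Cref{lem:degensubnet} and \Cref{prop:nondeg}. The one point that needs care is exactly this count — in particular verifying that $w_i$, defined as the number of rows of $W$ supported on $P_i$, equals $m_i-s_i-1$ for every \Cref{lem:Wstruct}-type choice of $W$, so that the hypothesis is well posed — together with the small check that $\mN_i$ satisfies the dynamical-nontriviality condition implicit in \Cref{thmdegen}.
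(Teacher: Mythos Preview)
Your proposal is correct and follows essentially the same route as the paper: show that the hypothesis forces the partition subnetwork $\mN_i$ to be source deficient, invoke \Cref{thmdegen} for $\mN_i$, and lift via \Cref{lem:degensubnet}. The only cosmetic differences are that you compute $w_i=m_i-s_i-1$ exactly from the $\mB_1,\mB_2,\mB_3$ decomposition (the paper is content with the inequality $s_i\le m_i-w_i-1$, noting that the $w_i$ rows restrict to independent elements of $\ker[A^{(i)}\,|\,\bm{1}]^\top$), and that you obtain generic emptiness via \Cref{prop:nondeg} from strong degeneracy rather than directly from \Cref{thmdegen}(ii) for $\mN_i$.
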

\vspace{-\baselineskip}\begin{proof}
  Let $m_i = |P_i|$, and consider  the $(n,m_i, m_i-\dim \mC_i)$ partition subnetwork  $\mN_i$ corresponding to $P_i$. The source rank of $\mN_i$ satisfies
  \[ s_i:=\rank \, [ A^{(i)} | \bm{1}] - 1 \leq m_i-w_i - 1\, ,  \]
as, by assumption, $\ker\,[ A^{(i)} \, |\,  \bm{1}]^\top$ contains at least $w_i$ linearly independent vectors. As the rank of $\mN_i$ is $r_i:=m_i-\dim \mC_i$, we have 
  \[ s_i \leq m_i-w_i-1 \leq  m_i-\dim \mC_i-1 < r_i \, , \]
 i.e., $\mN_i$ is source deficient. As a consequence, by \Cref{thmdegen}(ii), $\mN_i$ does not admit positive equilibria for generic $\kappa^{(i)}$, and the same holds for $\mN$ for generic $\kappa$. By \Cref{thmdegen}(iii), $\mN_i$ is strongly degenerate, and the same holds for $\mN$ by \Cref{lem:degensubnet}. 
\end{proof}

We have the following immediate corollary of \Cref{thm:linear_few} which will be needed later. 

\begin{cor}\label[corollary]{cor:linear_few}
 Assume that an $(n,m,r)$ network factors over a $p$-partition $\mP$ of $\{1, \ldots, m\}$. If, for some $i\in \{1,\dots,p\}$, $\dim \mC_i \leq m_i-\ell_i$, where $\ell_i$ is the number of distinct sources among the reactions in $P_i$, then the network does not admit positive equilibria for generic $\k$ in $\R^m_+$, and is strongly degenerate. 
\end{cor}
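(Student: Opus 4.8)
The plan is to obtain this as an immediate consequence of \Cref{thm:linear_few}: I will show that the particular solvability matrix $W$ supplied by \Cref{lem:Wstruct} has, for every block $P_i$, at least $m_i-\ell_i$ rows supported on $P_i$, so that the hypothesis $\dim \mC_i \le m_i-\ell_i$ already forces $\dim \mC_i \le w_i$, and then \Cref{thm:linear_few} applies verbatim.

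The key step is to revisit the construction in the proof of \Cref{lem:Wstruct}. There, $W$ is assembled so that its rows form a basis $\mB_1 \sqcup \mB_2 \sqcup \mB_3$ of $\ker[A\,|\,\bm{1}_\mP]^\top$, and the block $\mB_1$ is obtained by choosing, for each $j$, exactly $m_j-\ell_j$ linearly independent ``trivial'' vectors supported on $P_j$: each of these has a single $+1$ and a single $-1$ on a pair of reactions in $P_j$ with a common source, and it lies in $\ker[A\,|\,\bm{1}_\mP]^\top$ precisely because equal sources in the same block give equal rows of $[A\,|\,\bm{1}_\mP]$. Consequently, fixing the index $i$ occurring in the hypothesis, the $m_i-\ell_i$ trivial vectors attached to $P_i$ all count towards $w_i$, the number of rows of $W$ supported on $P_i$; hence $w_i \ge m_i-\ell_i$. (One can in fact pin down $w_i = m_i-s_i-1$ by additionally counting the $\ell_i-s_i-1$ vectors of $\mB_2$ supported on $P_i$ and noting that $\mB_3$ contributes none, but only the inequality is needed here.)

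With this in hand, the hypothesis gives $\dim \mC_i \le m_i-\ell_i \le w_i$, which is exactly the condition required in \Cref{thm:linear_few} for this block and this (already prescribed) choice of $W$. Invoking that theorem then yields at once that the network admits no positive equilibria for generic $\k \in \R^m_+$ and is strongly degenerate, completing the argument. I do not foresee any genuine obstacle: the argument is pure bookkeeping, and the only point needing care is to check that ``supported on $P_i$'' is used consistently — namely, support contained in $P_i$ — in \Cref{thm:linear_few} and in \Cref{lem:Wstruct}, so that the trivial vectors localised inside $P_i$ legitimately contribute to $w_i$.
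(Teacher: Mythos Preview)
Your proposal is correct and follows exactly the route the paper intends: the paper states this as an ``immediate corollary'' of \Cref{thm:linear_few} without giving a proof, and the bookkeeping you spell out --- that the $m_i-\ell_i$ trivial vectors from $\mB_1$ in \Cref{lem:Wstruct} already give $w_i \ge m_i-\ell_i$ --- is precisely the omitted step.
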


\subsection{$\mP$-toric networks}\label{sec:triangular} 
If $s_{\mP}=r$, then the solvability system $h(\alpha)^W = \k^W$ has the same number of equations as unknowns. Recall that we refer to such networks as $\mP$-toric. This terminology is explained by the fact that if the solvability system has a finite number of solutions, then by \Cref{thminj}(i) $\mathcal{E}_\k$ is a finite union of toric sets, a notion referred to as {\bf local toricity} in \cite{feliu:toric}. 

\begin{rem}[$\mP$-toric networks automatically give alternative B\'ezout bounds]
  \label[remark]{remtoricBezout}
For any fixed $\alpha$, the system $Z[(h(\alpha)/\k)^{\wG}\circ \mu^{\wQ}] = K$ can be written as a system of $n-r$ {\em polynomial} equations in the $n-r$ parameters $\mu$. Thus the equations $h(\alpha)^W = \k^W$ and $Z[(h(\alpha)/\k)^{\wG}\circ \mu^{\wQ}] = K$ give an alternative B\'ezout bound regardless of whether $G$ is an integer matrix or not: this is the product of the bounds obtained from $h(\alpha)^W = \k^W$ considered as a (square) polynomial system in $\alpha$, and $Z(h(\alpha)/\k)^{\wG}\circ \mu^{\wQ} = K$ considered as a (square) polynomial system in $\mu$ and treating $\alpha$ as constant. The alternative B\'ezout bound in \Cref{ex342a} was of this form.
\end{rem}

Recall that $\mathcal{A}_\k$, the set of $\alpha \in Y$ solving $h(\alpha)^W = \k^W$, depends, via $W$, on the chosen partition $\mP$. The following result confirms that under the assumption of nondegeneracy, $\mathcal{A}_\k$ cannot be generically finite unless the network is $\mP$-toric. See \cite[Thm 5.3]{feliu:toric} for a more algebraic approach where toricity properties are extended to $\mathbb{C}$.

\begin{thm}[Finiteness of $\Ak$]
 \label{thm:finite}
 Consider an $(n,m,r)$ network  which factors over a partition $\mP$.
 \begin{enumerate}[label=(\roman*)]
     \item If $s_{\mP} \leq r$, then $\# \mathcal{A}_\k<+\infty$ generically for $\k \in \R^m_+$.
     \item If $s_{\mP} > r$, then $\# \mathcal{A}_\k<+\infty$ generically for $\k \in \R^m_+$ if and only if the network is strongly degenerate.
 \end{enumerate}
 In particular, if the network is not strongly degenerate, then $\#\mathcal{A}_\k<+\infty$ generically for $\k \in \R^m_+$ if and only if $s_{\mP}=r$. 
 \end{thm}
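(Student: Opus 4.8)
The plan is to reduce the statement to two facts already in hand: the dimension relation $\dim \Ek = \dim \Ak + n - s_\mP$ of \Cref{Ekdim}, valid whenever $\Ak$ (equivalently, by \Cref{thminj}(i), $\Ek$) is nonempty; and \Cref{prop:nondeg}, which characterises strong degeneracy by $\mathcal{Z} := \{\k \in \R^m_+ : \Ek \neq \emptyset\}$ having empty Euclidean interior, and states that a network which is not strongly degenerate has $\dim \Ek = n-r$ for generic $\k \in \mathcal{Z}$. Note that $\mathcal{Z}$ is semialgebraic (Tarski--Seidenberg, as in the proof of \Cref{thmdegen}(ii)) and, by \Cref{thminj}(i), $\mathcal{Z} = \{\k : \Ak \neq \emptyset\}$; in particular $\Ak = \emptyset$ for $\k \notin \mathcal{Z}$.

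First I would dispose of the strongly degenerate case. If the network is strongly degenerate, $\mathcal{Z}$ has empty Euclidean interior, hence, being semialgebraic, is contained in a proper algebraic subvariety of $\R^m_+$; so for generic $\k$ we have $\k \notin \mathcal{Z}$ and $\#\Ak = 0 < \infty$. This proves the ``if'' direction of (ii) and settles (i) in this case.

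Next, suppose the network is not strongly degenerate, so $\mathcal{Z}$ has nonempty interior. For $\k \in \mathcal{Z}$ we have $\Ak \neq \emptyset$, so \Cref{Ekdim} applies; combining with $\dim \Ek = n-r$ for generic $\k \in \mathcal{Z}$ gives $\dim \Ak = s_\mP - r$ for generic $\k \in \mathcal{Z}$. Since $\Ak \neq \emptyset$ forces $\dim \Ak \geq 0$, we conclude $s_\mP \geq r$ (this recovers the fact, see \Cref{thmdegen}(iii), that $\mP$-overdetermined networks are strongly degenerate). Now split on $s_\mP$. If $s_\mP = r$, then $\dim \Ak = 0$ for generic $\k \in \mathcal{Z}$, and a zero-dimensional semialgebraic set is finite, so $\#\Ak < \infty$ for generic $\k \in \mathcal{Z}$; combined with $\#\Ak = 0$ off $\mathcal{Z}$ — the exceptional set being $\mathcal{Z}$ with a Zariski-dense open subset of itself removed, which lies in a proper subvariety of $\R^m_+$ — we get $\#\Ak < \infty$ for generic $\k \in \R^m_+$. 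If instead $s_\mP > r$, then $\dim \Ak = s_\mP - r > 0$ for generic $\k \in \mathcal{Z}$, so $\Ak$ is infinite there; since $\mathcal{Z}$ has nonempty interior this holds on a nonempty Euclidean-open, hence Zariski-dense, subset of $\R^m_+$, so $\#\Ak < \infty$ fails generically.

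Assembling the pieces: for (i) with $s_\mP \leq r$, either the network is strongly degenerate (done above) or it is not, whence $s_\mP \geq r$ forces $s_\mP = r$ and finiteness follows; for (ii) the ``only if'' direction is the contrapositive of the last case above; and the ``in particular'' clause is immediate, since a non-strongly-degenerate network has $\dim \Ak = s_\mP - r \geq 0$ for generic $\k$, so $\#\Ak < \infty$ generically if and only if $s_\mP = r$. The only delicate point — more bookkeeping than genuine obstacle — is the passage between ``generic in $\mathcal{Z}$'' and ``generic in $\R^m_+$''; this is handled by the two standard observations that a semialgebraic set with empty Euclidean interior sits in a proper algebraic subvariety, and that a nonempty Euclidean-open subset of $\R^m_+$ is Zariski-dense.
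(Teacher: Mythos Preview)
Your proof is correct and follows essentially the same approach as the paper: both reduce to the dimension identity $\dim \Ek = \dim \Ak + n - s_\mP$ from \Cref{Ekdim} together with the characterisation of strong degeneracy and the generic dimension $\dim \Ek = n-r$ from \Cref{prop:nondeg}, and then split into the same cases. If anything, you are more explicit than the paper about the passage from ``generic in $\mathcal{Z}$'' to ``generic in $\R^m_+$'', which the paper treats tersely.
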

\vspace{-\baselineskip}\begin{proof}
We start with three observations: (1) If the network is strongly degenerate, then by \Cref{prop:nondeg} $\Ek=\emptyset$ generically for $\k \in \R^m_+$ and hence $\# \mathcal{A}_\k=0<+\infty$ generically; (2) $\dim \Ek=\rank(\wQ)=n-s_\mP$ if and only if $\Ak$ is finite and nonempty: this follows as $\dim \Ek = \dim \Ak  + n-s_\mP$ (see \Cref{Ekdim}). (3) If the network is not strongly degenerate, then the set $\mZ$ of rate constants for which $\Ek\neq \emptyset$ has nonempty interior, and $\dim \Ek=n-r$ for generic $\k\in \mZ$ by \Cref{prop:nondeg}. 

With this in place, (1) covers: the $\mP$-overdetermined case, namely, $s_{\mP} < r$ (recall from \Cref{thmdegen} that $\mP$-overdetermined networks are strongly degenerate); the case $s_\mP= r$ with the network strongly degenerate; and the reverse implication in (ii). 

If $s_\mP= r$ and the network is not strongly degenerate, then (3) tells us that   $\dim \Ek=n-s_\mP$ for generic $\k$ in $\mZ$, and hence  generically $\#\Ak<+\infty$ by (2). This completes the proof of (i). 

If $s_\mP> r$ and the network is not strongly degenerate then $\dim\Ek=n-r>n-s_\mP$ generically for $\k\in \mZ$ by (3), and hence by (2), $\Ak$ is infinite generically in $\mZ$. This completes proof of (ii).

The last statement is immediate from (i), (ii) and \Cref{thmdegen}.
\end{proof}

We say that a network \textbf{admits a monomial parameterisation} if there exists a nonnegative integer $\ell$ and a matrix $B\in \Z^{n \times \ell}$ such that for all $\k\in \R^m_+$, either $\Ek=\emptyset$ or $\Ek=\{ x^*_\k \circ \mu^B : \mu \in \R^{\ell}_+\}$ for some $x^*_\k \in \R^n_+$. We say the network \textbf{admits generically a monomial parameterisation}, if this holds for $\k$ in a nonempty Zariski open subset of $\R^m_+$. Observe that by definition, a strongly degenerate network is considered here to admit generically a monomial parameterisation.

Networks which admit a monomial parameterisation include deficiency zero networks \cite{Craciun2009toric,Feinberg1972complex}, networks satisfying the hypotheses of the deficiency one theorem \cite{Feinberg1995class}, networks with toric steady states \cite{Millan2012toricsteadystates}, and the networks studied in \cite{conradi2019total}. By \cite{Millan2012toricsteadystates}, see also \cite{Muller2015injectivity,sadeghimanesh:multi}, deciding upon multistationarity (the existence of multiple positive equilibria in a stoichiometric class), reduces to a simple check involving only the exponent matrix $B$ and $\Gamma$. 

Clearly, if $\# \mathcal{A}_\k\leq 1$ for all $\k \in \R^m_+$, then the network admits a monomial parameterisation with exponent matrix being the toricity matrix $\wQ$ by \Cref{thminj}(i). The next result shows that for a network that is not strongly degenerate, and working with the finest partition, the existence of a monomial parameterisation in fact implies that $\# \mathcal{A}_\k\leq 1$ for all $\k \in \R^m_+$.

\begin{thm}\label{thm:toric}
A $(n,m,r)$ network that is not strongly degenerate admits  a monomial parameterisation if and only if, with the finest partition $\mP_f$, $\# \mathcal{A}_\k\leq 1$ for all $\k \in \R^m_+$. Additionally, in this case the exponent matrix is any choice of $\wQ_{\mP_f}$.
\end{thm}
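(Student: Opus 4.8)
The plan is to prove the two implications separately. The "if" direction is essentially immediate: if $\#\mathcal{A}_\k\leq 1$ for all $\k\in\R^m_+$ when working with $\mP_f$, then by \Cref{thminj}(i) we have $\Ek = \emptyset$ when $\mathcal{A}_\k = \emptyset$, and $\Ek = (h(\alpha)/\k)^{\wG_{\mP_f}}\circ\{\mu^{\wQ_{\mP_f}} : \mu\in\R^{n-s_{\mP_f}}_+\}$ when $\mathcal{A}_\k = \{\alpha\}$; setting $x^*_\k := (h(\alpha)/\k)^{\wG_{\mP_f}}$ and $B := \wQ_{\mP_f}$ exhibits a monomial parameterisation (the exponents are integers because $Q$, hence $\wQ$, is chosen to be an integer matrix). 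So the content is entirely in the "only if" direction.

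For the forward direction, suppose the network is not strongly degenerate and admits a monomial parameterisation with exponent matrix $B\in\Z^{n\times\ell}$. First I would fix notation using the fundamental partition $\mP_f$, writing $p_f, s:=s_{\mP_f}, W:=W_{\mP_f}, \wQ := \wQ_{\mP_f}$, etc. The key dimension bookkeeping: since the network is not strongly degenerate, \Cref{thm:finite} (or rather its proof, via \Cref{Ekdim}) shows that $\#\mathcal{A}_\k<\infty$ generically iff $s=r$; and in that case, for generic $\k\in\mZ$, $\dim\Ek = n-r = n-s = \rank\wQ$, with $\Ek$ a disjoint union of $\#\mathcal{A}_\k$ toric pieces each of dimension $\rank\wQ$. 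Comparing with the hypothesised monomial parameterisation $\Ek = \{x^*_\k\circ\mu^B\}$, which is a single toric set of dimension $\rank B$, I expect to argue that generically $\#\mathcal{A}_\k = 1$, and then upgrade "generically" to "for all $\k$".

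The main obstacle, and the heart of the argument, is twofold. (a) One must first handle the possibility $s > r$: a monomial parameterisation forces $\Ek$ (when nonempty) to be connected and of dimension $\rank B$, but \Cref{thm:finite}(ii)/\Cref{prop:nondeg} say that when $s>r$ and the network is not strongly degenerate, $\dim\Ek = n-r > n-s$ generically, while $\mathcal{A}_\k$ is generically infinite — so one needs to see that an infinite, positive-dimensional $\mathcal{A}_\k$ cannot produce a single toric set of the "wrong" dimension. The cleanest route is probably: a monomial parameterisation of dimension $d$ forces $\dim\Ek = d$ for \emph{all} $\k$ with $\Ek\neq\emptyset$ (not just generic ones), which contradicts the generic-vs-special dimension drop unless the generic and special dimensions coincide; combined with the fact that $\Ek$ is always a union of $\rank\wQ$-dimensional toric pieces, this pins down $s = r$ and $\rank B = \rank\wQ = n-r$. (b) With $s=r$ established, I must rule out $\#\mathcal{A}_\k \geq 2$ for even a single $\k$. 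Here I would use that $\Ek$ is connected (being the image of the connected set $\R^\ell_+$ under a continuous map), whereas \Cref{thminj}(i) and \Cref{lem:kerG}(iv) show that for distinct $\alpha_1,\alpha_2\in\mathcal{A}_\k$ the corresponding toric pieces $(h(\alpha_j)/\k)^{\wG}\circ\{\mu^{\wQ}\}$ are disjoint; so if $\#\mathcal{A}_\k\geq 2$ then $\Ek$ is disconnected, a contradiction — provided the two pieces are genuinely distinct as sets and not merely indexed differently, which again follows from the injectivity statement in \Cref{thminj}(i). This connectedness argument simultaneously handles "for all $\k$" rather than just generic $\k$.

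Finally, for the last sentence of the theorem, once $\#\mathcal{A}_\k\leq 1$ for all $\k$ is established, the "if" direction's construction shows $\wQ_{\mP_f}$ serves as the exponent matrix; and since any two choices of $\wQ_{\mP_f}$ have the same column span (both being the first-$n$-rows of a basis of $\ker[A\,|\,\bm1_{\mP_f}]$), the monomial parameterisation with exponent matrix $\wQ_{\mP_f}$ holds for \emph{any} such choice, which is what is asserted. The one routine check I would flag but not belabour is that switching between two valid choices of $\wQ_{\mP_f}$ only reparameterises $\mu$, using $\ker\wQ = \{0\}$ from \Cref{subsec:matrices}.
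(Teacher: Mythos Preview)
Your reverse implication is fine and matches the paper. The forward implication, however, is where your approach diverges from the paper's and where a genuine gap appears.

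The paper's argument for the forward direction is quite different from yours: it simply invokes an external result (\cite[Sec.~4]{feliu:toric}) asserting that for a non--strongly-degenerate network, the exponent matrix $B$ of \emph{any} monomial parameterisation necessarily has column span equal to $\im\wQ_{\mP_f}$. Once $\im B=\im\wQ_{\mP_f}$ is known, $\ln\Ek$ is a single coset of $\im\wQ_{\mP_f}$, while \Cref{thminj}(i) presents it as a \emph{disjoint} union over $\alpha\in\Ak$ of such cosets; hence $\#\Ak\leq 1$.

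Your step~(a) is meant to replace this citation, but the argument you sketch does not establish $s_{\mP_f}=r$. From the monomial parameterisation you correctly get $\dim\Ek=\rank B$ for every $\k$ with $\Ek\neq\emptyset$, and combining with \Cref{prop:nondeg}(i) gives $\rank B=n-r$. You also get $\im\wQ\subseteq\im B$. But that is all: an $(n-r)$-dimensional coset of $\im B$ can perfectly well be a disjoint union of $(n-s)$-dimensional cosets of $\im\wQ$ whenever $s>r$, indexed by $\im B/\im\wQ\cong\R^{s-r}$. There is no ``generic-vs-special dimension drop'' to contradict here; the constancy of $\dim\Ek$ only forces $\dim\Ak=s-r$ for all relevant $\k$, which is consistent with $s>r$. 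What you would need is an argument that a semialgebraic $\Ak\subseteq Y$ cannot biject (via the smooth map $\alpha\mapsto[\wG\ln(h(\alpha)/\k)]\bmod\im\wQ$) onto all of $\R^{s-r}$, and you have not supplied one.

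Your step~(b) is sound \emph{once} $s_{\mP_f}=r$: then $\dim\Ak=0$ forces $\Ak$ finite, and finitely many disjoint cosets of $\exp(\im\wQ)$ are closed in $\R^n_+$, so $\#\Ak\geq 2$ would disconnect $\Ek$, contradicting connectedness of the monomial image. But without~(a), connectedness alone does not exclude an infinite, connected $\Ak$. The missing ingredient---that $\im B=\im\wQ_{\mP_f}$---is exactly the content of the cited result, and is not recoverable from your dimension bookkeeping.
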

\vspace{-\baselineskip}\begin{proof}
The reverse implication follows from \Cref{thminj}. For the forward implication, as the network is not strongly degenerate, \cite[Sec 4]{feliu:toric} gives that if the network admits a monomial parameterisation, then necessarily the exponent matrix $B$ consists of the first $n$ rows of any matrix whose columns form a basis of $\ker\,[ A \, | \, \bm{1}_{\mP_f}]$. Hence the result follows again by \Cref{thminj}. 
\end{proof}

\begin{rem}
When $\mP_f$ is the trivial partition, \cite[Theorem 15]{muller_deshpande} gives conditions for when $\Ak$ has exactly one element for all $\k$. By \Cref{thm:toric}, under these conditions, the network admits a monomial parameterisation for all $\k$. In \cite[Section 3.1]{regensburger:gale}, 
it is shown that for the trivial case where $Y$ consists  of one point and $W$ is empty, that is, $m=s_{\mP} + p = r+p$, explicit binomial equations for $\Ek$ can be easily obtained.
\end{rem}

\begin{thm}\label{thm:toric2}
Consider a $\mP_f$-toric, $(n,m,r)$ network with exactly $r+p_f$ distinct sources. Suppose further that each pair of repeated sources occurs for reactions in the same block of $\mP_f$. Then the solvability system is equivalent to a (square) linear system, and hence the network admits generically a monomial parameterisation. 
\end{thm}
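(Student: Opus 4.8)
The plan is to show that, with the fundamental partition, the solvability matrix $W$ may be chosen to consist only of ``trivial'' rows, so that the solvability system reduces to a square linear system; the monomial parameterisation then drops out of \Cref{thminj}.

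First I would unpack the two hypotheses in terms of \Cref{lem:Wstruct}. The condition that each pair of reactions sharing a source lies in the same block of $\mP_f$ means that every distinct source occurs only in reactions of a single block; hence $\ell := \sum_i \ell_i$, the total of the block-by-block counts of distinct sources, equals the total number of distinct sources of the network, namely $r + p_f$. Next I would invoke \Cref{lem:Wstruct} with $\mP = \mP_f$, producing a basis $\mathcal{B} = \mathcal{B}_1 \sqcup \mathcal{B}_2 \sqcup \mathcal{B}_3$ of $\ker[A\,|\,\bm{1}_{\mP_f}]^\top$, and compute $|\mathcal{B}_2| + |\mathcal{B}_3| = (\ell - s - p_f) + (s - s_{\mP_f}) = \ell - p_f - s_{\mP_f}$. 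Since the network is $\mP_f$-toric, $s_{\mP_f} = r$, and combining with $\ell = r + p_f$ gives $|\mathcal{B}_2| + |\mathcal{B}_3| = 0$. As both cardinalities are nonnegative (this is built into \Cref{lem:Wstruct}), we get $\mathcal{B}_2 = \mathcal{B}_3 = \emptyset$, so $W$ consists precisely of the $m - \ell = m - r - p_f$ rows of $\mathcal{B}_1$, each of the form $e_j - e_k$ with $j,k$ a pair of equivalent indices. This is the crux of the argument and the step I expect to be the main obstacle; what follows is essentially bookkeeping.

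With such a $W$, each solvability equation reads $h_j(\alpha)/h_k(\alpha) = \k_j/\k_k$; clearing the positive denominator $h_k(\alpha)$ — a manipulation preserving (non)degeneracy of solutions by \Cref{lem:equiveq} — turns it into $\k_k h_j(\alpha) - \k_j h_k(\alpha) = 0$, which is linear in $\alpha$ because $h$ is affine. Hence the solvability system is equivalent to a system of $m - r - p_f$ linear equations in the $m - r - p_f$ coordinates of $\alpha$, that is, a square linear system.

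Finally I would deduce the monomial parameterisation. By \Cref{thm:finite}(i) (applicable since $s_{\mP_f} = r \le r$), $\Ak$ is finite for generic $\k \in \R^m_+$; but the solution set of a linear system, if finite and nonempty, is a single point, since a positive-dimensional affine solution set would meet the open polytope $Y$ in an infinite set. So $\#\Ak \le 1$ for generic $\k$, and \Cref{thminj}(i) then shows that, for such $\k$, either $\Ek = \emptyset$ or $\Ek = (h(\alpha^*)/\k)^{\wG_{\mP_f}} \circ \{\mu^{\wQ_{\mP_f}} : \mu \in \R^{n-r}_+\}$, where $\{\alpha^*\} = \Ak$. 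Taking $B := \wQ_{\mP_f} \in \Z^{n\times(n-r)}$ (recall $Q$ may be chosen to have integer entries) exhibits, on a nonempty Zariski-open set of rate constants, the required monomial parameterisation.
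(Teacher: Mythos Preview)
Your proof is correct and takes essentially the same approach as the paper: both show that $W$ can be chosen to consist only of trivial rows (the paper asserts this directly from the hypotheses, while you make the count explicit via \Cref{lem:Wstruct}), then clear denominators to get a square linear system, and finish by combining generic finiteness of $\Ak$ from \Cref{thm:finite} with \Cref{thminj}(i). Your version simply fills in the counting detail that the paper leaves implicit.
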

\vspace{-\baselineskip}\begin{proof}
As the network is $\mP_f$-toric, $\mathrm{rank}\,W = m-p_f-r$. The assumptions of the theorem imply that $W$, chosen as in \Cref{lem:Wstruct}, has rows each corresponding to a pair of repeated sources, and hence with two nonzero entries ($1$ and $-1$). After clearing denominators, the solvability system $h(\alpha)^W = \k^W$ is a linear system. As $\mathcal{A}_{\k}$ is generically finite by \Cref{thm:finite}, it must, generically, have cardinality $0$ or $1$. The result now follows from \Cref{thminj}(i).
\end{proof}

In the context of \Cref{thm:toric2}, deciding whether a network admits a monomial parameterisation reduces to investigating the parametric determinant of the coefficient matrix of the solvability system. \Cref{thm:toric,thm:toric2} are illustrated in the following example. 
 
\color{black}
\begin{ex}[Dual phosphorylation--dephosphorylation networks]
\label[example]{ex:2site}
 We consider first the irreversible dual phosphorylation network (adapted e.g. from \cite{Wang.2008aa}), which is a $(9,8,6)$ network:
\begin{align*}
\fX_1+\fX_3 \ce{->[\k_1]} \fX_6 \ce{->[\k_2]} \fX_1+ \fX_4 & \ce{->[\k_5]} \fX_8 \ce{->[\k_6]} \fX_1+ \fX_5 \\
\fX_2+\fX_5 \ce{->[\k_7]} \fX_9 \ce{->[\k_8]} \fX_2+\fX_4 &\ce{->[\k_3]} \fX_7 \ce{->[\k_4]} \fX_2+ \fX_3 \, . 
\end{align*}
We compute that $\ker \Gamma$ has the basis $\{(1,1,1,1,0,0,0,0),(0,0,0,0,1,1,1,1)\}$, and hence we deduce that $\mP_f= \{ \{1,2,3,4\} , \{5,6,7,8\}\}$, so that $m=r+p_f$. 
Simple computations give $\rank [A\,|\,\bm{1}_{\mP_f}]=8$, hence $s_{\mP_f}=6=r$. A computation using the criterion in \eqref{eq:nondeg} shows that the network is nondegenerate and hence \Cref{thm:finite} tells us that $\mathcal{A}_\k$ is generically finite. In fact, \Cref{cor:special}(iii) applies, $Y$ consists of one point and, by \Cref{thm:toric}, the network admits a monomial parameterisation. 
 
To find it, we need to find a matrix $\wQ$ and $v = \mathrm{im}\,h$. We can take $v=\{(1,1,1,1,1,1,1,1)\}$. We choose
{\small \[G = \begin{pmatrix}
 0 & 0 & 0 & 0 & 0 & 0 & 0 & 0 
\\
 0 & 0 & 0 & 0 & 0 & 0 & 0 & 0 
\\
 0 & 0 & 0 & 0 & 0 & 0 & 0 & 0 
\\
 -1 & 0 & 1 & 0 & 0 & 0 & 0 & 0 
\\
 -1 & 0 & 1 & 0 & -1 & 0 & 1 & 0 
\\
 -1 & 1 & 0 & 0 & 0 & 0 & 0 & 0 
\\
 -1 & 0 & 0 & 1 & 0 & 0 & 0 & 0 
\\
 -1 & 0 & 1 & 0 & -1 & 1 & 0 & 0 
\\
 -1 & 0 & 1 & 0 & -1 & 0 & 0 & 1 
\\
 1 & 0 & 0 & 0 & 0 & 0 & 0 & 0 
\\
 1 & 0 & -1 & 0 & 1 & 0 & 0 & 0 
\end{pmatrix}, \qquad Q= 
\begin{pmatrix}
1 & 0 & 0 
\\
 0 & 1 & 0 
\\
 0 & 0 & 1 
\\
 1 & -1 & 1 
\\
 2 & -2 & 1 
\\
 1 & 0 & 1 
\\
 1 & 0 & 1 
\\
 2 & -1 & 1 
\\
 2 & -1 & 1 
\\
 -1 & 0 & -1 
\\
 -2 & 1 & -1 
\end{pmatrix}.
\]}%
Removing the last two rows of $G$ and $Q$ gives $\wG$ and $\wQ$, and in particular 
\begin{align*}
  (v / \k)^{\wG} &= \left( 1, 1, 1, \tfrac{\k_{1}}{\k_{3}}, \tfrac{\k_{1} \k_{5}}{\k_{3} \k_{7}}, 
\tfrac{\k_{1}}{\k_{2}}, \tfrac{\k_{1}}{\k_{4}}, 
\tfrac{\k_{1} \k_{5}}{\k_{3} \k_{6}}, \tfrac{\k_{1} \k_{5}}{\k_{3} \k_{8}}\right) 
\\ 
\mu^{\wQ} &= \left(\mu_{1}, \mu_{2}, \mu_{3}, 
\tfrac{ \mu_{1} \mu_{3}}{ \mu_{2}}, 
\tfrac{ \mu_{1}^{2} \mu_{3}}{ \mu_{2}^{2}}, 
 \mu_{1} \mu_{3}, 
 \mu_{1} \mu_{3}, 
\tfrac{ \mu_{1}^{2} \mu_{3}}{ \mu_{2}}, 
\tfrac{ \mu_{1}^{2} \mu_{3}}{ \mu_{2}}\right).
\end{align*} 
Hence, 
\[\mathcal{E}_\k = \left\{ \left(\mu_{1}, \mu_{2}, \mu_{3}, 
\frac{\k_{1} \mu_{1} \mu_{3}}{\k_{3} \mu_{2}}, 
\frac{\k_{1} \k_{5} \mu_{1}^{2} \mu_{3}}{\k_{3} \k_{7} \mu_{2}^{2}}, 
\frac{\k_{1} \mu_{1} \mu_{3}}{\k_{2}}, 
\frac{\k_{1} \mu_{1} \mu_{3}}{\k_{4}}, 
\frac{\k_{1} \k_{5} \mu_{1}^{2} \mu_{3}}{\k_{3} \k_{6} \mu_{2}}, 
\frac{\k_{1} \k_{5} \mu_{1}^{2} \mu_{3}}{\k_{3} \k_{8} \mu_{2}}\right) : \mu \in \R^3_{+} \right\}, \]
 which is the usual parameterisation of the set of positive equilibria for this network.

 \medskip 
 Consider now the more common version of the network with reversible reactions as follows:
 \begin{align*}
\fX_1+\fX_3 \ce{<=>[\k_1][\k_9]} \fX_6 \ce{->[\k_2]} \fX_1+ \fX_4 & \ce{<=>[\k_5][\k_{10}]} \fX_8 \ce{->[\k_6]} \fX_1+ \fX_5 \\
\fX_2+\fX_5 \ce{<=>[\k_7][\k_{11}]} \fX_9 \ce{->[\k_8]} \fX_2+\fX_4 &\ce{<=>[\k_3][\k_{12}]} \fX_7 \ce{->[\k_4]} \fX_2+ \fX_3 \, . 
\end{align*}
We still have that the network is nondegenerate
and $s_{\mP_f}=6=r$. We no longer have that $Y$ is a point, but there are $r+p_f=8$ distinct sources, and reactions with the same source belong to the same block of $\mP_f$. 
This can be seen as 
\begin{align*}
 & \{(1,1,1,1,0,0,0,0,0,0,0,0),(1,0,0,0,0,0,0,0,1,0,0,0),(0,0,1,0,0,0,0,0,0,0,0,1),\\ & \qquad (0,0,0,0,1,1,1,1,0,0,0,0),(0,0,0,0,1,0,0,0,0,1,0,0),(0,0,0,0,0,0,1,0,0,0,1,0)\}
\end{align*} is a basis of $\ker \Gamma$ yielding  $\mP_f= \{ \{1,2,3,4,9,12\} , \{5,6,7,8,10,11\}\}$. 
Hence, \Cref{thm:toric2} applies and the network admits generically a monomial parameterisation. 

As in the proof of \Cref{thm:toric2}, the solvability system gives rise to a system of linear equations, and hence to confirm that the network admits a monomial parameterisation, we can simply inspect the linear system. Specifically, we can choose
\[
Y:= \{(\alpha_1,\alpha_2,\alpha_3,\alpha_4) \in\mathbb{R}^4_+\,:\, \alpha_1+\alpha_2 <1,\, \alpha_3+\alpha_4 < 1\}
\]
and the solvability system is then
\[ \tfrac{\alpha_1}{\alpha_2} = \tfrac{\k_2}{\k_9}, \qquad\tfrac{\alpha_1}{1-\alpha_1-\alpha_2} = \tfrac{\k_4}{\k_{12}}, \qquad \tfrac{\alpha_3}{\alpha_4} = \tfrac{\k_6}{\k_{10}}, \qquad \tfrac{\alpha_3}{1-\alpha_3-\alpha_4} = \tfrac{\k_8}{\k_{11}}\,.\]
This is easily checked to have exactly one (nondegenerate) solution in $Y$ for all $\k \in \R^{12}_+$, i.e., $\#\Ak = 1$ for all $\k \in \R^{12}_+$.
\end{ex}

\begin{rem}
\label[remark]{remodebase}
 Of the $31$ nondegenerate networks in the database ODEbase \cite{odebase2022} reported in \cite{feliu:toric} to satisfy $s_{\mP_f}=r$  (i.e., to be $\mP_f$-toric in the terminology of this paper), $14$ satisfy that, with the finest partition, $W$ is empty and hence \Cref{thm:toric} applies, giving the existence of a monomial parameterisation; and $13$ fall into the setting of \Cref{thm:toric2} and admit generically a monomial parameterisation. Only $4$ of the networks are therefore not covered by this simple analysis. In \cite{feliu:toric} it is shown they also admit generically a monomial parameterisation. 
\end{rem}

\begin{rem}
In \cite[Section 6]{feliu:toric}, once it has been established that a network is nondegenerate and $s_\mP=r$, the generic existence of a monomial parameterisation is investigated via the \emph{coset counting system} $\Gamma(\k\circ x^A)=0$, $\wQ x -b=0 $ for any fixed $b\in \wQ(\R^{n-s_\mP}_+)$. This provides an alternative approach to studying the cardinality of $\Ak$. For some networks, one approach might be easier to pursue than the other. For example, \Cref{thm:toric,thm:toric2} are straightforward to apply, while the coset counting system admits simple analyses such as the injectivity test \cite{Muller2015injectivity}. 
\end{rem}

\begin{rem}[Bounds from the toricity matrix alone for networks admitting a monomial parameterisation]
 \label[remark]{remmonopar}
When $\mathcal{A}_\k$ is a singleton $\{\alpha_\k\}$ whenever nonempty, finding the positive equilibria in the stoichiometric class corresponding to total amount $K$ reduces to solving 
\begin{equation}\label{eq:Z}
 Z[ (h(\alpha_\k)/\k)^{\wG}\circ \mu^{\wQ}] = K\,, \qquad \mu \in \mathbb{R}_+^{n-s_\mP}.
\end{equation} 
Recall from \Cref{thm:degensolvability}, that solutions to \eqref{eq:Z} are nondegenerate if and only if they correspond to nondegenerate equilibria. If the network is nondegenerate (hence $r=s_\mP$ by \Cref{thm:finite}), \eqref{eq:Z} is a square system that will have a finite number of solutions for $(\k, K)$ in a set with nonempty Euclidean interior. In this case, we get alternative BKK and Bézout bounds depending only on $Z$ and $\wQ$. In particular, the normalised volume of the convex hull of the rows of $\wQ$ provides a new upper bound on the cardinality of positive, nondegenerate equilibria on any stoichiometric class.
\end{rem}

We build on \Cref{remmonopar} in the following theorem. 
\begin{thm}\label{thm:bounds_toric}
Consider an $(n,m,r)$ network which factors over a partition $\mP$, and suppose that $\# \mathcal{A}_\k\leq 1$ for all $\k$. 
Let $I\subseteq \{1,\dots,n\}$ be the set of indices that are in the support of at least one row of $Z$, and let $q$ be the cardinality of $\{ \wQ_i : i\in I \}\cup \{0\}$, where $\wQ_i$ is the $i$th row of $\wQ$. Then:
\begin{enumerate}[label=(\roman*)]
\item The normalised volume of the convex hull of $\{ \wQ_i : i\in I \}\cup \{0\}$ is an upper bound on the number of 
positive nondegenerate equilibria on any stoichiometric class. 
 \item If $q= n-r+1$, then for each $\k$, and on each stoichiometric class, the network has 
 at most one positive nondegenerate  equilibrium. 
 \item If $q= n-r+2$, then for each $\k$, and on each stoichiometric class, the network has at most $n-r+1$ 
  positive  nondegenerate equilibria. 
\end{enumerate}
\end{thm}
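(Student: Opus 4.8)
The plan is to freeze $\alpha$ at the unique point of $\mathcal{A}_\k$ (when it is nonempty), reduce the count of nondegenerate equilibria on a stoichiometric class to a count of nondegenerate positive solutions of a square system in $\mu$ whose monomials are controlled by the rows of $\wQ$, and then feed this into the BKK theorem and \Cref{prop:fewnomials}. First I would reduce to the case where $\mN$ is not strongly degenerate: otherwise every positive equilibrium is strongly degenerate, so there are no positive nondegenerate equilibria and all three conclusions hold vacuously (the normalised volume in (i) is $\geq 0$). Assuming $\mN$ is not strongly degenerate, the hypothesis $\#\mathcal{A}_\k\leq 1$ makes $\mathcal{A}_\k$ generically finite, so $s_\mP=r$ by \Cref{thm:finite}; in particular $n-s_\mP=n-r$, so the alternative system \eqref{gensol} is square and $\wQ$ has $n-r$ columns. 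Fix $\k\in\R^m_+$. If $\mathcal{A}_\k=\emptyset$ then $\Ek=\emptyset$ by \Cref{thminj}(i) and there is nothing to prove; otherwise write $\mathcal{A}_\k=\{\alpha_\k\}$ and put $c:=(h(\alpha_\k)/\k)^{\wG}\in\R^n_+$, a fixed positive vector whatever the entries of $\wG$.

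By \Cref{thm:degensolvability}, $F_\k$ is a bijection from the zero set of \eqref{gensol} in $Y\times\R^{n-r}_+$ onto $\Ek\cap\{Zx=K\}$ that sends nondegenerate solutions to nondegenerate equilibria, and every solution of \eqref{gensol} has $\alpha$-component $\alpha_\k$. The Jacobian of \eqref{gensol} at $(\alpha_\k,\mu)$ is block lower-triangular for the splitting $(\alpha,\mu)$ — the first block of equations does not involve $\mu$ — with $(2,2)$-block equal to the Jacobian of $\Phi_K(\mu):=Z(c\circ\mu^{\wQ})-K$, so a nondegenerate solution of \eqref{gensol} restricts to a nondegenerate positive solution of $\Phi_K=0$. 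It therefore suffices to bound, uniformly in $K$, the number of nondegenerate solutions $\mu\in\R^{n-r}_+$ of the square system $\Phi_K(\mu)=0$. Now I would read off its monomials: the $j$th component is $\sum_{i=1}^{n}Z_{ji}c_i\,\mu^{\wQ_i}-K_j$, and since $Z_{ji}=0$ whenever $i\notin I$, the monomials occurring across all $n-r$ components lie in $\{\mu^{\wQ_i}:i\in I\}\cup\{\mu^{0}\}$, a set of at most $q$ monomials. Multiplying through by a common monomial to clear denominators — this changes neither the positive solutions and their nondegeneracy (cf.\ \Cref{lem:equiveq}) nor the number of monomials, and translates every Newton polytope equally — we may treat $\Phi_K=0$ as a genuine polynomial system.

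For (i), each of the $n-r$ Newton polytopes of $\Phi_K$ lies in a common translate of $\mathrm{conv}(\{\wQ_i:i\in I\}\cup\{0\})$, so by monotonicity of mixed volume and the BKK theorem the number of isolated zeros of $\Phi_K$ in $(\mathbb{C}^*)^{n-r}$, counted with multiplicity, is at most the normalised volume (with respect to $\mathbb{Z}^{n-r}$) of that polytope; as nondegenerate positive solutions are simple, hence isolated in $(\mathbb{C}^*)^{n-r}$, this bounds their number, and translation-invariance of normalised volume makes the bound independent of the denominator-clearing. For (ii) and (iii) I would apply \Cref{prop:fewnomials} to $\Phi_K=0$ with its ``$n$'' taken to be $n-r$ and $k=q-(n-r)-1\in\{0,1\}$: case (iii) ($k=1$) gives at once at most $(n-r)+1$ nondegenerate positive solutions, and case (ii) ($k=0$) gives at most one, since in the ``infinitely many positive solutions'' alternative of \Cref{prop:fewnomials}(ii) the Jacobian $J_{\Phi_K}$ factors through a singular linear map at every solution — the exponents are then affinely dependent, or the induced linear system is degenerate — so no such solution is nondegenerate. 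If the actual number of monomials is strictly below $q$ one lands in a case with smaller $k$, or in (i) with a lower-dimensional polytope, giving an even smaller bound, and the cases $n-r\in\{0,1\}$ are handled directly (the latter by Descartes' rule of signs). Combining with the reduction of the previous paragraph proves (i)--(iii).

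I expect the main obstacle to be the (non)degeneracy bookkeeping: pinning down, via \Cref{thm:degensolvability} and the block-triangular structure of the Jacobian of \eqref{gensol}, that a nondegenerate positive equilibrium on a stoichiometric class produces a nondegenerate positive solution of the reduced $\mu$-system $\Phi_K=0$; and, on the fewnomial side, the correct handling of the ``infinitely many solutions'' alternative in \Cref{prop:fewnomials}(ii), where one must verify that such solution sets contain no nondegenerate point. The remaining steps — extracting the $q$ monomials from the rows of $Z$ and $\wQ$, clearing denominators, and invoking the BKK and fewnomial bounds — are routine.
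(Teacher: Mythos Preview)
Your proposal is correct and follows essentially the same route as the paper: reduce to the non--strongly-degenerate case, use \Cref{thm:finite} to get $s_\mP=r$ so that \eqref{gensol} is square, freeze $\alpha$ at the unique point of $\Ak$, and apply the BKK theorem for (i) and \Cref{prop:fewnomials} for (ii)--(iii) to the resulting $\mu$-system, transferring nondegeneracy via \Cref{thm:degensolvability}. Your block-triangular Jacobian argument making explicit why a nondegenerate solution of \eqref{gensol} yields a nondegenerate solution of $\Phi_K=0$ is a useful elaboration of what the paper leaves implicit in its appeal to \Cref{remmonopar}.
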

\vspace{-\baselineskip}\begin{proof}
As $\#\Ak\leq 1$ for all $\k$, positive equilibria in stoichiometric classes are given by \eqref{eq:Z}. A (Laurent) monomial in $\mu^{\wQ}$ appears in \eqref{eq:Z} if it corresponds to a nonzero column of $Z$ and this fact is independent of the choice of $Z$.

If the network is strongly degenerate, all claims hold trivially. If the network is not strongly degenerate, then by \Cref{thm:finite}, $s_\mP=r$ and hence \eqref{eq:Z} is a square system.  The first claim now follows from the BKK theorem, and the other two claims follow from \Cref{prop:fewnomials}. In all three cases we use \Cref{thm:degensolvability} as in \Cref{remmonopar} to guarantee that nondegenerate (resp., degenerate) solutions of \eqref{eq:Z} correspond to nondegenerate (resp., degenerate) equilibria.
\end{proof}

Observe that any network such that $r=m-p$, and which is not $\mP$-overdetermined, falls into the setting of \Cref{thm:bounds_toric} by \Cref{cor:special}(iii). In particular, an $(n,m,m-1)$ network can only factor over the trivial partition, and so such a network which is not source deficient falls into this setting. Also in this setting are $(n,n+p-1,n-1)$ networks which are not $\mP$-overdetermined. These are inspected next.

\begin{thm}
 \label{thmnnn1}
Consider an $(n,m,n-1)$ network which is not $\mP_f$-overdetermined and satisfies $n\leq m\leq n+p_f-1$. Then $p_f=m-n+1$, and the network is $\mP_f$-toric and has $\mP_f$-independent sources. Further:
\begin{itemize}
\item[(i)] The network admits no more than $n$ positive nondegenerate equilibria on any stoichiometric class.
\item[(ii)] If there exists $\omega \in \ker \Gamma^\top\cap \mathbb{R}^n_{\geq 0}\backslash\{0\}$ (in particular, if stoichiometric classes are bounded), the network admits no more than two positive nondegenerate equilibria on any stoichiometric class.
\item[(iii)] If there exists $\omega \in \ker \Gamma^\top\cap \mathbb{R}^n_{\geq 0}\backslash\{0\}$, and we can choose $\wQ_{\mP_f}$ (which is a column vector) such that whenever $\omega_i\neq 0$, $(\wQ_{\mP_f})_i\geq 0$, then the network admits at most one positive nondegenerate equilibrium on any stoichiometric class. 
\end{itemize}
\end{thm}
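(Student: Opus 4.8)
The plan is to first extract the arithmetic forced by the hypotheses, then invoke \Cref{cor:special}(iii) to obtain a toric parameterisation of $\mathcal{E}_\k$, and finally reduce the count of equilibria on a stoichiometric class to the number of positive roots of a single univariate equation, where elementary real‑analysis finishes each of (i)--(iii). For the structural step: since $\mN$ is dynamically nontrivial and factors over $\mP_f$ we have $m-r\geq p_f$, i.e. $p_f\leq m-n+1$ (as $r=n-1$); with the hypothesis $m\leq n+p_f-1$ this forces $p_f=m-n+1$ and hence $m-p_f=n-1=r$. Then $s_{\mP_f}=\rank[A\,|\,\bm{1}_{\mP_f}]-p_f\leq\min\{n,m-p_f\}=r$, while ``not $\mP_f$-overdetermined'' gives $s_{\mP_f}\geq r$; so $s_{\mP_f}=r$, the network is $\mP_f$-toric, $\rank[A\,|\,\bm{1}_{\mP_f}]=s_{\mP_f}+p_f=m$ so $W_{\mP_f}$ is empty (the sources are $\mP_f$-independent), and $\dim Y=m-r-p_f=0$. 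Hence \Cref{cor:special}(iii) applies: with $Y=\{\alpha\}$, $x^*_\k:=(h(\alpha)/\k)^{\wG}\in\R^n_+$, and $\wQ=\wQ_{\mP_f}=(q_1,\dots,q_n)^\top\in\Z^n$ an $n\times(n-s_{\mP_f})=n\times 1$ column,
\[ \mathcal{E}_\k=\{\,x^*_\k\circ\mu^{\wQ}\,:\,\mu\in\R_+\,\}. \]

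Next, reduce to one equation and prove (i). Since $r=n-1$, $Z$ is a single row $(z_1,\dots,z_n)$; by \Cref{thm:degensolvability} the positive equilibria on the class $Zx=K$ are in degeneracy‑preserving bijection with the solutions $\mu\in\R_+$ of $\sum_{i=1}^n c_i\mu^{q_i}=K$, where $c_i:=z_i(x^*_\k)_i$, a nondegenerate equilibrium corresponding to a simple root (the relevant Jacobian is the scalar $\tfrac{d}{d\mu}\sum_i c_i\mu^{q_i}$). Clearing $\mu$-denominators yields an honest univariate polynomial with the same positive roots and at most $n+1$ monomials, so by Descartes' rule of signs it has at most $n$ distinct positive roots; as nondegenerate equilibria form a subset of all positive equilibria, this proves (i).

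For (ii) and (iii), suppose $\omega\in\ker\Gamma^\top\cap\R^n_{\geq0}\setminus\{0\}$. Since $\dim\ker\Gamma^\top=1$ we may take $Z=\omega^\top$, and as $(x^*_\k)_i>0$ we get $c_i\geq0$, with $c_i>0$ exactly on $S:=\{i:\omega_i\neq0\}\neq\emptyset$. Writing $\nu=\ln\mu$, solutions correspond to zeros of $\varphi(\nu):=\sum_{i\in S}c_ie^{q_i\nu}-K$. Each summand $c_ie^{q_i\nu}$ is convex in $\nu$, so $\varphi$ is convex, and strictly convex unless every $q_i$ with $i\in S$ vanishes, in which case $\varphi$ is constant and contributes no nondegenerate equilibria. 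A strictly convex function attains any value at most twice, giving (ii). Under the extra hypothesis of (iii) we may choose $\wQ$ with $q_i\geq0$ for all $i\in S$; then $\varphi'(\nu)=\sum_{i\in S}c_iq_ie^{q_i\nu}\geq0$, so $\varphi$ is nondecreasing, and strictly increasing unless all such $q_i$ vanish — in either case $\varphi$ has at most one zero (the constant case again contributing none), giving (iii).

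There is no genuinely hard step: the toric description of $\mathcal{E}_\k$ is handed to us by \Cref{thminj}/\Cref{cor:special}. The only care required is (a) the degenerate ``constant'' sub-cases in the last paragraph, where the univariate function equals $K$ everywhere or nowhere, so that the resulting continuum or absence of solutions must be reconciled with the bound on \emph{nondegenerate} equilibria — here \Cref{thm:degensolvability}(ii) does the work since all such solutions are degenerate; and (b) checking at the outset that the combinatorial hypotheses genuinely force $Y$ to be a point, $W_{\mP_f}$ empty and $\wQ_{\mP_f}$ a single column, so that \Cref{cor:special}(iii) applies verbatim.
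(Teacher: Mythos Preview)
Your proof is correct. The structural setup and part (i) coincide with the paper's argument (both reduce, via \Cref{cor:special}(iii) and \Cref{thm:degensolvability}, to a univariate polynomial with at most $n+1$ monomials and apply Descartes' rule of signs). For (ii) and (iii), however, you take a genuinely different route: instead of tracking coefficient signs in the cleared polynomial and counting sign changes, you pass to the logarithmic variable $\nu=\ln\mu$ and observe that $\varphi(\nu)=\sum_{i\in S}c_ie^{q_i\nu}-K$ is convex (strictly so unless constant), yielding at most two zeros, and under the hypothesis of (iii) is nondecreasing, yielding at most one. The paper's Descartes argument for (ii)--(iii) hinges on the observation that with $Z=\omega^\top\geq 0$ and $K>0$ the cleared polynomial has a single negative coefficient (the $-K$ term, or the constant term in case (iii)); this requires a little care when terms merge. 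Your convexity/monotonicity argument sidesteps that bookkeeping entirely and is arguably cleaner, while the paper's approach has the virtue of staying uniformly within the Descartes framework used for (i).
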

\vspace{-\baselineskip}\begin{proof}
Note first that
\[
n+p_f-1 \leq s_{\mP_f}+p_f = \mathrm{rank}\,[A\,|\,\bm{1}_{\mP_f}] \leq m \leq n + p_f - 1\,,
\]
where the first inequality follows as the network is not $\mP_f$-overdetermined, and the final inequality is by assumption. It follows that $p_f=m-n+1$. We are now in the setting of \Cref{cor:special}(iii), namely the sources are $\mP_f$-independent, and the network is $\mP_f$-toric; consequently $Y$ is a singleton, say $Y = \{\alpha\}$.

Additionally, $\wQ_{\mP_f}$ is a column vector and $Z$ is a row-vector; hence, setting $v=h(\alpha)$, by \Cref{thm:degensolvability} positive nondegenerate equilibria on any stoichiometric class are in smooth one-to-one correspondence with the positive nondegenerate solutions $\mu$ of the equation
\[
Z[(v/\k)^{\wG}\circ \mu^{\wQ}] - K =0 \,.
\]
After clearing denominators if necessary, this is a univariate polynomial equation in $\mu$, say $F(\mu)=0$, with at most $n+1$ terms. Parts (i)-(iii) now follow by Descartes' rule of signs. For (ii) and (iii) note that as all entries of $Z$ are nonnegative, $K$ must be positive in positive stoichiometric classes and hence $F(\mu)$ has exactly one negative coefficient. For (iii), this negative coefficient is the constant coefficient. In all cases, we use \Cref{thm:degensolvability} to transfer claims about solutions of the alternative system to positive equilibria of the network.
\end{proof}

The networks in \Cref{thmnnn1} are the only nondegenerate networks such that there is one conservation law, $Y$ is a point, and $\wQ$ has one column. Setting $\mP$ to be the trivial partition, we get an immediate corollary of \Cref{thmnnn1}.
\begin{cor}
\label[corollary]{cornnn1}
(i) $(n,n,n-1)$ networks admit no more than $n$ positive nondegenerate equilibria on any stoichiometric class. (ii) $(n,n,n-1)$ networks with bounded stoichiometric classes admit no more than two positive nondegenerate equilibria on any stoichiometric class.
\end{cor}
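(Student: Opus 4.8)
The plan is to obtain \Cref{cornnn1} as a direct specialisation of \Cref{thmnnn1} to the case $m=n$, after checking that the hypotheses are met and disposing of the sub-cases that \Cref{thmnnn1} excludes. First I would note that for an $(n,n,n-1)$ network we have $\dim \ker \Gamma = m-r = 1$, so the positive flux cone, when nonempty, is a single ray; in particular the network can factor only over the trivial partition, so $\mP_f$ is trivial and $p_f = 1$. If the network is dynamically trivial it has no positive equilibria and both bounds hold vacuously, so assume henceforth that it is dynamically nontrivial.

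Next I would split according to whether the network is source deficient. If $s_{\mP_f} < r$, i.e.\ the network is $\mP_f$-overdetermined, then \Cref{thmdegen}(iii) shows it is strongly degenerate, hence every positive equilibrium is strongly degenerate and therefore degenerate; so the network admits no positive nondegenerate equilibria and (i) and (ii) hold trivially. Otherwise the network is not $\mP_f$-overdetermined, and since $n \leq m = n \leq n + p_f - 1 = n$ it satisfies the hypotheses of \Cref{thmnnn1}. Part~(i) of that theorem gives at most $n$ positive nondegenerate equilibria on any stoichiometric class, which is part~(i) here. For part~(ii), boundedness of the stoichiometric classes guarantees, as recorded in \Cref{thmnnn1}(ii), the existence of $\omega \in \ker \Gamma^\top \cap \mathbb{R}^n_{\geq 0}\backslash\{0\}$, so \Cref{thmnnn1}(ii) applies and yields the bound of two.

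I do not anticipate any genuine obstacle: the corollary is essentially a restatement of \Cref{thmnnn1}. The only point requiring care is that \Cref{thmnnn1} is stated only for networks that are not $\mP_f$-overdetermined, so one must separately treat the source-deficient case — but this is immediate, since a strongly degenerate network has no positive nondegenerate equilibria whatsoever, the bounds then holding for the trivial reason that the relevant count is zero.
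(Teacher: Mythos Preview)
Your proposal is correct and takes essentially the same approach as the paper, which presents the corollary as immediate from \Cref{thmnnn1} by setting $\mP$ to be the trivial partition. Your version is slightly more explicit in disposing of the dynamically trivial and source-deficient sub-cases, but this only fleshes out what the paper leaves implicit.
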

 
\begin{ex}[Illustrating the claim in \Cref{thmnnn1}(iii)]
\label[example]{exBIOMD}
We consider the following bimolecular $(5,6,4)$ network, which is a modification of the network \texttt{BIOMD0000000405} from the \texttt{Biomodels} database \cite{biomodels}, where outflow reactions $\fX_1\ce{->} 0$ and $\fX_2\ce{->}0$ are disregarded: 
\begin{align*}
\fX_1+\fX_5 & \ce{->[\k_1]} \fX_3 \ce{->[\k_5]} \fX_5 & 
\fX_2+\fX_5 & \ce{->[\k_2]} \fX_4 \ce{->[\k_6]} \fX_5 & 
0 & \ce{->[\k_3]} \fX_1 & 0 & \ce{->[\k_4]} \fX_2 \, .
\end{align*}
We find $\mP_f = \{\{1,2,5\},\{3,4,6\}\}$, and choose $Z=(0, 0, 1, 1, 1)$, and $\wQ_{\mP_f} = ({-1}, {-1}, 0, 0, 1)^\top$, from where it is easily seen that the hypotheses of \Cref{thmnnn1}(iii) hold. Hence, 
the network has at most one positive nondegenerate equilibrium on each stoichiometric class for all parameter choices. As the network is nondegenerate, by \Cref{prop:nondeg}, it will hold generically for parameters in $\mZ_{sc}$ that the corresponding positive stoichiometric class has exactly one equilibrium, and this equilibrium is nondegenerate. 

The reader can verify that the original network with the two additional outflow reactions admits three positive nondegenerate equilibria on some stoichiometric classes.  
\end{ex}

Refinements of \Cref{thmnnn1} and \Cref{cornnn1} will be given in \Cref{sec:quadraticnnn1} for quadratic networks. 

\begin{rem}[Sharpness of the bounds in \Cref{thmnnn1}: open questions]
 \label[remark]{remnnn1}
 It is somewhat remarkable that the bounds in \Cref{thmnnn1} hold regardless of the reactant and product molecularities. In the case $n=2$, both bounds in \Cref{cornnn1} are achieved by the $(2,2,1)$ network
\[
\fX+2\fY \longrightarrow 3\fY\,, \qquad \fY \longrightarrow \fX\,,
\]
which has bounded stoichiometric classes and admits two positive, nondegenerate equilibria for some choices of rate constants (see \cite{banajipanteaMPNE} for the calculations). We remark, however, that {\em quadratic} $(2,2,1)$ networks do not admit multiple positive nondegenerate equilibria; this follows from \cite[Lemma 20]{BBH2024smallbif}, or by various easy calculations using the techniques in this paper. We leave open whether the bound on positive nondegenerate equilibria in \Cref{cornnn1}(i) is achieved for general $n$ and, if so, the minimum source molecularity of a network needed to achieve the bound. 
\end{rem}

\subsection{Networks of full rank with few sources}
When $r=n$, i.e., the network has full rank, then if it is not $\mP$-overdetermined, we must have $s_{\mP}=n$ (namely, the network is $\mP$-toric) with $\Ak$ described by a system of $m-n - p$ polynomial equations in as many variables. Moreover, $\mathrm{rank}\,[A\,|\,\bm{1}_\mP] = n+p$, hence $Q$ is empty, see \Cref{cor:special}. By \Cref{thm:finite}, we are guaranteed that $\Ak$ is generically finite. In the special case where $m=n+p$, part (iv) of \Cref{cor:special} applies (see \Cref{remfullPindependent}), $Y$ is a singleton, say $Y=\{\alpha_\k\}$, and for each $\k \in \R^m_+$ there is a unique, positive, nondegenerate equilibrium $(h(\alpha_\k)/\k)^{\wG}$, where $G = [A\,|\,\bm{1}_\mP]^{-1}$.

In the next result we show that full rank networks with sufficiently few sources admit no more than one positive nondegenerate equilibrium, regardless of source molecularity. \Cref{thmfewsources}, which expands on the claim in Theorem~12 in \cite{BBH2024smallbif}, treats the case of full rank networks with up to $n+p$ distinct sources. The consequences, which would be immediate from \Cref{prop:fewnomials}(ii) in the case of the trivial partition, are shown to hold for nontrivial partitions, provided repeated sources always occur in the same block of the partition. 
 
\begin{thm}
\label{thmfewsources}
Consider a dynamically nontrivial $(n,m,n)$ network which factors over a $p$-partition $\mP$. Assume that the network has no more than $n+p$ distinct sources and that each pair of reactions with the same source belongs to the same block of $\mP$. Then for each choice of rate constants the network has either (i) no positive equilibria; (ii) a unique positive nondegenerate equilibrium; or (iii) a continuum of positive degenerate  equilibria. 

Additionally, if the network is nondegenerate, then (ii) holds generically in the set of rate constants $\k$ for which $\mathcal{E}_\k\neq \emptyset$. 
\end{thm}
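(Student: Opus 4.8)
The plan is to split into two cases according to whether the network is $\mP$-overdetermined. Suppose first that it is, i.e.\ $s_\mP<n$. Then \Cref{thmdegen} does essentially all the work: part~(iii) gives that the network is strongly degenerate, so the final sentence of the statement is vacuous and, for every $\k$, every positive equilibrium is degenerate; part~(iv) gives that whenever $\Ek\neq\emptyset$ it has dimension at least $n-r+1=1$, and since $\Ek$ is the continuous image under $F_\k$ of the convex set $\Ak$ (\Cref{thminj}(i)) it is also connected, hence a continuum; and part~(ii) gives that $\Ek=\emptyset$ for generic $\k$. So in this case, for every $\k$ we are in alternative~(i) or~(iii).

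Now suppose the network is not $\mP$-overdetermined, so $s_\mP=n$; then $Q$, and hence $\wQ$, is empty by \Cref{cor:special}(ii), and by \Cref{thminj}(i) the map $F_\k$ restricts to a bijection $\Ak\to\Ek$. The first point is that the hypotheses force exactly $n+p$ distinct sources: since reactions with the same source lie in a common block, two reactions index the same row of $[A\,|\,\bm1_\mP]$ precisely when they share a source, so, writing $\ell$ for the number of distinct sources, $n+p=s_\mP+p=\operatorname{rank}[A\,|\,\bm1_\mP]\leq\ell\leq n+p$, forcing $\ell=n+p$. With $\ell=n+p$, \Cref{lem:Wstruct} applies, and a short count — using $\operatorname{rank}[A^{(i)}\,|\,\bm1]\leq\ell_i$, whence $\sum_i s_i\leq\ell-p=n$, together with $s_\mP\leq\sum_i s_i$ — shows that $\mB_2$ and $\mB_3$ are empty; thus we may take $W$ to be the $(m-n-p)\times m$ matrix whose rows are the ``trivial'' vectors of $\mB_1$, each with entries $1,-1$ on a pair of reactions sharing a source. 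Clearing denominators in $h(\alpha)^W=\k^W$ and using that $h$ is affine, the solvability system becomes an affine system $\Psi(\alpha)=0$ on the open polytope $Y$, with constant square Jacobian $M$ of size $m-n-p$; by \Cref{lem:equiveq} this passage preserves degeneracy of solutions, so combining with \Cref{thminj}(iii) we have $\ker J_{g_\k}(F_\k(\alpha))\cong\ker M$ for every $\alpha\in\Ak$.

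The trichotomy now follows. The solution set of $M\alpha=b$ in $\R^{m-n-p}$ is either empty, a single point, or a positive-dimensional affine subspace; intersecting with the open convex set $Y$, the set $\Ak$ — hence also $\Ek=F_\k(\Ak)$ — is empty, a single point, or infinite and connected. If $\Ak=\emptyset$ we are in~(i). If $\Ak$ is infinite, then $M$ is singular, so every $x\in\Ek$ satisfies $\ker J_{g_\k}(x)\cong\ker M\neq\{0\}$, i.e.\ $\Ek$ is a continuum of degenerate equilibria — case~(iii). If $\Ak=\{\alpha^*\}$, then $M$ must be invertible (otherwise the preceding dichotomy would make $\Ak$ empty or infinite), whence $\ker J_{g_\k}(F_\k(\alpha^*))\cong\ker M=\{0\}$ and the unique equilibrium is nondegenerate — case~(ii). (The borderline $m=n+p$ is the trivial instance where $W$ is empty, $Y$ a singleton, and \Cref{cor:special}(iv) gives~(ii) outright.) Finally, if the network is nondegenerate it is not strongly degenerate, so by \Cref{thmdegen}(iii) it is not $\mP$-overdetermined; since $r=n$ there are no total amounts and $\mZ_{sc}=\mZ$, so by \Cref{prop:nondeg}(ii) for generic $\k\in\mZ$ the set $\Ek$ is finite, nonempty and consists of nondegenerate equilibria, which by the trichotomy forces $|\Ek|=1$, i.e.\ alternative~(ii).

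The routine parts are the $\mP$-overdetermined case (mostly a citation) and the clearing of denominators; the part that needs care is showing that the hypotheses force exactly $n+p$ sources and hence, via \Cref{lem:Wstruct}, an all-trivial $W$, and — the real crux — arguing that a single-point solution set of the resulting affine system is automatically a nondegenerate solution, which is precisely where openness of $Y$ and the rank dichotomy for $M$ combine.
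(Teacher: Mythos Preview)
Your proof is correct and follows essentially the same approach as the paper's: reduce the $\mP$-overdetermined case to \Cref{thmdegen}, and in the remaining case show that $W$ can be taken to consist entirely of ``trivial'' rows so that the solvability system is linear, giving the trichotomy via \Cref{thminj}(iii) (the paper uses the equivalent \Cref{thm:degensolvability}, which coincides with \Cref{thminj}(iii) here since $r=n$ makes $Z$ empty).

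One small slip: in the $\mP$-overdetermined case you write that $\Ek$ is ``the continuous image under $F_\k$ of the convex set $\Ak$''. This is wrong on two counts --- $F_\k$ maps $\Ak\times\R^{n-s_\mP}_+$ (here $s_\mP<n$, so the second factor is nontrivial), and $\Ak$ is the zero set in $Y$ of a generally nonlinear polynomial system, not convex. Fortunately this claim is also unnecessary: the paper uses ``continuum'' to mean a positive-dimensional (hence uncountable) set, not a connected one, and that already follows from \Cref{thmdegen}(iv), which you cite. Your connectedness argument \emph{is} valid in the non-overdetermined case, where the cleared solvability system is genuinely affine and $\Ak$ is the intersection of an affine subspace with the open convex polytope $Y$.
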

\vspace{-\baselineskip}\begin{proof}
 The result follows from \Cref{thmdegen} if the network is $\mP$-overdetermined, so we assume it is not. Then $s_\mP=n$, i.e., the network is $\mP$-toric, and consequently $\mathrm{rank}\,[A\,|\,\bm{1}_\mP] = n+p$, implying  
that there must be exactly $n+p$ distinct sources. If $m=n+p$, then $W$ is empty and, by \Cref{cor:special}, the network has exactly one positive, nondegenerate equilibrium for all choices of rate constants.

 So let us now assume that $m > n+p$. Then $W$ has $m-n-p$ rows and,  since repeated sources occur in the same block of $\mP$, each of these can be chosen to have one entry $1$, one entry $-1$, and the remaining entries equal to zero, corresponding to $m-n-p$ pairs of identical rows in $[A\,|\,\bm{1}_\mP]$ (c.f. \Cref{lem:Wstruct}). For any fixed $\k$, positive equilibria are in one-to-one correspondence with solutions $\alpha \in Y$ to the solvability system, $h(\alpha)^W = \kappa^W$ which, clearing denominators, is a system of $m-n-p$ {\em linear} equations in the same number of unknowns. We immediately get an alternative B\'ezout bound of $1$. These equations thus have, in $\R^{m-n-p}$, either no solution, infinitely many degenerate solutions, or exactly one nondegenerate solution. As $Y$ is open, the same conclusions hold over $Y$. By \Cref{thm:degensolvability}, degenerate (resp., nondegenerate) solutions of the solvability system correspond to degenerate (resp., nondegenerate) positive equilibria, and the first claim follows.

The last statement follows from \cite[Thm 3.4]{feliu:dimension}, as summarised in \Cref{sec:background}. 
\end{proof}

Choosing the trivial partition in \Cref{thmfewsources} we have the following immediate corollary. 
\begin{cor}
\label[corollary]{corfewsources}
If an $(n,m,n)$ network has no more than $n+1$ distinct sources, then for each choice of rate constants it admits either (i) no positive equilibria; (ii) a unique positive nondegenerate equilibrium; or (iii) a continuum of positive degenerate  equilibria. 
\end{cor}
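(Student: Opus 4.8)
The plan is to read this off directly from \Cref{thmfewsources} by specialising to the trivial partition; essentially no new work is required.

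First I would dispose of the degenerate possibility that the network is dynamically trivial: in that case $\mC=\emptyset$, so $\Ek=\emptyset$ for every $\k\in\R^m_+$ and alternative (i) holds. This step is needed only because \Cref{corfewsources}, unlike \Cref{thmfewsources}, is stated without the standing assumption of dynamical nontriviality, so it must be handled before the theorem can be invoked.

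Next, assuming the network is dynamically nontrivial, I would apply \Cref{thmfewsources} with $\mP$ the trivial partition, for which $p=1$; every dynamically nontrivial network factors over the trivial partition, so this is legitimate. With $p=1$ the hypothesis ``at most $n+1$ distinct sources'' is precisely the hypothesis ``at most $n+p$ distinct sources'' of \Cref{thmfewsources}, and the further requirement there that reactions sharing a common source lie in a common block of $\mP$ is vacuous, since $\mP$ has a single block. Thus \Cref{thmfewsources} applies verbatim and yields the trichotomy (i)--(iii); its closing assertion additionally gives that (ii) holds generically in $\{\k:\Ek\neq\emptyset\}$ whenever the network is nondegenerate.

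I do not anticipate any genuine obstacle: all the substance lives in \Cref{thmfewsources}, and the only point requiring a moment's care is the bookkeeping around dynamical (non)triviality noted above.
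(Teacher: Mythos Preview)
Your proposal is correct and matches the paper's approach exactly: the paper simply states that the corollary follows by choosing the trivial partition in \Cref{thmfewsources}. Your extra sentence handling the dynamically trivial case is a nice piece of hygiene that the paper glosses over.
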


\begin{ex}[Sources shared across different blocks of the partition]
\label[example]{exsrouceblock}
The conclusion of \Cref{thmfewsources} does not necessarily hold if reactions with the same source occur in different blocks of the partition. To see this, consider the $(2,5,2)$ network
\[
2\fY \ce{->[\k_1]} \fX+\fY \, , \quad 
\fX+\fY \ce{->[\k_2]} 2\fY  \, , \quad 
 2\fX+2\fY \ce{->[\k_3]} \fX +4\fY \, , \quad
 \fX+\fY \ce{->[\k_4]} 3\fY \, , \quad
 \fX+2\fY \ce{->[\k_5]} 2\fX  \,.
\] 
In this case, $\ker \Gamma$ is spanned by $\{(1,1,0,0,0),\,(0,0,1,1,2)\}$, and so $\mP_f=\{\{1,2\},\{3,4,5\}\}$. 
 On the other hand, the second and fourth reactions share a source. 
 The solvability equation takes the form $\alpha(1-\alpha) = \k_1\k_2^{-1}\k_3\k_4\k_5^{-2}$ for $\alpha \in (0,1)$. Consequently, the network admits two positive nondegenerate equilibria whenever $\k_1\k_2^{-1}\k_3\k_4\k_5^{-2} < 1/4$, and the parameter set for potential fold bifurcations is defined by $\k_1\k_2^{-1}\k_3\k_4\k_5^{-2} = 1/4$.
\end{ex}

\begin{rem}
\Cref{thmfewsources} shows that $(n, m, n)$ networks factoring over a $p$-partition, with exactly $n+p$ distinct sources, and no repeated sources appearing in different blocks of the partition, have similarities to $(n, n+1, n)$ networks, treated in \cite{banajiborosnonlinearity}. From \Cref{exsrouceblock} we see that the assumption that sources are not shared between reactions in different blocks is crucial.
\end{rem}

\Cref{prop:fewnomials} implies that an $(n,m,n)$ network with at most $n+2$ distinct sources can have at most $n+1$ positive nondegenerate equilibria. \Cref{thm:n_plus_2} below gives an extension to networks with at most $n+p+1$ distinct sources. We start by proving a more general and technical result, which builds on the proof of the Descartes' rule of signs for circuits in \cite{bihan:descartes:2}.

\begin{lem}\label[lemma]{lem:extend_circuits}
Consider a dynamically nontrivial $(n,m,n)$ network which factors over a $p$-partition $\mP$ and such that reactions with the same source belong to the same block of $\mP$. For each $i=1,\dots,p$, let $\ell_i$ be the number of distinct sources for the reactions in $P_i$ and assume that $\dim \mC_i\leq (m_i-\ell_i)+ 2$, where $m_i:=|P_i|$. 
Let $W$ be chosen as in \Cref{lem:Wstruct}, and suppose that the support of each row of $W$ intersects at most one block $P_i$ with $\dim \mC_i= (m_i-\ell_i) +2$. 
 
Then for each choice of rate constants the network admits at most 
$\prod_{i: \dim \mC_i =(m_i-\ell_i)+ 2} (\ell_i - 1)$ positive nondegenerate equilibria. 
\end{lem}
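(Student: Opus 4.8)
The plan is to transfer the count, via \Cref{thminj} and \Cref{thm:degensolvability}, to the solvability system $h(\alpha)^W = \k^W$, and then to decouple that system into one univariate equation per block $i$ with $\dim \mC_i = (m_i-\ell_i)+2$ (the \emph{critical} blocks), bounding each by at most $\ell_i - 1$ positive roots through a Descartes-rule-for-circuits argument. First I would dispose of the degenerate cases: if the network is strongly degenerate there are no positive nondegenerate equilibria and the bound holds trivially, and by \Cref{thmdegen} together with \Cref{lem:degensubnet} this covers both the case in which the network is $\mP$-overdetermined and the case in which some partition subnetwork $\mN_i$ is source deficient (i.e.\ $s_i < r_i$). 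So I may assume the network is not strongly degenerate, hence not $\mP$-overdetermined, so $s_\mP = r = n$; then (\Cref{rem:emptymat}) both $Z$ and $Q$ are empty, and \Cref{thminj} and \Cref{thm:degensolvability} put the positive equilibria in smooth bijection with $\Ak = \{\alpha\in Y : h(\alpha)^W = \k^W\}$, with degenerate equilibria matched to degenerate solutions. Since positive nondegenerate equilibria of a full-rank network are isolated in $\R^n_+$, it suffices to bound the number of isolated points of $\Ak$. From $s_i \geq r_i$ and $\dim \mC_i \leq (m_i-\ell_i)+2$ one records that a non-critical block has $s_i = r_i = \ell_i-1$ and $\dim Y_i = m_i - \ell_i$, whereas a critical block has $\ell_i \geq 3$, $s_i \in \{\ell_i - 2,\ell_i-1\}$, and $\dim Y_i = m_i - \ell_i + 1$.

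Next I would fix $W$ as in \Cref{lem:Wstruct}. After clearing denominators, each trivial row of $\mB_1$ yields a linear equation $\k_{k'}h_k(\alpha)=\k_k h_{k'}(\alpha)$ relating two coordinates attached to a pair of reactions in one block sharing a source; the $\mB_2$ rows are local to a single block; and, by hypothesis, every $\mB_3$ row meets at most one critical block. The $m_i - \ell_i$ trivial rows supported in block $i$ match the number of local parameters in a non-critical block but leave one residual parameter $t_i$ in a critical block, accompanied by one local $\mB_2$ equation when $s_i = \ell_i - 2$ and by none when $s_i = \ell_i - 1$ (in the latter case $\mB_3$ supplies exactly one global equation charged to that block). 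I would then eliminate, block by block, the coordinates pinned down by the trivial linear equations, substituting each repeated coordinate by the prescribed multiple of a chosen representative. If in some block these linear equations are rank-deficient, or if after elimination some residual parameter is constrained by no surviving equation, then $\Ak$ is positive-dimensional and has no isolated points, and we are done; otherwise every critical block carries exactly one residual parameter $t_i$, every surviving coordinate $h_{i,j}$ becomes an affine function of $t_i$, and there is exactly one surviving equation in each $t_i$. Crucially, since every $\mB_3$ row meets at most one critical block and all remaining coordinates are now constants, each surviving equation involves a single $t_i$, so the reduced system is a product of independent univariate equations, one per critical block.

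It then remains to bound each univariate equation, which has the shape $\prod_{j\in S} h_{i,j}(t_i)^{w_j} = \k^w$ for a row $w$ of $W$, with $S = \operatorname{supp}(w)\cap P_i$; here $|S|\leq \ell_i$ because $\operatorname{supp}(w)$ contains no two reactions with the same source, and $\sum_{j\in S} w_j = 0$ because the $\bm{1}_{P_i}$-coordinate of the identity $[A\,|\,\bm{1}_\mP]^\top w = 0$ reads $\sum_{k\in P_i} w_k = 0$. On the interval where all $h_{i,j}(t_i)$ are positive, taking logarithms turns the equation into $\phi(t_i) := \sum_{j\in S} w_j \ln h_{i,j}(t_i) = \ln \k^w$; writing $\phi'$ over the common denominator $\prod_{j\in S} h_{i,j}(t_i)$, the numerator is a polynomial of degree at most $|S|-1$ whose leading coefficient is a multiple of $\sum_{j\in S} w_j = 0$, and so has at most $|S|-2\leq \ell_i - 2$ real roots. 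Hence $\phi$ has at most $\ell_i - 1$ monotone branches, the equation has at most $\ell_i - 1$ solutions on the domain --- this is precisely the mechanism behind the Descartes rule of signs for circuits in \cite{bihan:descartes:2} --- and multiplying over the critical blocks bounds the number of isolated points of $\Ak$, hence of positive nondegenerate equilibria, by $\prod_{i:\dim\mC_i=(m_i-\ell_i)+2}(\ell_i-1)$.

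The main obstacle is the bookkeeping in the decoupling step: confirming that the support hypothesis on $W$ really forces the reduced system to split into one univariate equation per critical block, and checking that the side cases --- a rank-deficient trivial subsystem, a global equation failing to constrain its residual parameter, or inconsistent residual global equations --- each lead either to an immediate contradiction or to $\Ak$ having no isolated points. A related delicate point is that the degree drop in $\phi'$ relies on $\sum_{j\in S} w_j = 0$, which could be spoiled if some $h_{i,j}$ degenerated to a constant along the residual line; ruling this out, or absorbing it, is precisely where one must invoke the full circuit argument of \cite{bihan:descartes:2} rather than a bare degree count.
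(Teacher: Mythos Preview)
Your proposal is correct and follows essentially the same route as the paper: reduce to the nondegenerate case, use the $W$-structure of \Cref{lem:Wstruct} to separate trivial (linear) from nontrivial rows, establish a one-to-one correspondence between nontrivial rows and nonsimple (``critical'') blocks, reduce each such block to a single univariate equation of the form $\prod_j v_j(\hat\alpha_i)^{\omega_j}=\text{const}$, and bound its solutions by $\ell_i-1$ via the circuit argument of \cite{bihan:descartes:2}. The concerns you raise at the end are not genuine obstacles: a constant factor $h_{i,j}$ only lowers the effective count (if $c\geq 1$ factors degenerate, then $|S'|=|S|-c\leq \ell_i-1$ nonconstant factors already give at most $\ell_i-1$ monotone branches without needing the degree drop), and the side cases are handled in the paper by restricting to rate constants $\k\in\mathcal{K}$ for which every partition subnetwork $\mN_i$ has a nondegenerate positive equilibrium, outside of which the conclusion is automatic via \Cref{lem:degensubnet}.
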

\vspace{-\baselineskip}\begin{proof}
If the network is (strongly) degenerate, then the result is trivially true. So we assume that the network is nondegenerate, hence not $\mP$-overdetermined. This implies $\rank\, [A\, |\, \bm{1}_\mP] = n+p$, hence $\rank W = m-n-p$, and  $\dim \mC=m-n$. By \Cref{lem:degensubnet}, for each $i$, the partition subnetwork $\mN_i$ is not strongly degenerate, 
hence not source deficient. If $\k \in \R^m_+$ is such that, for some $i$, either (i) the positive equilibrium set of $\mN_i$ is empty; or (ii) all positive equilibria of $\mN_i$ are strongly degenerate, then the same conclusions hold for $\mN$ (we use \Cref{lem:degensubnet} in the latter case), and the conclusions of the theorem are automatic at such $\k$. So henceforth we fix $\k\in \mathcal{K}$, where $\mathcal{K} \subseteq \R^m_+$ consists of all rate constants such that neither (i) nor (ii) hold for any $i$.

Let $\ell:=\sum_i\ell_i$ be the number of distinct sources of $\mathcal{N}$. Then $W$ has $m - \ell$ ``trivial'' rows in the sense that their nonzero entries are $1$ and $-1$ for a pair of identical sources. These give rise to linear equations in the solvability system. The remaining $\ell-n-p$ rows of $W$ do not include pairs of reactions with common sources in their support.  

Let us refer to the block $P_i$ as ``simple'' if $\dim \mC_i\leq (m_i-\ell_i)+ 1$. In that case, as $\mN_i$ is not strongly degenerate, 
\Cref{cor:linear_few} tells us that $\dim \mC_i = (m_i-\ell_i)+ 1$ and hence $\mathrm{dim}\,Y_i = m_i-\ell_i$. 
As $\mN_i$ is not source deficient, using that the rank of $\mN_i$ is $m_i-\dim \mC_i = \ell_i-1$, we conclude that the rank of the matrix  $W^{(i)}$ from \Cref{remsolvsubsystems} is at most $m_i-\ell_i$. As $W^{(i)}$ has at least $m_i-\ell_i$ trivial rows, we conclude that the solvability equations for $\mN_i$ in the variables $\alpha^{(i)}$ can be written as a square linear system. As $\k \in \mathcal{K}$, solutions to this system must consist of exactly one nondegenerate solution.

For blocks $i$ which are not simple, we have $\dim \mC_i = (m_i-\ell_i)+ 2$. A simple counting argument tells us that there must be $\ell-n-p$ of these blocks, which is equal to the number of nontrivial rows of $W$.
The support of each nontrivial row of $W$ must intersect some nonsimple block, for otherwise the subnetwork corresponding to the union of the simple blocks, with the corresponding natural partition, say $\mP'$, inherited from $\mP$, is $\mP'$-overdetermined. Moreover, by assumption, the support of each nontrivial row of $W$ intersects at most one nonsimple block $P_i$. By the pigeonhole principle, the support of each nontrivial row of $W$ intersects exactly one nonsimple block.
  
To conclude the proof, as $\k \in \mathcal{K}$, each subsystem of the solvability system corresponding to a simple block has exactly one nondegenerate solution; we substitute these solutions into the remaining solvability equations.  Having fixed the values of $\alpha_i$ corresponding to simple blocks, the solvability equations with support intersecting a nonsimple block $P_i$ now involve only the $m_i-\ell_i+1$ variables $\alpha^{(i)}$. These consist of $m_i-\ell_i$ independent linear equations corresponding to repeated sources and one further equation corresponding to a nontrivial row of $W$. We can use the linear equations to write all variables in this subsystem in terms of one variable, say $\hat{\alpha}_i$, from the variables $\alpha^{(i)}$. This subsystem is then a single equation of the form
  \begin{equation}
    \label{eqsolvpi}
    \beta \, \prod_j v_j(\hat{\alpha}_i)^{\omega_j} =1,
  \end{equation}
where $\beta>0$, each $v_j(\cdot)$ is affine, $j$ ranges over a subset of $P_i$ of size at most $\ell_i$, and $\omega_j$ are the corresponding subset of the entries in some row of $W$. Hence the left hand side of \eqref{eqsolvpi} includes a product of at most $\ell_i$ nonconstant factors. The arguments in the proof of \cite[Thm 2.4]{bihan:descartes:2} now imply that \eqref{eqsolvpi} has at most $\ell_i-1$ nondegenerate solutions. 

The maximum number of positive nondegenerate solutions of the solvability system is then bounded by the product of the number of possible nondegenerate solutions for each variable $\hat{\alpha}_{i}$ over all blocks for which $\dim \mC_i=(m_i- \ell_i)+ 2$, and the claim follows by \Cref{thm:degensolvability}. 
\end{proof}

\begin{rem}\label[remark]{rem:max}
For the assumptions of \Cref{lem:extend_circuits} to hold, the network must have at most $n+2p_f$ distinct sources. Indeed, if for each $i$ we have $\ell_i \leq m_i - \dim \mC_i + 2$ and the network has full rank, then $\sum_i \ell_i \leq m-(m-n)+2p \leq n+2p_f$.
\end{rem}

\begin{thm}\label{thm:n_plus_2}
Consider a dynamically nontrivial $(n,m,n)$ network which factors over a $p$-partition $\mP$. Assume that the network has no more than $n+p+1$ distinct sources and that reactions with the same source belong to the same block of $\mP$. Then for each choice of rate constants the network admits at most $n+p$ positive nondegenerate  equilibria. 

In particular, considering the trivial partition, any full rank network with $n+2$ distinct sources admits at most $n+1$ positive nondegenerate equilibria.
\end{thm}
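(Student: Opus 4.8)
The plan is to deduce the theorem from \Cref{lem:extend_circuits} after verifying, under the present weaker hypothesis on the number of sources, that the block structure of the network satisfies the assumptions of that lemma. First I would dispose of the trivial case: a degenerate network admits no positive nondegenerate equilibrium for any rate constants, so the bound is automatic; hence assume $\mN$ is nondegenerate. By \Cref{thmdegen}, a nondegenerate network is not $\mP$-overdetermined, so $s_\mP = r = n$, whence $\rank[A\,|\,\bm{1}_\mP] = n+p$; moreover, since strongly degenerate networks are degenerate, $\mN$ is not strongly degenerate. Being dynamically nontrivial and of full rank, $\dim\mC = m - n$, and since $\mC$ factors over $\mP$, $\sum_{i=1}^p \dim\mC_i = m-n$.

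The heart of the argument is a counting step that pins down the block ``profile''. For each $i$ let $\ell_i$ be the number of distinct sources of the reactions in $P_i$, put $\ell := \sum_i \ell_i$ (the total number of distinct sources), and set $d_i := \dim\mC_i - (m_i-\ell_i)$. Since $\mN$ is not strongly degenerate, \Cref{cor:linear_few} gives $\dim\mC_i > m_i-\ell_i$, i.e.\ $d_i \ge 1$, for every $i$. On the other hand $\sum_i (m_i-\ell_i) = m-\ell$, so $\sum_i d_i = (m-n) - (m-\ell) = \ell - n \le p+1$, the last step being precisely the hypothesis that $\mN$ has at most $n+p+1$ distinct sources. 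Combining $d_i \ge 1$ over the $p$ blocks with $\sum_i d_i \le p+1$, at most one $d_i$ can exceed $1$, and if one does, it equals $2$. In particular $\dim\mC_i \le (m_i-\ell_i)+2$ for all $i$, and there is at most one ``nonsimple'' block $P_{i_0}$, namely one with $\dim\mC_{i_0} = (m_{i_0}-\ell_{i_0})+2$.

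Now all hypotheses of \Cref{lem:extend_circuits} hold: reactions with a common source share a block by assumption, the inequality $\dim\mC_i \le (m_i-\ell_i)+2$ has just been checked, and --- the point requiring the counting step --- since there is at most one nonsimple block, the support of any row of a matrix $W$ chosen as in \Cref{lem:Wstruct} automatically intersects at most one nonsimple block. Hence \Cref{lem:extend_circuits} bounds, for every choice of rate constants, the number of positive nondegenerate equilibria by $\prod_{i:\,\dim\mC_i=(m_i-\ell_i)+2}(\ell_i-1)$: the empty product $1$ when there is no nonsimple block, and otherwise $\ell_{i_0}-1$. Since $\ell_{i_0}\le \ell\le n+p+1$, this is at most $n+p$ (in fact, as the remaining $p-1$ blocks each contain at least one source, $\ell_{i_0}\le \ell-(p-1)=n+2$, so even $n+1$ is an upper bound). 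The ``in particular'' statement is the case $p=1$ of the trivial partition.

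The main obstacle is the bookkeeping in the counting step: one must be careful that \Cref{cor:linear_few} genuinely delivers $d_i\ge 1$ (which is why nondegeneracy, and thereby the failure of strong degeneracy, must be secured at the outset), and that the constraint $\sum_i d_i\le p+1$ really forces at most one nonsimple block --- it is exactly this fact that renders the otherwise delicate support hypothesis of \Cref{lem:extend_circuits} vacuous in the present situation. Granting this, the theorem follows immediately from \Cref{lem:extend_circuits}.
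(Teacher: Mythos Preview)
Your proof is correct and follows essentially the same approach as the paper: reduce to \Cref{lem:extend_circuits} by using \Cref{cor:linear_few} to force $d_i\ge 1$ and then counting to ensure at most one nonsimple block. Two minor differences worth noting: the paper disposes of the case $\ell\le n+p$ separately via \Cref{thmfewsources}, whereas you treat it uniformly (your counting then forces all $d_i=1$ and the empty product gives the bound $1$); and your parenthetical observation that $\ell_{i_0}\le \ell-(p-1)\le n+2$, hence the bound is actually $n+1$, is a genuine sharpening of the stated bound $n+p$ that the paper does not record.
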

\vspace{-\baselineskip}\begin{proof}
If the network is $\mP$-overdetermined, then we are done. If the network has at most $n+p$ distinct sources, then the statement follows from \Cref{thmfewsources}. So we are left to consider the case with $n+p+1$ distinct sources and $\rank\, [A\, |\, \bm{1}_\mP] = n+p$. 

As in \Cref{lem:extend_circuits}, let $\ell_i$ be the number of distinct sources in each block, so that $\ell := \sum_{i=1}^p \ell_i = n+p+1$. Setting $m_i := |P_i|$, we have
\begin{equation}\label{sum:Ci}
 \sum_{i=1}^p \big( \dim \mC_i - (m_i-\ell_i) \big) = (m-n)- m + \ell= p+1. 
\end{equation}
By \Cref{cor:linear_few}, if   $\dim \mC_i \leq m_i-\ell_i$ for some $i$, then the network does not have positive nondegenerate equilibria. Assuming this does not hold, 
\eqref{sum:Ci} implies that there is exactly one $i$ such that $\dim \mC_i - (m_i-\ell_i)=2$, with $\dim \mC_j - (m_j-\ell_j)=1$ for all $j \neq i$. The result now follows from \Cref{lem:extend_circuits}. 
\end{proof} 

\begin{rem}
In \Cref{sec:quadratic_full}, we will show that for a {\em quadratic} $(n,m,n)$ network with no more than $n+2$ distinct sources, the bound $n+1$ in \Cref{thm:n_plus_2} is also the worst-case scenario for the alternative Bézout source bound. Moreover, the combinatorial techniques we use to prove this claim automatically allow us to infer sharper B\'ezout source bounds for quadratic networks with various particular source combinations.
\end{rem}

The bound in \Cref{thm:n_plus_2} applies, in particular, to $(n,m,n)$ networks with at most $n+2$ distinct sources. The following result from \cite{bihan:descartes:2} tells us that for networks in this class the alternative B\'ezout source bound always beats the naive BKK source bound.

\begin{prop}\label[proposition]{remaltBezout}
    For an $(n,m,n)$ network with at most $n+2$ distinct sources, the alternative Bézout source bound is never larger than the naive BKK source bound. 
\end{prop}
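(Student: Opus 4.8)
The plan is to fix the trivial partition ($p=1$), over which every network factors and which always yields a bona fide source bound, and then to split according to the source rank $s$. If $s<n$ the network is source deficient, hence $\mP$-overdetermined for the trivial partition, so by \Cref{thmdegen} it is strongly degenerate and admits no positive nondegenerate equilibria — and its sources lie in an affine hyperplane, so the naive BKK source bound is $0$ as well — and there is nothing to prove. So I would assume $s=n$, i.e. the network is $\mP$-toric with the trivial partition. Then $Q$ (being $(n+1)\times(n-s)$) and $Z$ (full rank) are empty, so by \Cref{cor:special} the alternative system \eqref{gensol} is just the solvability system $h(\alpha)^W=\k^W$, a square system of $m-n-1$ equations in $\alpha\in Y\subseteq\R^{m-n-1}$ after clearing denominators. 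Since $r=n$, the naive BKK source bound is the mixed volume of $n$ copies of $P:=\mathrm{conv}(\text{sources})$, i.e. the normalised volume $n!\,\mathrm{vol}(P)$. Writing $\ell$ for the number of distinct sources, we have $n+1=\rank[A\,|\,\bm{1}]\le\ell\le n+2$.

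If $\ell=n+1$, the sources are affinely independent, so by \Cref{lem:Wstruct} every row of $W$ can be taken ``trivial'' (entries $1,-1$ on a repeated source); the solvability system is then linear and the alternative Bézout source bound equals $1$. As $P$ is now a full-dimensional lattice $n$-simplex, $n!\,\mathrm{vol}(P)$ is a positive integer, so the inequality holds.

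If $\ell=n+2$, the sources affinely span $\R^n$ and their affine relations form a line; let $w$ be its primitive integer generator — a circuit — with positive part $S^+$ and negative part $S^-$, and put $d:=\sum_{j\in S^+}w_j=\sum_{j\in S^-}(-w_j)$ (the two sums agreeing because $w^\top\bm{1}=0$). By \Cref{lem:Wstruct} one may choose $W$ to consist of $m-n-2$ trivial rows together with one nontrivial row equal to $w$ (extended by zeros); after clearing denominators its equation $\prod_{j\in S^+}h_j(\alpha)^{w_j}=\k^{w}\prod_{j\in S^-}h_j(\alpha)^{-w_j}$ has degree at most $d$, since each $h_j$ is affine in $\alpha$. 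Hence the alternative Bézout source bound — the product of the degrees of the solvability polynomials — is at most $d$, and it remains to prove $d\le n!\,\mathrm{vol}(P)$.

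For this last step I would use two standard facts about a configuration $\mathcal A$ of $n+2$ affinely spanning lattice points in $\R^n$ with circuit $w$: the ``circuit triangulation'' $\{\mathrm{conv}(\mathcal A\setminus\{a_j\}):j\in S^+\}$ is a triangulation of $P=\mathrm{conv}(\mathcal A)$; and, by Cramer's rule applied to the homogenised relation $\sum_j w_j\,(a_j,1)=0$, the ratio $v:=n!\,\mathrm{vol}(\mathrm{conv}(\mathcal A\setminus\{a_j\}))/|w_j|$ is one and the same positive rational for every $j$ in the support of $w$. Summing volumes over the triangulation gives $n!\,\mathrm{vol}(P)=\sum_{j\in S^+}v\,w_j=v\,d$; and since each $v|w_j|$ is a positive integer while $w$ is primitive (so $\gcd_j|w_j|=1$), a Bézout-coefficient argument forces $v\in\Z_{>0}$, whence $n!\,\mathrm{vol}(P)=v\,d\ge d$, which is the proposition. (This is, in essence, the circuit case of the comparison between the sharp Descartes-type bound and the BKK bound established in \cite{bihan:descartes:2}, transcribed to the present language.) I expect the main obstacle to be stating the circuit triangulation and the volume identity cleanly when some sources lie outside the support of $w$ — so that $P$ strictly contains the hull of the circuit — while keeping the relevant volume equal to that of the hull of all $n+2$ sources; the integrality of $v$, which is exactly what makes the inequality fall on the right side, is the small essential point and follows from primitivity of $w$ as indicated.
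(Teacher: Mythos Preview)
Your argument is correct and follows the same route as the paper, which simply cites \cite[Remark~2.9]{bihan:descartes:2}; you have essentially unpacked that reference. The reduction to the trivial partition, the identification of the alternative B\'ezout bound with the circuit degree $d$, and the identity $n!\,\mathrm{vol}(P)=vd$ with $v\in\Z_{>0}$ via the circuit triangulation and Cramer's rule are precisely the content of that remark.
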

\vspace{-\baselineskip}\begin{proof} See \cite[Remark 2.8]{bihan:descartes:2}. 
\end{proof}

The relation in \Cref{remaltBezout} does not necessarily hold for full rank networks with more than $n+2$ distinct sources: the next example, and \Cref{ex363} below, illustrate that for $(3,6,3)$ networks the alternative B\'ezout source bound may or may not improve on the naive BKK source bound.

\begin{ex}[The naive BKK source bound can beat the alternative B\'ezout source bound]
\label[example]{BKKbeatsBezout}
Consider a $(3,6,3)$ network with sources 
\[0\, , \quad 2\fX\, ,\quad \fY\, , \quad \fZ\, , \quad \fX+\fZ\,  ,\quad \fY+\fZ\, . \]
for which the naive BKK source bound is computed to be $5$. Choosing the trivial partition, the solvability matrix $W$ can be chosen to have rows $(1,0,-1,-1,0,1)$ and $(1,-1,0,-2,2,0)$, yielding an alternative B\'ezout source bound of $6$. It is easily checked that no choice of $W$ leads to a better alternative Bézout source bound.
\end{ex}

\subsection{Networks with $\mP$-independent sources}\label{sec:independent}
For a $p$-partition $\mP$, recall that $\mP$-independence of the sources of a network means that $s_{\mP}=m-p$; equivalently, $[A\,|\,\bm{1}_\mP]$ is a surjective matrix, hence, $W$ is empty. Consequently, the solvability of \eqref{eqbasiclog1} is automatic, \Cref{cor:special}(i) holds, and we have an explicit parameterisation of $\Ek$ with parameter space $Y \times \mathbb{R}^{n-m+p}_+$. Recall also that when we use the trivial partition, $\mP$-independence just means affine independence.

Let us now consider dynamically nontrivial networks with $\mP$-independent sources. Clearly such networks cannot be $\mP$-overdetermined.
\begin{enumerate}
\item The case $r=n$ has already been discussed above (see \Cref{remfullPindependent}), leading to a unique, positive, nondegenerate equilibrium for every choice of rate constants. 
\item If $r < n$, then the alternative system $Z[(h(\alpha)/\k)^{\wG}\circ \mu^{\wQ}] = K$ is a system of $n-r$ equations in $n-r$ unknowns $(\alpha,\mu)\in Y\times \R^{n-m+p}_+$, whose solutions are in smooth one-to-one correspondence with positive equilibria on the stoichiometric class defined by $Zx = K$. 
\end{enumerate}

 In the following three examples we demonstrate how we may use the alternative systems of equations defined in \Cref{thm:degensolvability} to study networks with $\mP$-independent sources: in particular, to describe positive equilibria on each stoichiometric class; characterise parameter regions for multistationarity; and study bifurcations.

\begin{ex}
 \label[example]{ex342b}
Consider any dynamically nontrivial $(3,4,2)$ network with sources
\[
\fX\, , \quad \fY\,  , \quad 2\fX \, ,\quad \fY+\fZ\,.
\]
The naive B\'ezout source bound in this case is $4$ and the naive BKK source bound is $3$. 
We next show that the latter is also the alternative B\'ezout source bound. In this case, $[A\,|\,\bm{1}]$ is a $4 \times 4$ matrix of rank $4$, and so $W$ and $Q$ are empty, and $G=[A\,|\,\bm{1}]^{-1}$. 
We have the parameterisation of $\Ek$,
\[
\left(\begin{array}{c}x\\y\\z\end{array}\right) = \left(\frac{h(\alpha)}{\k}\right)^{\wG} = \k^{-\wG}\circ\left(\begin{array}{c}  h_3(\alpha)h_1(\alpha)^{-1} \\ h_2(\alpha)\, h_3(\alpha) h_1(\alpha)^{-2}\\ h_4(\alpha)h_2(\alpha)^{-1}\end{array}\right)\,,
\]
where each $h_i$ is affine in the real variable $\alpha$. The equation $Z(h(\alpha)/\k)^{\wG} = K$ is equivalent to a single at-most-cubic equation in $\alpha$, and so the alternative B\'ezout source bound is $3$. 

This bound can be achieved and the particularly simple form of the alternative equations for equilibria yields other benefits. To see this, consider the bimolecular network:
\[
\fX \ce{->[\k_1]} \fY\,,\quad \fY \ce{->[\k_2]} \fZ,\quad 2\fX \ce{->[\k_3]} 2\fZ\,,\quad \fY+\fZ \ce{->[\k_4]} 2\fX\,,
\]
where we have $Z = (1,1,1)$ and $\ker \Gamma$ is spanned by $\{(2,1,0,1),(2,0,1,2)\}$. 
We set $h(\alpha) = (2, \alpha, 1-\alpha, 2-\alpha)^\top$ for $\alpha \in (0,1)$ and obtain the parameterisation of $\Ek$: 
\[
\left(\begin{array}{c}x\\y\\z\end{array}\right) = 
\left(\frac{h(\alpha)}{\k}\right)^{\wG} = 
\k^{-\wG} \circ \left(\begin{array}{c}
(1-\alpha)/2\\\alpha(1-\alpha)/4\\\alpha^{-1}(2-\alpha)\end{array}\right)\,.
\]
The equation $Z\left(h(\alpha)/\k\right)^{\wG} = K$ can be rearranged to give the equivalent polynomial equation
\begin{equation}
  \label{eqpoly2}
f(\alpha; \theta_1, \theta_2, \theta_3, K):=2\theta_1\alpha(1-\alpha)+\theta_2\alpha^{2}(1-\alpha) + 4\theta_3(2-\alpha) -4K\alpha = 0\,,
\end{equation}
where $\theta := \k^{-\wG}$. We can find (for example, using the \texttt{Mathematica} function {\tt FindInstance}), $\theta_1, \theta_2, \theta_3, K \in (0, \infty)$, such that $f(\alpha; \theta_1, \theta_2, \theta_3, K)$ has three distinct roots in $(0,1)$. As the map between $\theta$ and $\k$ is invertible, we can easily obtain the corresponding rate constants at which the network admits three positive nondegenerate equilibria on some stoichiometric class. But we can go much further, if desired: we can use Sturm's theorem to get a system of polynomial inequalities in the rate constants, which together characterise the parameter region at which three equilibria occur on any given stoichiometric class. The lengthy expressions are omitted here; but it is somewhat remarkable that we can obtain explicitly the semialgebraic set of parameters for which this network admits three positive equilibria on a given stoichiometric class.

Considering bifurcations, we can confirm also that the cubic above admits a cusp point in $(0,1)$, namely, there exist $\alpha \in (0,1)$ and $\theta_1, \theta_2, \theta_3, K \in (0, \infty)$ such that $f(\alpha; \theta_1, \theta_2, \theta_3, K)=f'(\alpha; \theta_1, \theta_2, \theta_3, K)=f''(\alpha; \theta_1, \theta_2, \theta_3, K)=0$ where $'$ denotes differentiation w.r.t. the first argument. It is easily seen that wherever this holds for \eqref{eqpoly2}, obtained after clearing denominators, it also holds for the original alternative equations. Consequently, for any fixed $K$, we can obtain a semialgebraic description of the parameter set for potential cusp bifurcations \cite{kuznetsov:2023} on the corresponding stoichiometric class. For example, we can easily eliminate $\alpha$ from the equations $f=f'=f''=0$ to obtain a set of two polynomial equations and some polynomial inequalities describing the potential cusp bifurcation set in $\theta$-space. We can then substitute $\theta = \k^{-\wG}$ to get the corresponding equations and inequalities in $\k$-space. We can also obtain an explicit parameterisation of the potential cusp bifurcation set in $\theta$-space using, for example, cylindrical algebraic decomposition. 
\end{ex}

\begin{ex}
\label[example]{ex453a}
Consider any dynamically nontrivial $(4,5,3)$ network with sources
\[
\mathsf{0}\, ,\quad 2\mathsf{W}\, ,\quad \fY+\fZ\,  ,\quad \fY\, ,\quad \fX+\fZ\,.
\]
The equilibrium equations on any stoichiometric class take the form of one linear equation and three at-most-quadratic equations in four variables which, {\em a priori} give a naive B\'ezout source bound of $8$ positive nondegenerate equilibria on any stoichiometric class. We can also compute the naive BKK source bound to be $6$.
The alternative equations do not give us improvements on the latter bound in full generality; but they greatly simplify the calculations for any particular network. We observe that $[A\,|\,\bm{1}]$ is a $5 \times 5$ matrix of full rank, so that $W$ and $Q$ are empty, and setting $G = [A\,|\,\bm{1}]^{-1}$
yields the parameterisation of equilibria
\[
\left(\begin{array}{c}w\\x\\y\\z\end{array}\right) = 
\left(\frac{h(\alpha)}{\k}\right)^{\wG} = \k^{-\wG}\circ\left(
\begin{array}{c}
h_2(\alpha)^{\nicefrac{1}{2}} h_1(\alpha)^{-\nicefrac{1}{2}}\\h_4(\alpha)h_5(\alpha)(h_1(\alpha) h_3(\alpha))^{-1}\\
h_4(\alpha)h_1(\alpha)^{-1} \\
h_3(\alpha)h_4(\alpha)^{-1}
\end{array}\right)
\]
where each $h_i$ is affine in the single variable $\alpha$. Indeed, $Z\left(h(\alpha)/\k\right)^{\wG} = K$ reduces to an equivalent polynomial in $\alpha$ of degree no more than $6$, giving an alternative B\'ezout source bound which is the same as the naive BKK source bound.
But more importantly, in any specific instance, we have reduced the problem of counting positive equilibria on some stoichiometric class to counting the roots in $(0,1)$ of a univariate polynomial. 

To illustrate, consider the following particular $(4,5,3)$ network with the above sources:
\[
\mathsf{0} \ce{->[\k_1]} \mathsf{W},\quad 2\mathsf{W} \ce{->[\k_2]} \fZ \,  ,\quad \fY+\fZ \ce{->[\k_3]} \fY \ce{->[\k_4]} \fX+\fZ \ce{->[\k_5]} \fY+\fZ\, .
\]
In this case, the naive BKK bound is $4$
but using the alternative equations we can quickly find that in fact the network has at most one positive nondegenerate equilibrium. For this network we have, $Z = (0,1,1,0)$, and $\ker \Gamma ={\rm span}\{(2,1,1,0,0),(0,0,1,1,1)\}$.
We set $h(\alpha) = (2\alpha, \alpha, 1, 1-\alpha, 1-\alpha)^\top$ for $\alpha \in (0,1)$, yielding the parameterisation of $\Ek$:
\[
\left(\begin{array}{c}w\\x\\y\\z\end{array}\right) = \left(\frac{h(\alpha)}{\k}\right)^{\wG} = \theta \circ \left(\begin{array}{c} 2^{-\nicefrac{1}{2}} \\(2\alpha)^{-1}(1-\alpha)^2\\(2\alpha)^{-1}(1-\alpha)\\(1-\alpha)^{-1}\end{array}\right)
\]
where $\theta = \k^{-\wG}$. The alternative system on the stoichiometric class defined by $x+y = K$ reduces to the equivalent quadratic
\[
\theta_2\alpha^2 - (2\theta_2+\theta_3+2K)\alpha + (\theta_2+\theta_3)=0\,\,,
\]
and we obtain the alternative B\'ezout bound of $2$. In fact, since $K > 0$, the above quadratic equation always has one solution greater than $1$, and so the network can have no more than one positive equilibrium on any stoichiometric class, and when it exists, this equilibrium is always nondegenerate. 
\end{ex}

In the final example of this section, we analyse a network with two conservation laws but where we can easily reduce the alternative system of equations to a single polynomial equation.

\begin{ex}
\label[example]{ex664}
Consider the following dynamically nontrivial bimolecular $(6,6,4)$ network 
\[
\fX_1 \ce{->[\k_1]} \fX_2 \ce{->[\k_2]} \fX_3 \ce{->[\k_3]} \fX_4, \quad \fX_3+\fX_5 \ce{->[\k_4]} \fX_1 + \fX_6, \quad \fX_4+\fX_5 \ce{->[\k_5]} \fX_2 + \fX_6, \quad \fX_6 \ce{->[\k_6]} \fX_5\,,
\]
corresponding to a phosphorylation mechanism  studied in \cite{Kothamachu2015multistability}.
In this case, the finest partition is the trivial partition, say $\mP$, and the network has $\mP$-independent (namely, affinely independent) sources. 
We may take $\{(1,1,0,1,0,1), (0,1,1,0,1,1)\}$ as a basis of $\ker\Gamma$, so that $h(\alpha) = (\alpha, 1, 1-\alpha, \alpha, 1-\alpha, 1)^\top$ for $\alpha \in (0,1)$, and set 
\[
\widehat{G} = \left(\begin{array}{ccrrcc}1&0&0&0&0&0\\0&1&0&0&0&0\\0&0&1&0&0&0\\0&0&1&-1&1&0\\0&0&-1&1&0&0\\0&0&0&0&0&1\end{array}\right), \quad Z = \left(\begin{array}{cccccc}0&0&0&0&1&1\\1&1&1&1&0&0\end{array}\right), \quad \wQ = (1,1,1,1,0,1)^\top\,.
\]
Setting $\t = \k^{-\wG}$, we get from \Cref{thminj} 
\[
\mathcal{E}_\k = \left\{\left(\t_1\alpha\mu, \t_2\mu, \t_3(1-\alpha)\mu, \frac{\t_4(1-\alpha)^2\mu}{\alpha}, \frac{\t_5\alpha}{1-\alpha}, \t_6\mu\right)\,:\,(\alpha, \mu) \in (0,1) \times \R_+\right\}\,.
\]
The alternative equations corresponding to the stoichiometric class defined by total amounts $(K_1, K_2)$ reduce to the equivalent polynomial system
\[
\begin{array}{rcl}
 \t_5\alpha + \t_6(1-\alpha)\mu & = & K_1(1-\alpha)\,,\\[5pt]
 \t_1\alpha^2\mu + \t_2\alpha\mu + \t_3\alpha(1-\alpha)\mu + \t_4(1-\alpha)^2\mu & = & K_2\alpha\,.
\end{array}
\]
where $K_1,K_2$ are positive constants. Eliminating $\mu$ gives a cubic in $\alpha$ whose solutions in $(0,1)$ are in one-to-one correspondence with positive equilibria on the corresponding stoichiometric class. The problem of finding the parameter sets for multistationarity and for potential fold and cusp bifurcations is thus reduced to examination of a single parameterised family of cubic equations. 
\end{ex}

\section{Quadratic networks}

In this section we present some further results on quadratic networks, namely, networks with source molecularity of no more than $2$. These are often considered more (bio)chemically plausible than networks with higher source molecularity, while the particularly simple structure of their exponent matrices allows stronger claims than in the general case. 

\subsection{Combinatorial preliminaries: kernel vectors as graph labellings}
Given an exponent matrix $A\in \Z_{\geq 0}^{m\times n}$, we define the {\bf SR graph} $\mG_A$ (adapted from \cite{banajicraciun}) to be the bipartite graph with $n$ {\bf S-vertices} corresponding to columns of $A$ (i.e., species), and $m$ {\bf R-vertices} corresponding to rows of $A$ (i.e., source complexes, not necessarily distinct). 
 We include $\ell$ parallel edges between the $i$th S-vertex and the $j$th R-vertex if $A_{ji} =\ell \geq 1$. In particular,  an isolated R-vertex  corresponds to a zero row in $A$, and an {\bf R-leaf}, namely an R-vertex of degree $1$, corresponds to a row of $A$ with only one nonzero entry, which is equal to $1$. 

\begin{ex}
\label[example]{exASR}
An example of an exponent matrix $A$ for a (quadratic) $3$-species, $5$-reaction network with sources $2\fX,\, \fX+\fY,\, \fY,\, \fY+\fZ$ and $\fZ$, and the corresponding SR graph $\mG_A$ are shown. 

\begin{center}
\begin{tikzpicture}[domain=0:12,scale=0.45]
\newcommand\Square[1]{+(-#1,-#1) rectangle +(#1,#1)}
\draw[thick] (2,0) -- (6,0);

\draw (2,0) .. controls (0.5,-0.5) and (0.5,-0.5) .. (-1,0);
\draw (2,0) .. controls (0.5,0.5) and (0.5,0.5) .. (-1,0);

\begin{scope}
 \clip(6,0) rectangle (7,2);
 \draw[thick] (8,0) circle (2cm);
\end{scope}

\begin{scope}
 \clip(5.8,-2.2) rectangle (10,0);
 \draw[thick] (8,0) circle (2cm);
\end{scope}

\path (0: 2cm) coordinate (S1);
\path (180: 1cm) coordinate (R1);
\coordinate (R4) at (4,0);
\coordinate (S4) at (6,0);

\begin{scope}[xshift=8cm]
 \path (120: 2cm) coordinate (R5);
 \path (240: 2cm) coordinate (R6);
 \path (300: 2cm) coordinate (S5);
 \path (0: 2cm) coordinate (R7);

\end{scope}

\fill (S1) circle (6pt);
\fill (S4) circle (6pt);
\fill (S5) circle (6pt);

\fill (R1) \Square{6pt} ;
\fill (R4) \Square{6pt} ;
\fill (R5) \Square{6pt} ;
\fill (R6) \Square{6pt} ;
\fill (R7) \Square{6pt} ;

\node [left] at (R1) {$r_1$};
\node [below] at (R4) {$r_2$};
\node [above left] at (R5) {$r_3$};
\node [below left] at (R6) {$r_4$};
\node [right] at (R7) {$r_5$};

\node [below] at (S1) {$s_1$};
\node [right] at (S4) {$s_2$};
\node [below right] at (S5) {$s_3$};

\node at (-8,0) {$\begin{blockarray}{cccc}
 & s_1 & s_2 & s_3 \\
\begin{block}{c(ccc)}
 r_1 & 2 & 0 & 0\\
 r_2 & 1 & 1 & 0\\
 r_3 & 0 & 1 & 0\\
 r_4 & 0 & 1 & 1\\
 r_5 & 0 & 0 & 1\\
\end{block}
\end{blockarray}$};
\end{tikzpicture}
\end{center}

 \vspace{-0.9cm}
 \end{ex}   

We denote by $|H|$ the number of R-vertices in a subgraph $H \subseteq \mG_A$. We refer to a cycle/path $H$ in $\mG_A$ as even (resp., odd) if $|H|$ is even (resp., odd). We include an isolated R-vertex in the definition of an odd path (but an isolated vertex is not a cycle). 
A path in $\mG_A$ is said to be an {\bf R-path} if its endpoints are R-vertices and it includes all edges incident to each of its R-vertices. In \Cref{exASR}, the path $r_5s_3r_4s_2r_3$ is an R-path of $\mG_A$. An isolated R-vertex is deemed an (odd) R-path.  A cycle in $\mG_A$ is said to be an {\bf R-cycle} if it includes all edges in $\mG_A$ incident to each of its R-vertices.

The following two observations about $\ker A$ and $\ker\, [A\,|\,\bm{1}]^\top$ are immediate, and allow us to pass from considering vectors in matrix kernels to considering labellings of $\mG_A$. In both cases, when summing labels on the neighbours of a vertex $v$, if a neighbour $v'$ is connected by $k$ parallel edges to $v$, we count the label of $v'$ $k$ times. 

\begin{itemize}
\item Each element of $\ker A$ corresponds uniquely to an $\R$-valued labelling of the S-vertices of $\mG_A$  such that the sum of labels on neighbours of any R-vertex is zero. We call such a labelling {\bf consistent}. 
\item Each element of $\ker\,[A\,|\,\bm{1}]^\top$ corresponds uniquely to an $\R$-valued labelling of the R-vertices of $\mG_A$ such that the sum of labels on neighbours of any S-vertex is zero and, additionally, all labels sum to $0$. We call such a labelling {\bf good}.
\end{itemize}

When   labels belong to a set $T\subseteq \R$, we refer to the labelling as a $\bm{T}$\textbf{-labelling}. 
A labelling is \textbf{supported} on a subgraph $H$ if the labels are nonzero only for the (R- or S-)vertices in $H$. 

 We say an exponent matrix $A\in \Z_{\geq 0}^{m\times n}$ is \textbf{quadratic} if its row-sums are all at most $2$. In particular, all its entries are $0$, $1$ or $2$. If $A$ is quadratic, then all cycles in $\mG_A$ are R-cycles. Moreover, there can be at most two parallel edges between an R-vertex and an S-vertex of $\mG_A$, and if this occurs, the R-vertex has only one distinct neighbour. A quadratic network is precisely one with a quadratic exponent matrix. 
We say that a connected component of $\mG_A$ is \textbf{quadratic} if all R-vertices in the component are of degree at most $2$.

\subsection{The solvability matrix of a quadratic network with trivial partition} 
Our next goal is to gain some understanding of the affine dependencies amongst the sources of a quadratic network, encoded in $\ker\,[A\,|\,\bm{1}]^\top$, with a view to getting bounds on the degree of the solvability equations for quadratic networks with few sources. Recall that good labellings of $\mG_A$ correspond to vectors in $\ker\,[A\,|\,\bm{1}]^\top$.

The next lemma addresses, for an arbitrary  exponent matrix $A$, the existence of good labellings that can be derived from simple inspection of the SR-graph $\mG_A$.  

\begin{lem}
 \label[lemma]{lemcomb}
 Consider an exponent matrix $A$ and the corresponding SR-graph $\mG_A$. 
\begin{enumerate}[label=(\roman*)]
\item An isolated odd cycle in $\mG_A$ admits no  good labelling other than the zero labelling.

\item If $\mG_A$ includes an even R-cycle or an even R-path $C$, then $\mG_A$ admits a good $\{-1,0,1\}$-labelling supported on $C$.

\item If $\mG_A$ has an acyclic quadratic connected component with no S-vertices of degree less than $2$, and at least one S-vertex of degree greater than $2$, then $\mG_A$ admits a nonzero good $\{-1,0,1\}$-labelling with support contained in the component. 

\item If $\mG_A$ has an quadratic connected component with one S-vertex of degree $4$ and remaining S-vertices of degree $2$, then $\mG_A$ admits a nonzero good $\{-1,0,1\}$-labelling with support contained in the component. 

\item If $\mG_A$ has an quadratic  connected component with two S-vertices of degree $3$ and remaining S-vertices of degree $2$, then $\mG_A$ admits a nonzero good $\{-2,-1,0,1,2\}$-labelling with support contained in  the component. 

\item If $\mG_A$ includes two quadratic connected components, each being either an odd R-path or having an S-vertex of degree $3$ while its remaining S-vertices have degree $2$, then $\mG_A$ admits a nonzero good $\{-2,-1,0,1,2\}$-labelling  with support contained in  the two components. 
\end{enumerate}
\end{lem}
\vspace{-\baselineskip}\begin{proof}
(i) and (ii) are immediate. For (iii), if $s$ is some S-vertex of degree $\ell>2$, then the component includes at least $\ell$ paths meeting only at $s$ with an R-vertex as one endpoint and $s$ as the other.
As at least two of them have the same parity, the component has an even R-path through $s$ and (ii) applies. 

For (iv), if (ii) and (iii) do not apply,  then the  component consists of an odd cycle $C_1$ sharing exactly one S-vertex $s$ with either an odd cycle or an odd R-path $C_2$. Examples of the desired labellings are shown in  \Cref{fig:1}(a). 

For (v), if (iii) does not apply, the component containing the two degree three S-vertices $s_1,s_2$ must take one of the forms
 \begin{center}
 \vspace{-0.2cm}
\begin{tikzpicture}[domain=0:12,scale=0.26, even odd rule]
\newcommand\Square[1]{+(-#1,-#1) rectangle +(#1,#1)}

\draw[color=gray!50] (-4,-3)--(-4,3)--(4,3)--(4,-3)--cycle;

\draw[thick, dashed] (0,0) circle (2cm);
\begin{scope}
 \clip(-2.2,-1) rectangle (2.2,1);
 \draw[thick] (0,0) circle (2cm);
\end{scope}

\draw[thick] (-2,0) -- (-1,0);
\draw[thick] (1,0) -- (2,0);
\draw[thick, dashed] (-1,0) -- (1,0);
\begin{scope}
 \clip(-1,-2.2) rectangle (1,2.2);
 
\end{scope}

\path (0: 2cm) coordinate (S1);
\path (180: 2cm) coordinate (S2);

\fill (S1) circle (6pt);
\fill (S2) circle (6pt);

\node [below right] at (S1) {$\scriptstyle{s_2}$};
\node [below left] at (S2) {$\scriptstyle{s_1}$};

\begin{scope}[xshift=12cm]
 \draw[color=gray!50] (-6,-3)--(-6,3)--(6,3)--(6,-3)--cycle;

\draw[thick, dashed] (0,0) circle (2cm);
\begin{scope}
 \clip(-2.2,-1) rectangle (2.2,1);
 \draw[thick] (0,0) circle (2cm);
\end{scope}

\draw[thick] (-2.5,0) -- (-2,0);
\draw[thick, dashed] (-4,0) -- (-2.5,0);
\draw[thick, dashed] (-4.5,0) -- (-4,0);
\draw[thick] (2,0) -- (2.5,0);
\draw[thick, dashed] (2.5,0) -- (4,0);
\draw[thick] (4,0) -- (4.5,0);

\path (0: 2cm) coordinate (S1);
\path (180: 2cm) coordinate (S2);
\coordinate (R1) at (-4.5,0);
\coordinate (R2) at (4.5,0);

\fill (S1) circle (6pt);
\fill (S2) circle (6pt);
\fill (R1) \Square{6pt} ;
\fill (R2) \Square{6pt} ;

\node [below right] at (S1) {$\scriptstyle{s_2}$};
\node [below left] at (S2) {$\scriptstyle{s_1}$};

\end{scope}

%broken dumbbell
\begin{scope}[xshift=24cm]
\draw[color=gray!50] (-3.5,-3)--(-3.5,3)--(10.5,3)--(10.5,-3)--cycle;

\draw[thick, dashed] (0,0) circle (2cm);

\draw[thick, dashed] (2,0) -- (5,0);

\begin{scope}
 \clip(5,0) rectangle (6,2);
 \draw[thick, dashed] (7,0) circle (2cm);
\end{scope}

\begin{scope}
 \clip(4.8,-2.2) rectangle (9,0);
 \draw[thick, dashed] (7,0) circle (2cm);
\end{scope}

\path (0: 2cm) coordinate (S1);
\coordinate (S2) at (5,0);

\begin{scope}[xshift=7cm]
 \path (120: 2cm) coordinate (R5);
 \path (0: 2cm) coordinate (R7);
\end{scope}

\fill (S1) circle (6pt);
\fill (S2) circle (6pt);

\fill (R5) \Square{6pt} ;
\fill (R7) \Square{6pt} ;

\node [left] at (S1) {$\scriptstyle{s_1}$};
\node [right] at (S2) {$\scriptstyle{s_2}$};

\end{scope}

%dumbbell
\begin{scope}[xshift=40cm]
\draw[color=gray!50] (-3.5,-3)--(-3.5,3)--(10.5,3)--(10.5,-3)--cycle;

\draw[thick, dashed] (0,0) circle (2cm);
 \draw[thick, dashed] (7,0) circle (2cm);
\draw[thick, dashed] (2,0) -- (5,0);

\path (0: 2cm) coordinate (S1);
\coordinate (S2) at (5,0);

\fill (S1) circle (6pt);
\fill (S2) circle (6pt);

\node [left] at (S1) {$\scriptstyle{s_1}$};
\node [right] at (S2) {$\scriptstyle{s_2}$};

\end{scope}

\end{tikzpicture}
\vspace{-0.2cm}
\end{center}
In the first case the component {\em must} include an even R-cycle and (ii) applies. In the second, it {\em must} include an even R-path and (ii) again applies. In the latter two cases, if the component includes an even R-cycle or an even R-path, then (ii) applies. Otherwise it consists of a path $P$ having S-vertices $s_1,s_2$ as endpoints, while attached to $s_1,s_2$ are either: an odd R-path and an odd R-cycle; or two odd R-cycles. Both possible components admit good labellings as illustrated by example in \Cref{fig:1}(b).

Finally, for part (vi), if (ii), hence (iii), does not apply, any component having all S-vertices of degree $2$ except one, say $s$, of degree $3$, must consist of an odd cycle $C$ with a path $P$ attached at $s$ and having an R-vertex as endpoint. 
 We may label the R-vertices in $C$ with $\pm 1$ and those in $P$ with $\pm 2$ 
to obtain a labelling such that label-sums around each S-vertex are zero, and the total label-sum is $\pm 1$. On the other hand, we can label R-vertices of any odd R-path with alternating values $1$ and $-1$, giving a label-sum of $\pm 1$. See \Cref{fig:1}(c). Choosing labels so that one component has label-sum $1$ and the other $-1$, we obtain the desired good labelling. 
\end{proof}
\color{black}

\begin{figure}[!t]
    \centering

 \begin{minipage}[h]{0.28\textwidth}
 \begin{center}
\begin{tikzpicture}[domain=0:12,scale=0.3]
\newcommand\Square[1]{+(-#1,-#1) rectangle +(#1,#1)}

\draw[color=gray!50] (-4,-3)--(-4,3)--(8,3)--(8,-3)--cycle;

\draw[thick] (0,0) circle (2cm);

\begin{scope}
 \clip(2,0) rectangle (3,2);
 \draw[thick] (4,0) circle (2cm);
\end{scope}

\begin{scope}
 \clip(1.8,-2.2) rectangle (6,0);
 \draw[thick] (4,0) circle (2cm);
\end{scope}

\path (0: 2cm) coordinate (S1);
\path (60: 2cm) coordinate (R1);
\path (120: 2cm) coordinate (S2);
\path (180: 2cm) coordinate (R2);
\path (240: 2cm) coordinate (S3);
\path (300: 2cm) coordinate (R3);

\begin{scope}[xshift=4cm]
 \path (120: 2cm) coordinate (R5);
 \path (240: 2cm) coordinate (R6);
 \path (300: 2cm) coordinate (S5);
 \path (0: 2cm) coordinate (R7);

\end{scope}

\fill (S1) circle (6pt);
\fill (S2) circle (6pt);
\fill (S3) circle (6pt);
\fill (S5) circle (6pt);

\fill (R1) \Square{6pt} ;
\fill (R2) \Square{6pt} ;
\fill (R3) \Square{6pt} ;
\fill (R5) \Square{6pt} ;
\fill (R6) \Square{6pt} ;
\fill (R7) \Square{6pt} ;

\node [above] at (R1) {$\scriptstyle{1}$};
\node [left] at (R2) {$\scriptstyle{-1}$};
\node [below] at (R3) {$\scriptstyle{1}$};
\node [above] at (R5) {$\scriptstyle{-1}$};
\node [below] at (R6) {$\scriptstyle{-1}$};
\node [right] at (R7) {$\scriptstyle{1}$};

\begin{scope}[yshift=-7cm]
 \draw[color=gray!50] (-4,-3)--(-4,3)--(8,3)--(8,-3)--cycle;

\draw[thick] (0,0) circle (2cm);
\draw[thick] (4,0) circle (2cm);

\path (0: 2cm) coordinate (S1);
\path (60: 2cm) coordinate (R1);
\path (120: 2cm) coordinate (S2);
\path (180: 2cm) coordinate (R2);
\path (240: 2cm) coordinate (S3);
\path (300: 2cm) coordinate (R3);

\begin{scope}[xshift=4cm]
 \path (60: 2cm) coordinate (S4);
 \path (120: 2cm) coordinate (R5);
 \path (240: 2cm) coordinate (R6);
 \path (300: 2cm) coordinate (S5);
 \path (0: 2cm) coordinate (R7);

\end{scope}

\fill (S1) circle (6pt);
\fill (S2) circle (6pt);
\fill (S3) circle (6pt);
\fill (S4) circle (6pt);
\fill (S5) circle (6pt);

\fill (R1) \Square{6pt} ;
\fill (R2) \Square{6pt} ;
\fill (R3) \Square{6pt} ;
\fill (R5) \Square{6pt} ;
\fill (R6) \Square{6pt} ;
\fill (R7) \Square{6pt} ;

\node [above] at (R1) {$\scriptstyle{1}$};
\node [left] at (R2) {$\scriptstyle{-1}$};
\node [below] at (R3) {$\scriptstyle{1}$};
\node [above] at (R5) {$\scriptstyle{-1}$};
\node [below] at (R6) {$\scriptstyle{-1}$};
\node [right] at (R7) {$\scriptstyle{1}$};

\end{scope}

\end{tikzpicture}

(a) \Cref{lemcomb}(iv)
\end{center}
\end{minipage} \quad
\begin{minipage}[h]{0.4\textwidth}
\begin{center}
\begin{tikzpicture}[domain=0:12,scale=0.3]
\newcommand\Square[1]{+(-#1,-#1) rectangle +(#1,#1)}

%broken dumbbell
\draw[color=gray!50] (-5.5,-3)--(-5.5,3)--(12.5,3)--(12.5,-3)--cycle;

\draw[thick] (0,0) circle (2cm);
\draw[thick] (2,0) -- (5,0);

\begin{scope}
 \clip(5,0) rectangle (6,2);
 \draw[thick] (7,0) circle (2cm);
\end{scope}

\begin{scope}
 \clip(4.8,-2.2) rectangle (9,0);
 \draw[thick] (7,0) circle (2cm);
\end{scope}

\path (0: 2cm) coordinate (S1);
\path (60: 2cm) coordinate (R1);
\path (120: 2cm) coordinate (S2);
\path (180: 2cm) coordinate (R2);
\path (240: 2cm) coordinate (S3);
\path (300: 2cm) coordinate (R3);
\coordinate (R4) at (3.5,0);
\coordinate (S4) at (5,0);

\begin{scope}[xshift=7cm]
 \path (120: 2cm) coordinate (R5);
 \path (240: 2cm) coordinate (R6);
 \path (300: 2cm) coordinate (S5);
 \path (0: 2cm) coordinate (R7);
\end{scope}

\fill (S1) circle (6pt);
\fill (S2) circle (6pt);
\fill (S3) circle (6pt);
\fill (S4) circle (6pt);
\fill (S5) circle (6pt);

\fill (R1) \Square{6pt} ;
\fill (R2) \Square{6pt} ;
\fill (R3) \Square{6pt} ;
\fill (R4) \Square{6pt} ;
\fill (R5) \Square{6pt} ;
\fill (R6) \Square{6pt} ;
\fill (R7) \Square{6pt} ;

\node [above right] at (R1) {$\scriptstyle{1}$};
\node [left] at (R2) {$\scriptstyle{-1}$};
\node [below right] at (R3) {$\scriptstyle{1}$};
\node [below] at (R4) {$\scriptstyle{-2}$};
\node [above left] at (R5) {$\scriptstyle{1}$};
\node [below left] at (R6) {$\scriptstyle{1}$};
\node [right] at (R7) {$\scriptstyle{-1}$};

%dumbbell
\begin{scope}[xshift=-1.5cm,yshift=-7cm]

 \draw[color=gray!50] (-4,-3)--(-4,3)--(14,3)--(14,-3)--cycle;

\draw[thick] (0,0) circle (2cm);
\draw[thick] (2,0) -- (8,0);
\draw[thick] (10,0) circle (2cm);

\path (0: 2cm) coordinate (S1);
\path (60: 2cm) coordinate (R1);
\path (120: 2cm) coordinate (S2);
\path (180: 2cm) coordinate (R2);
\path (240: 2cm) coordinate (S3);
\path (300: 2cm) coordinate (R3);
\coordinate (R4) at (3.5,0);
\coordinate (S4) at (5,0);
\coordinate (R8) at (6.5,0);
\coordinate (S8) at (8,0);

\begin{scope}[xshift=10cm]
 \path (120: 2cm) coordinate (R5);
 \path (240: 2cm) coordinate (R6);
 \path (300: 2cm) coordinate (S5);
 \path (0: 2cm) coordinate (R7);
 \path (60: 2cm) coordinate (S6);

\end{scope}

\fill (S1) circle (6pt);
\fill (S2) circle (6pt);
\fill (S3) circle (6pt);
\fill (S4) circle (6pt);
\fill (S5) circle (6pt);
\fill (S6) circle (6pt);
\fill (S8) circle (6pt);

\fill (R1) \Square{6pt} ;
\fill (R2) \Square{6pt} ;
\fill (R3) \Square{6pt} ;
\fill (R4) \Square{6pt} ;
\fill (R5) \Square{6pt} ;
\fill (R6) \Square{6pt} ;
\fill (R7) \Square{6pt} ;
\fill (R8) \Square{6pt} ;

\node [above right] at (R1) {$\scriptstyle{1}$};
\node [left] at (R2) {$\scriptstyle{-1}$};
\node [below right] at (R3) {$\scriptstyle{1}$};
\node [below] at (R4) {$\scriptstyle{-2}$};
\node [below] at (R8) {$\scriptstyle{2}$};
\node [above left] at (R5) {$\scriptstyle{-1}$};
\node [below left] at (R6) {$\scriptstyle{-1}$};
\node [right] at (R7) {$\scriptstyle{1}$};

\end{scope}

\end{tikzpicture}

(b) \Cref{lemcomb}(v)

\end{center}
\end{minipage}\quad
   \begin{minipage}[h]{0.22\textwidth}
   \begin{center}
\begin{tikzpicture}[domain=0:12,scale=0.3]
\newcommand\Square[1]{+(-#1,-#1) rectangle +(#1,#1)}

\draw[color=gray!50] (-4,-3)--(-4,3)--(5,3)--(5,-3)--cycle;

\draw[thick] (0,0) circle (2cm);
\draw[thick] (2,0) -- (3.5,0);

\path (0: 2cm) coordinate (S1);
\path (60: 2cm) coordinate (R1);
\path (120: 2cm) coordinate (S2);
\path (180: 2cm) coordinate (R2);
\path (240: 2cm) coordinate (S3);
\path (300: 2cm) coordinate (R3);
\coordinate (R4) at (3.5,0);

\begin{scope}[yshift=-7cm]
\draw[color=gray!50] (-4,-3)--(-4,3)--(5,3)--(5,-3)--cycle;

\coordinate (S4) at (-1,0);
\coordinate (R7) at (-2.5,0);
\coordinate (R8) at (0.5,0);
\coordinate (S7) at (2,0);
\coordinate (R9) at (3.5,0);

\draw[thick] (-2.5,0) -- (3.5,0);

\end{scope}

\fill (S1) circle (6pt);
\fill (S2) circle (6pt);
\fill (S3) circle (6pt);
\fill (S4) circle (6pt);
\fill (S7) circle (6pt);

\fill (R1) \Square{6pt} ;
\fill (R2) \Square{6pt} ;
\fill (R3) \Square{6pt} ;
\fill (R4) \Square{6pt} ;
\fill (R7) \Square{6pt} ;
\fill (R8) \Square{6pt} ;
\fill (R9) \Square{6pt} ;

\node [above right] at (R1) {$\scriptstyle{1}$};
\node [left] at (R2) {$\scriptstyle{-1}$};
\node [below right] at (R3) {$\scriptstyle{1}$};
\node [below] at (R4) {$\scriptstyle{-2}$};

\node [below] at (R7) {$\scriptstyle{1}$};
\node [below] at (R8) {$\scriptstyle{-1}$};
\node [below] at (R9) {$\scriptstyle{1}$};

\end{tikzpicture}

\smallskip
(c) \Cref{lemcomb}(vi)
\end{center}
\end{minipage}

    \caption{Good labellings in \Cref{lemcomb}(iv)-(vi).  }
    \label{fig:1}
\end{figure}

If the SR-graph $\mG_A$ of the exponent matrix of a  network satisfies any of the conditions in \Cref{lemcomb}, then we obtain information on choices of rows of $W$.  We use this information to make a strong claim about quadratic exponent matrices next.
 
 Given $w \in \Z^n$ such that $\sum_iw_i=0$, define the {\bf degree} of $w$ as $\mathrm{deg}\,w :=\sum_{i=1}^n|w_i|/2$. 
 Equivalently, $\mathrm{deg}\,w$ is the sum of the positive entries of $w$.

\begin{thm}
 \label{thmquadker}
Given a quadratic exponent matrix $A\in \Z^{(n+2)\times n}_{\geq 0}$, there exists a nonzero (row) vector $w \in \ker\, A\,|\,\bm{1}]^\top$ satisfying $\mathrm{deg}\,w\leq n+1$ and $|w_i|\leq 2$ for all $i=1,\dots,n+2$. 
 \end{thm}
 \vspace{-\baselineskip}
\begin{proof}
It is enough to show that there exists a nonzero good $\{-2,-1,0,1,2\}$-labelling of $\mG_A$. Indeed, if this is the case, 
the corresponding nonzero vector $w\in \ker\,[A\,|\,\bm{1}]^\top$
satisfies $|w_i|\leq 2$ for all $i=1,\dots,n+2$ and hence $\mathrm{deg}\,w  \leq n+2$. If we have equality in the latter, then all entries of $w$ must be $\pm 2$ (and $n$ even). Dividing by $2$ we obtain a new vector of degree at most $n+1$ as desired.   
 
 Observe that it is enough to show that a union of connected components of  $\mathcal{G}_A$ admits a nonzero good $\{-2,-1,0,1,2\}$-labelling. As $\mG_A$ has two more R-vertices than S-vertices, it cannot consist of a disjoint union of cycles. We will use that $A$ being quadratic implies that $\sum_{i,j}A_{ij} \leq 2n+4$ and that the degree of any R-vertex in $\mG_A$ is $0, 1$ or $2$.  

We now prove the claim by induction. It is easily checked for $n=1$. Fix $n \geq 2$ and let us assume the claim holds for $(\ell+2)\times \ell$ matrices with $\ell\leq n-1$. We consider first some scenarios where the claim either follows from the inductive hypothesis or is easily confirmed by direct calculation:

(1) If $\mG_A$ has an isolated S-vertex $s$ (corresponding to a zero column of $A$), let $r$ be an arbitrary R-vertex. 
If $\mG_A$ has an S-vertex $s$ that has a unique neighbour, let $r$ be this neighbour. In both cases, label $r$ with $0$ and the claim follows from the inductive hypothesis applied to the induced subgraph of $\mG_A$ obtained by removing $s$ and $r$.

(2) If $\sum_{i,j}A_{ij} > 2n+2$, then,  w.l.o.g., the top $n+1$ rows of $A$ form a matrix, say $B$, whose row-sums are all $2$. By the inductive hypothesis, there exists a row vector $\widetilde{w} \in \ker\,[\widetilde{A}\,|\,\bm{1}]^\top$ of the desired form,  where $\widetilde{A}$ is obtained by deleting the first column and final row of $A$. As the row-sums of $B$ are $2$,  the first column of $[B\,|\,\bm{1}]$ is a linear combination of its other columns, hence orthogonal to $\widetilde{w}$. We obtain $(\widetilde{w},0)[A\,|\,\bm{1}] = 0$ and $w=(\widetilde{w},0)$ is as desired.

(3) If $\mG_A$ includes an isolated odd cycle $C$, say with $\ell\geq 1$ R-vertices, then by \Cref{lemcomb}(i),
any good labelling of $\mG_A$ labels the R-vertices of $C$ with zeros. We may remove $C$ to obtain a smaller nonempty graph with $n-\ell$ S-vertices and $n-\ell+2$ R-vertices, and apply the inductive hypothesis.

 If neither (1) nor (2) apply, hence $2n \leq \sum_{i,j}A_{ij} \leq 2n+2$, there are three possible scenarios:
\begin{enumerate}[label=(\alph*)]
\item $\sum_{i,j}A_{ij} = 2n$: all S-vertices have degree $2$. 
\item $\sum_{i,j}A_{ij} = 2n+1$: one S-vertex has degree $3$ and the rest have degree $2$. 
\item $\sum_{i,j}A_{ij} = 2n+2$:  either one S-vertex has degree $4$  or two have degree $3$, and the rest have degree $2$. 
\end{enumerate}
We treat each case separately, assuming that additionally (3) does not apply.

(a) As all S-vertices have degree $2$, each connected component of $\mathcal{G}_A$ is either a cycle (hence an R-cycle) or an R-path.  If one of these is even, then \Cref{lemcomb}(ii) gives a good $\{-1,0,1\}$-labelling supported on the component. Otherwise, as (3) does not apply, $\mathcal{G}_A$ consists of disjoint odd R-paths. As $\mathcal{G}_A$ has two more R-vertices than S-vertices, it must consist of exactly two odd R-paths, 
and \Cref{lemcomb}(vi) gives the desired nonzero good $\{-2,-1,0,1,2\}$-labelling. 
 
(b) By a counting argument as in (a), $\mG_A$ has   either three R-leaves, or one R-leaf and an isolated R-vertex.
It follows that either part (iii) or (vi) of \Cref{lemcomb} apply, 
and there exists a nonzero good $\{-2,-1,0,1,2\}$-labelling. Indeed, if the vertex of degree $3$ does not belong to an acyclic component, then necessarily this component has exactly one  R-leaf. In that case there must be an additional R-path and \Cref{lemcomb}(vi) applies.

 (c) $\mG_A$ must now include either two R-leaves or an isolated R-vertex. Parts (iv) and (v) of \Cref{lemcomb}  give respectively the scenarios where $\mG_A$ has   an S-vertex of degree $4$ or two S-vertices of degree $3$ in the same connected component. If two S-vertices of degree $3$ are in different connected components, then the claim follows from part (vi) of \Cref{lemcomb}.

As we have exhausted all possibilities, this completes the proof.
\end{proof}

\begin{rem}[Observations related to \Cref{thmquadker}]
 \label[remark]{remquadker}
We gather together some observations about quadratic networks arising directly from \Cref{lemcomb} and \Cref{thmquadker}, or their proof, or the preliminaries. Recall that for a quadratic network, every cycle is an R-cycle.

\begin{enumerate}
\item There are only two topologies of $\mG_A$ leading to $\mathrm{deg}\,w=n+1$ (namely, maximal degree): $n$ entries of $w$ must be $\pm 2$ and the remaining two $1$ and $-1$. This can only arise from the scenario in \Cref{lemcomb}(vi) in which case $\mG_A$ has two components: either both consist of a 2-vertex cycle with an attached path; or one is a 2-vertex cycle with an attached path, and the other is an isolated R-vertex as shown:
\begin{center}
\begin{tikzpicture}[domain=0:12,scale=0.4]
\newcommand\Square[1]{+(-#1,-#1) rectangle +(#1,#1)}
\draw[color=gray!50] (-2,-2.5)--(-2,1)--(9,1)--(9,-2.5)--cycle;

\draw (2,0) .. controls (0.5,-0.5) and (0.5,-0.5) .. (-1,0);
\draw (2,0) .. controls (0.5,0.5) and (0.5,0.5) .. (-1,0);

\path (0: 2cm) coordinate (S1);
\path (180: 1cm) coordinate (R1);
\coordinate (R2) at (8,0);
\draw[thick, dashed] (2,0) -- (8,0);

\fill (S1) circle (6pt);
\fill (R1) \Square{6pt} ;
\fill (R2) \Square{6pt} ;

\begin{scope}[yshift=-1.5cm]
\coordinate (R1) at (4,0);\fill (R1) \Square{6pt} ;
\end{scope}

\begin{scope}[xshift = -14cm]

 \draw[color=gray!50] (-2,-2.5)--(-2,1)--(9,1)--(9,-2.5)--cycle;

\draw (2,0) .. controls (0.5,-0.5) and (0.5,-0.5) .. (-1,0);
\draw (2,0) .. controls (0.5,0.5) and (0.5,0.5) .. (-1,0);

\path (0: 2cm) coordinate (S1);
\path (180: 1cm) coordinate (R1);
\coordinate (R2) at (8,0);
\draw[thick, dashed] (2,0) -- (8,0);

\fill (S1) circle (6pt);
\fill (R1) \Square{6pt} ;
\fill (R2) \Square{6pt} 
;
\begin{scope}[yshift=-1.5cm]
\draw (2,0) .. controls (0.5,-0.5) and (0.5,-0.5) .. (-1,0);
\draw (2,0) .. controls (0.5,0.5) and (0.5,0.5) .. (-1,0);

\path (0: 2cm) coordinate (S1);
\path (180: 1cm) coordinate (R1);
\coordinate (R2) at (8,0);
\draw[thick, dashed] (2,0) -- (8,0);

\fill (S1) circle (6pt);
\fill (R1) \Square{6pt} ;
\fill (R2) \Square{6pt} ;

\end{scope}

\end{scope}

\end{tikzpicture}
\end{center}

\item If some species appears in the source of no more than one reaction, then we can choose $w$ such that $\mathrm{deg}\,w \leq n$.
\item If $\sum_{i,j}A_{ij} > 2n+2$, 
then we can choose $w$ such that $\mathrm{deg}\,w \leq n$.
\item If $\mG_A$ includes isolated odd cycles $C_1, \ldots, C_s$, then we can choose $w$ with  $\mathrm{deg}\,w \leq n+1-\sum_{i=1}^s|C_i|$.
\item If $\mG_A$ includes an even cycle $C$, then as $C$ contains at most $n$ R-vertices, we can choose $w$ such that $\mathrm{deg}\,w = |C|/2 \leq n/2$. 
\item If $\mG_A$ includes an even R-path $P$, then we can choose $w$ such that $\mathrm{deg}\,w \leq |P|/2 \leq (n+1)/2$.
\end{enumerate}

The reader may infer from the proof of \Cref{thmquadker} many more such constraints on the degree of $w$ from the topology of $\mG_A$. Observe that part (6) of \Cref{remquadker} applies to arbitrary exponent matrices, not necessarily quadratic, and the same holds for part (5) of \Cref{remquadker} provided $C$ is an R-cycle. This follows in both cases from \Cref{lemcomb}(ii).
\end{rem}

\subsection{Quadratic full rank networks with few sources}\label{sec:quadratic_full}
The main result of this subsection, \Cref{thmquadnn2n}, concerns full rank quadratic networks with $n$ species and $n+2$ distinct sources (the case of $n+1$ or fewer distinct sources is already covered by \Cref{thmfewsources}). The number of positive nondegenerate equilibria of such a network is at most $n+1$ by \Cref{thmfewsources}; however, here we show, using \Cref{thmquadker}, that this is also the alternative B\'ezout source bound. 

\begin{thm}
\label{thmquadnn2n}
Let $k \geq 2$ and consider a dynamically nontrivial, quadratic $(n,n+k,n)$ network with $n+2$ distinct sources, and which is not source deficient. Such a network has an alternative B\'ezout source bound of at most $n+1$, and consequently admits no more than $n+1$ positive nondegenerate equilibria.
\end{thm}
\vspace{-\baselineskip}
\begin{proof}
The matrix $W$ corresponding to the trivial partition and chosen as in \Cref{lem:Wstruct}, has size $(k-1) \times (n+k)$, and $k-2$ linearly independent ``trivial'' rows with only two nonzero entries equal to $1$ and $-1$, corresponding to linear solvability equations. The remaining ``nontrivial'' row of $W$, say $w$, has support, say $P$, which does not include indices of two reactions with the same source. We assume without loss of generality that $P$ does not include the indices $i_1,\dots,i_{k-2}$.
The claim now follows if $w$ can be chosen to have degree no more than $n+1$. This however, is immediate from \Cref{thmquadker}: we remove the rows from $A$ with indices $i_1,\dots,i_{k-2}$
to obtain a new matrix $\tilde{A} \in \Z^{(n+2) \times n}_{\geq 0}$; obtain, via \Cref{thmquadker}, a nonzero row vector $\tilde{w} \in \mathbb{Z}^{n+2}$ of degree no more than $n+1$ such that $\tilde{w}[\tilde{A}\,|\,\bm{1}]=0$; and finally obtain $w \in \mathbb{Z}^{n+k}$ of the same degree as $\tilde{w}$, by augmenting $\tilde{w}$ with zeros at entries $i_1,\dots,i_{k-2}$. 
\end{proof}

\begin{rem}
For $n=1$, the bound in \Cref{thmquadnn2n} is not optimal: by \cite[Lemma 20]{BBH2024smallbif}, quadratic rank-one networks do not admit multiple positive nondegenerate equilibria, regardless of the number of species (see also \cite{pantea:voitiuk:2022}). 
\end{rem}

\begin{rem}[Bounds based on combinatorial properties of networks]
Recall that the bound on positive nondegenerate equilibria in \Cref{thmquadnn2n} is also true for networks which are not necessarily quadratic (c.f. \Cref{prop:fewnomials} and \Cref{thm:n_plus_2}). From the constructive proof of \Cref{thmquadker}, via combinatorial considerations alone, we get bounds on the number of positive nondegenerate equilibria for any full rank quadratic network with $n$ species and at most $n+2$ distinct sources (see \Cref{remquadker}).  These bounds are often considerably lower than the worst case in \Cref{thmquadnn2n}.  For example, \Cref{remquadker}(1) tells us that the bound is not attained unless $A$ has exactly one or two rows with an entry equal to $2$.  The refined bounds are also applicable to nonquadratic networks whose SR-graph $\mG_A$ satisfies some of the properties in \Cref{lemcomb}, for example, having an even R-cycle or a quadratic component with an S-vertex of degree $4$ and remaining S-vertices of degree $2$.
A drawback, however, of the techniques in the proof of \Cref{thmquadnn2n}, is that they do not easily allow us to prove improvements for networks which factor over a nontrivial partition, as was possible in \Cref{thm:n_plus_2} in the general case.

\end{rem}

\begin{rem}[Open questions on sharpness of the bounds in \Cref{thmquadnn2n}]
 \label[remark]{remnn2n}
In the case $n=2$, it can be checked, with the help of \texttt{Mathematica}, that no quadratic $(2,4,2)$ network with product molecularity of less than $6$ admits three positive nondegenerate equilibria. An example with product molecularity of $6$ having three positive nondegenerate equilibria is presented in \cite[Remark~38]{BBH2024smallbif}, and thus the bound in \Cref{thmquadnn2n} is sharp for $n=2$. By the inheritance results in \cite{banajisplitting,feliu:intermediates}, some quadratic $(3,5,3)$ networks must also admit at least three positive, nondegenerate equilibria: we can add an additional species as an intermediate into some reaction of a quadratic $(2,4,2)$ network admitting three positive, nondegenerate equilibria to obtain such a network. More generally, if some quadratic $(n,n+2,n)$ network admits $k$ positive nondegenerate equilibria, then the same holds for some quadratic $(n+1,n+3,n+1)$ network. However, we were not able to find a quadratic $(3,5,3)$ network admitting {\em four} positive nondegenerate equilibria, and thus achieving the bound in \Cref{thmquadnn2n}. We leave open the question of whether the bound in \Cref{thmquadnn2n} can be achieved by quadratic $(n, n+2, n)$ networks for $n \geq 3$.
\end{rem}

We close this subsection with some remarks on quadratic $(n,m,n)$ networks with more than $n+2$ distinct sources. Recall that by \Cref{remaltBezout} for any $(n,m,n)$ network, not necessarily quadratic, with at most $n+2$ distinct sources, the alternative Bézout source bound is never larger than the naive BKK source bound. \Cref{BKKbeatsBezout} illustrated that this result does not extend to the case of $n+3$ distinct sources, even for quadratic networks. On the other hand, the next example provides an instance with  $n+3$ distinct sources where an alternative B\'ezout source bound improves on the naive BKK source bound. 

\begin{ex}[The alternative B\'ezout source bound improves on the naive BKK source bound]
 \label[example]{ex363}
Consider any $(3,6,3)$ network with sources
\[
\mathsf{0}, \quad 2\fX, \quad \fY, \quad 2\fY, \quad \fZ, \quad \fY+\fZ\,.
\]
A mixed volume calculation gives a naive BKK source bound of $6$.
However, with minimal effort, by examining $\ker\,[A\,|\,\bm{1}]^\top$, we obtain an alternative B\'ezout source bound of $4$ for every such network. The details are left to the reader who may also find it interesting to interpret the claim in terms of labellings of $\mG_A$. 
\end{ex}

Although we cannot be sure which bound will be sharpest when there are $n+3$ or more distinct sources, as a general principle, for an $(n,m,n)$ network with $n+k$ sources where $k$ is small relative to $n$, it is often easier to extract information about positive equilibria from the smaller solvability system rather than the original mass action equations.

\subsection{The toricity matrix for quadratic networks with the trivial partition}
Note that we can find a basis of $\ker\,[A\,|\,\bm{1}]$ consisting of vectors of two forms:
 vectors arising from linear dependencies amongst columns of $A$, namely, vectors in $\ker\,A$ augmented with a terminal zero;
  and at most one vector not of this form. 
 The latter only occurs if $\bm{1} \in \im A$. Geometrically, this condition means that the dimension of the affine hull of the sources is strictly smaller than the dimension of their linear span, equivalently $0$ does not belong to the affine hull of the sources.
 
We treat these two types of vectors in the next lemmas.

Given a quadratic exponent matrix $A$, 
we define the following two graphs easily derived from $\mG:=\mG_A$. (i) $\widehat{\mG}$ is the subgraph of $\mG$ obtained by removing all connected components of $\mG$ which include an R-vertex with fewer than two distinct neighbours. As $A$ is quadratic, $\widehat{\mG}$ includes no parallel edges, and all R-vertices in $\widehat{\mG}$ have degree $2$. (ii) $\mG^*$ is derived from $\widehat{\mG}$ by
replacing each R-vertex and its incident edges with an edge between its two neighbours, to obtain a graph on S-vertices alone. ($\mG^*$ may include parallel edges.)

\begin{lem}
 \label[lemma]{lemquadker1}
For any quadratic network, $\ker A$ has a basis of $\{-1,0,1\}$-valued vectors with disjoint support. Any row-reduced basis will be of this form. The nullity of $A$ is precisely the number of bipartite components of $\mG^*$.
\end{lem}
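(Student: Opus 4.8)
The plan is to translate everything into the language of consistent S-labellings of $\mG_A$ set up just before the lemma: $\ker A$ is in bijection with the S-labellings $x$ of $\mG_A$ for which the sum of the labels on the neighbours of each R-vertex (counted with multiplicity) vanishes. In a quadratic network every R-vertex has degree at most two, so at each R-vertex $r$ exactly one of the following occurs: (a) $r$ is isolated and imposes no constraint; (b) $r$ has a single distinct neighbour $s_i$ -- either $r$ has degree one, or $r$ is joined to $s_i$ by a double edge -- and the constraint reduces to $x_i=0$; or (c) $r$ has two distinct neighbours $s_i\neq s_j$ and the constraint is $x_i+x_j=0$. First I would note that a component of $\mG_A$ is retained in $\widehat{\mG}$ precisely when all of its R-vertices are of type (c), and that in a discarded component every S-label is forced to be $0$: such a component either consists of a single isolated R-vertex, with no S-vertices at all, or it contains a type-(b) R-vertex, and then the value $0$ at that neighbour propagates along the type-(c) constraints (an R-vertex lying on a path between two S-vertices necessarily has both as neighbours, hence is of type (c)) to every S-vertex of the component. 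Hence every consistent labelling is supported on the S-vertices of $\widehat{\mG}$, and there only constraints of types (a) and (c) remain.

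Next I would work on $\widehat{\mG}$, where every R-vertex has exactly two distinct neighbours and thus corresponds to an edge of $\mG^*$; restricting a consistent labelling to a component $B$ of $\mG^*$ is the same as choosing reals $x$ on the S-vertices of $B$ with $x_i=-x_j$ for every edge $s_is_j$ of $B$. I would then establish the elementary fact that such an $x$ is nonzero for some choice if and only if $B$ is bipartite, in which case the space of such $x$ is one-dimensional, spanned by the $(-1,0,1)$-vector $v_B$ equal to $+1$ on one colour class of $B$ and $-1$ on the other (with the convention that an isolated S-vertex is a bipartite component on its own, giving a standard basis vector). The ``if'' direction is the $\pm1$-colouring; for ``only if'', if $x\neq 0$ then along any walk from a vertex carrying a nonzero value the entries alternate between that value and its negative, so $x$ vanishes nowhere on $B$ and its signs give a proper $2$-colouring, while the value at a vertex is then determined by the parity of its distance to the base vertex (well defined exactly because $B$ is bipartite). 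Collecting these over all components gives $\ker A=\bigoplus_B \langle v_B\rangle$, the sum over the bipartite components $B$ of $\mG^*$; the $v_B$ have pairwise disjoint supports since distinct components have disjoint vertex sets, so $\{v_B\}$ is a basis of $\ker A$ of the claimed form, and in particular the nullity of $A$ equals the number of bipartite components of $\mG^*$.

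For the claim about row-reduced bases, I would form the matrix $M$ with rows $v_B$. Disjointness of supports means each column of $M$ has at most one nonzero entry, so reaching reduced row echelon form requires no elimination between rows, only rescaling each row so its pivot entry is $1$; this multiplies $v_B$ by $\pm1$, leaving the rows $(-1,0,1)$-vectors with disjoint supports. Since the reduced row echelon form of a subspace is unique, every row-reduced basis of $\ker A$ is of this form. I expect the main obstacle to be the careful bookkeeping around the definitions of $\widehat{\mG}$ and $\mG^*$ -- double edges, degree-one and isolated R-vertices, isolated S-vertices, and parallel edges that $\mG^*$ may contain -- together with stating cleanly the (standard) equivalence between bipartiteness of a component of $\mG^*$ and the existence of a nonzero sign-alternating labelling; the remaining linear algebra is routine.
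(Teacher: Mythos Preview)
Your proposal is correct and follows essentially the same approach as the paper's proof: both identify $\ker A$ with consistent S-labellings, observe that components of $\mG_A$ containing an R-vertex with fewer than two distinct neighbours force all S-labels to zero, and then reduce to the standard fact that a sign-alternating labelling on a component of $\mG^*$ exists (and is unique up to scale) precisely when that component is bipartite. The only notable difference is in the row-reduced claim: the paper argues that row-reduction produces vectors of minimal support, which must therefore each be supported on a single bipartite component, while you invoke uniqueness of reduced row echelon form directly after observing that the matrix with rows $v_B$ already has at most one nonzero entry per column; both arguments are valid and of comparable length.
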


\vspace{-\baselineskip}\begin{proof}
Recall that elements of $\ker A$ are identified with consistent labellings of $\mG:=\mG_A$. Isolated R-vertices 
  do not affect $\ker A$ and we may, without loss of generality, remove these from $\mG$. 

Consider any consistent labelling of $\mG$. Since R-vertices have degree no more than $2$, if the label of an S-vertex is zero, then all labels of the S-vertices in the same component of $\mG$ are also zero. An immediate consequence is that all labels must be zero on any component which includes an R-vertex with only one neighbour, and so elements of $\ker A$ correspond uniquely to consistent labellings of $\widehat{\mG}$. 

It is easily seen that a consistent labelling can have support including some component $\widehat{\mG}_i$ of $\widehat{\mG}$ if and only if the corresponding component, say $\mG^*_i$, of $\mG^*$ is bipartite in which case, since a connected bipartite graph is uniquely two-colourable, the consistent labelling of $\widehat{\mG}_i$ may be chosen to be a $\{-1,1\}$-labelling   and is unique up to scalar multiples. 

For each bipartite component of $\mG^*$, we choose arbitrarily one of the two possible  consistent $\{-1,1\}$-labellings supported on the corresponding component of $\mG$. Each such component thus contributes exactly one nonzero vector in $\ker A$ with values in $\{-1,0,1\}$, and trivially these have disjoint support, hence are linearly independent. By construction, these vectors span $\ker A$. This proves the first and last claims.
For the second claim we note that starting with any basis of $\ker A$ as rows of a matrix, the new rows obtained by row-reduction have minimal support. Consequently each vector of a row-reduced basis must be supported on exactly one component of $\mG$ corresponding to a bipartite component of $\mG^*$, as constructed.  
\end{proof}

\begin{rem}[Connection with the toricity matrix]
For a network such that $\bm{1} \not \in \im A$, the number of bipartite components of $\mG^*$, i.e., the nullity of $A$, is precisely the rank of the toricity matrix $\wQ$ corresponding to the trivial partition.
\end{rem}

In what follows, we write $\bm{2}$ for a vector of twos of length inferred from the context. 
\begin{lem}
 \label[lemma]{lemquadker2}
For any quadratic matrix $A\in \Z^{m\times n}_{\geq 0}$ such that $\bm{1} \in \mathrm{im}\,A$, there exists a $\{0,1,2\}$-valued vector $z$ satisfying $Az = \bm{2}$. 
\end{lem}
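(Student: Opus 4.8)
The statement asserts: for a quadratic network with $\bm{1} \in \mathrm{im}\,A$, there is a vector $z$ with entries in $\{0,1,2\}$ satisfying $Az = \bm{2}$. The natural approach is to work with the SR graph $\mG := \mG_A$ and interpret the condition $Az = \bm{2}$ as an S-labelling condition: labelling the $i$th S-vertex with $z_i$, the requirement is that for each R-vertex, the sum of labels on its neighbours equals $2$. Since the network is quadratic, each R-vertex has degree at most $2$ (counting parallel edges), so each R-vertex imposes one of three types of constraint: (a) an isolated R-vertex (zero complex) imposes $0 = 2$, which cannot occur; (b) a degree-one R-vertex incident to S-vertex $i$ via a single edge imposes $z_i = 2$; (c) a degree-one R-vertex incident to S-vertex $i$ via a double edge imposes $2z_i = 2$, i.e. $z_i = 1$; (d) a degree-two R-vertex incident to distinct S-vertices $i, j$ imposes $z_i + z_j = 2$.

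First I would use the hypothesis $\bm{1} \in \mathrm{im}\,A$ to rule out the pathological cases. If $\bm{1} = Aw$ for some $w$, then every R-vertex must have nonzero row in $A$, so there are no isolated R-vertices — case (a) is excluded, and the constraint system is consistent in the sense of not containing $0 = 2$. (More carefully, $\bm{1} \in \mathrm{im}\,A$ tells us the all-ones vector, hence $\bm{2}$, is in the column span; I would record that $\mathrm{im}\,A$ is spanned by the rows' ambient structure and that the affine-hull/source-rank bookkeeping from the ``four matrices'' section guarantees solvability over $\R$. Indeed $\bm{2} \in \mathrm{im}\,A$ is exactly the statement that $Az = \bm{2}$ has a real solution.) So the real content is to upgrade a real solution to one with entries in $\{0,1,2\}$.

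The key step is a graph-combinatorial argument on each connected component of $\mG$. Fix a component; by the above, it contains no isolated R-vertex. I would argue componentwise that the constraints $z_i + z_j = 2$ (from degree-two R-vertices) and $z_i \in \{1\}$ or $z_i \in \{2\}$ (from degree-one R-vertices, double or single edge respectively) can always be simultaneously satisfied with values in $\{0,1,2\}$, given that they are satisfiable over $\R$. The crucial observation is that the constraint $z_i + z_j = 2$ is equivalent to $(z_i - 1) + (z_j - 1) = 0$; setting $y_i := z_i - 1$, the degree-two constraints say $y_i = -y_j$ along each such R-vertex, which is precisely a \emph{semi-good}-type sign-propagation along the graph (cf. the labelling language of \Cref{lemcomb} and \Cref{lemquadker1}), forcing $|y_i|$ to be constant on a component and its sign to alternate according to a two-colouring. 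The degree-one constraints pin $y_i = 1$ (single edge: $z_i = 2$) or $y_i = 0$ (double edge: $z_i = 1$). If a component has real solutions, the alternation must be consistent, so either $y \equiv 0$ on the component (take $z_i = 1$ everywhere, always valid), or $|y_i| = 1$ everywhere with signs given by a proper two-colouring — and then $z_i \in \{0, 2\}$. Either way $z_i \in \{0,1,2\}$. Assembling these componentwise choices yields the desired global $z$.

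The main obstacle I anticipate is the careful case analysis around components that mix degree-one and degree-two R-vertices, and around double edges: I need to verify that when a real solution exists, the two-colouring parity forced by the degree-two R-vertices is compatible with \emph{all} the degree-one pins on that component (e.g. a component cannot simultaneously force $|y_i| = 1$ at one leaf and $y_j = 0$ at another leaf, since $|y|$ is constant on a component — but if this conflict arose there would be no real solution, contradicting $\bm{1} \in \mathrm{im}\,A$). Making this ``no-conflict follows from solvability over $\R$'' rigorous is the heart of the argument; once it is in place, reading off the $\{0,1,2\}$ values from $y = z - \bm{1}$ and the two-colouring is routine. I would also double-check the edge case of components consisting of a single $S\!-\!R\!-\!S$ path versus cycles, but the constant-$|y|$/alternating-sign dichotomy handles all of them uniformly.
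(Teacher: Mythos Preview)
Your proof is correct and follows essentially the same approach as the paper. Both arguments list the three constraint types arising from quadratic sources ($z_i = 2$, $z_i = 1$, $z_i + z_j = 2$) and observe that real solvability forces a $(0,1,2)$ solution; the paper compresses your entire graph-propagation analysis (the $y_i = z_i - 1$ substitution and two-colouring) into the single word ``clearly'', so your write-up is in effect a careful unpacking of that step. One minor terminological slip: in $\mG_A$ an R-vertex corresponding to the source $2\fX_i$ has degree~$2$ (two parallel edges), not degree~$1$ --- but this does not affect the argument.
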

\vspace{-\baselineskip}
\begin{proof} 
As $\bm{1} \in \mathrm{im}\,A$, there exists $z$ such that $Az = \bm{2}$. 
As $A$ cannot have a zero row, for each $j=1,\dots,m$, either (i) the $j$th row of $A$ has exactly one nonzero entry $A_{ji}$, in which case $2=(Az)_j=A_{ji}z_i$ and hence $z_i=1$ or $2$; or 
(ii) the $j$th row of $A$ has exactly two nonzero entries $A_{ji},A_{jk}$, $i\neq k$, 
in which case $2=(Az)_j = z_i+z_k$. Whenever the constraints in (i) and (ii) can be satisfied, they can clearly be satisfied by a vector with values in $\{0,1,2\}$.
\end{proof}

\begin{prop} \label[proposition]{lem:kerA1}
For a quadratic network with trivial partition, we can choose $\wQ$ such that (i) its columns include at most one $\{0,1,2\}$-valued vector; (ii) the remaining columns are $\{-1,0,1\}$-valued vectors with disjoint support; and (iii) its row sums are at least $-1$ and at most $2$. 

Additionally, independently of the chosen partition, if $\bm{1} \in \mathrm{im}\,A$ and $\ker A$ is trivial, then $\wQ$ consists of one column and $\mu^{\wQ}$ consists of monomials of degree at most $2$. 
\end{prop}
\vspace{-\baselineskip}
\begin{proof}
Parts (i) and (ii) follow directly from \Cref{lemquadker1,lemquadker2}. 
Part (iii) follows as the $\{-1,0,1\}$-valued vectors from \Cref{lemquadker1} have disjoint support, and whenever the $\{0,1,2\}$-valued vector constructed in the proof of \Cref{lemquadker2} {\em must} have a $2$ in $i$th place, i.e., the network includes some reaction with source $\fX_i$, then all the $\{-1,0,1\}$-valued basis vectors constructed in \Cref{lemquadker1} must have a zero in $i$th place. 

The last statement follows again from \Cref{lemquadker2}, as if $z$ satisfies $Az=\bm{2}$ and is as in the lemma, then, given any $p$-partition $\mP$, the vector with its first $n$ entries agreeing with $z$ and the last $p$ entries equal to $-2$ belongs to $\ker\,[A | \bm{1}_\mP]$. 
\end{proof}

If a network admits a monomial parameterisation, namely $\#\Ak \leq 1$ for all $\k$, then, whenever $\Ak$ is nonempty, say $\Ak = \{\alpha_\k\}$, the equation for positive equilibria on a stoichiometric class, namely $Z[(h(\alpha_\k)/\k)^{\wG}\circ \mu^{\wQ}] = K$, depends only on $\mu^{\wQ}$ (see \Cref{remmonopar}). This includes the special case where $Y$ is a singleton. When, additionally, the network is quadratic, \Cref{lem:kerA1} tells us that the exponents in $\wQ$ are $-1,0,1$ or $2$ and are hence small. Since $\mu$ is a positive variable, this sometimes allows us to obtain an improved BKK bound on the number of positive nondegenerate equilibria, as illustrated by the next example.

\begin{ex}[An alternative BKK source bound from $\wQ$]
\label[example]{ex543_1}
Consider any dynamically nontrivial $(5,4,3)$ network with sources
\[
2\mathsf{V}, \quad 2\mathsf{W},\quad \mathsf{W}+\fX,\quad \fY+\fZ\,,
\]
which are easily seen to be affinely independent. We get a naive B\'ezout source bound of $8$, and a naive BKK source bound of $4$.
However, we can obtain an alternative BKK source bound which improves on the naive bound. As $\ker \Gamma$ is spanned by a positive vector, say $u$, the network can factor only over the trivial partition. 
Computing $\wQ$, but without actually computing $\wG$, we find that, after clearing denominators,
the alternative system, $Z[(u/\k)^{\wG}\circ \mu^{\wQ}] = K$, consists of a pair of cubic equations in $(\mu_1, \mu_2) \in \mathbb{R}^2_+$ of the form
\[
\begin{array}{rcl}
 p_1\mu_1+p_2\mu_2+p_3\mu_1\mu_2+p_4\mu_1^2\mu_2 &=& 0\,,\\
 q_1\mu_1+q_2\mu_2+q_3\mu_1\mu_2+q_4\mu_1^2\mu_2 &=& 0\,.
\end{array}
\]
The alternative B\'ezout bound of $9$ is clearly unhelpful in this case; but by an easy calculation (or, indeed, inspection) the BKK theorem applied to this pair of polynomials gives an alternative BKK source bound of $2$. Note that we reduced a mixed-volume calculation in $\R^5$ to one in $\R^2$ by studying the alternative equations.
\end{ex}
\subsection{Quadratic networks with a single conservation law}\label{sec:quadraticnnn1}
In this subsection, we consider quadratic $(n,m,n-1)$ networks. The bound in \Cref{thmnnn1}(i) can be greatly improved by using \Cref{lem:kerA1}, and made independent of $n$, for quadratic networks with the trivial partition and for other partitions under the additional hypothesis that $\bm{1} \in \im A$. 

\begin{thm}
 \label{thmquadnnn1}
 Consider a quadratic $(n,m,n-1)$ network with $n\leq m \leq n+p_f -1$. Then:
 \begin{itemize}
 \item[(i)] If the source rank is less than $n$ (equivalently, $\mathrm{rank}[A\,|\,\bm{1}] \leq n$, so that either $\bm{1} \in \im A$ or $\rank A<n$), then the network admits no more than two positive nondegenerate equilibria on any stochiometric class.
 \item[(ii)] If $\bm{1} \in \im A$ and there exists a nonzero $\omega \in \ker\Gamma^\top \cap \R^n_{\geq 0}$, then the network admits at most one positive nondegenerate equilibrium on each stoichiometric class. 
  \end{itemize} 
\end{thm}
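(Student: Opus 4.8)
The plan is to run the reduction from the proof of \Cref{thmnnn1} — which, for the networks at hand, turns the count of positive nondegenerate equilibria on a stoichiometric class into the count of positive roots of a single univariate equation — and then to use \Cref{lemquadker1,lemquadker2} to pin down the exponent pattern of the toricity matrix $\wQ_{\mP_f}$, which is a column vector here.

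First I would dispose of the degenerate cases. If the network is $\mP_f$-overdetermined — in particular if it is source deficient, i.e.\ $\rank[A\,|\,\bm{1}]\le n-1$, equivalently $s<r$, since then $s_{\mP_f}\le s<r$ — then \Cref{thmdegen} shows it is strongly degenerate, so it admits no positive nondegenerate equilibria and both assertions hold vacuously. Hence I may assume the network is not $\mP_f$-overdetermined (which in particular rules out source deficiency). Then \Cref{thmnnn1} applies and gives $p_f=m-n+1$, $\mP_f$-toricity, $\mP_f$-independence of the sources, $s_{\mP_f}=r=n-1$, and, via \Cref{cor:special}(iii), that $Y$ is a singleton $\{\alpha\}$ with $W_{\mP_f}$ empty. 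Writing $v:=h(\alpha)$, \Cref{thm:degensolvability} then puts the positive nondegenerate equilibria on the class $\{x\in\R^n_+:Zx=K\}$ in bijection with the positive nondegenerate roots $\mu\in\R_+$ of $F(\mu):=Z\big[(v/\k)^{\wG}\circ\mu^{\wQ_{\mP_f}}\big]-K$; since $\Ak=Y$ has one element for every $\k$, \Cref{thm:bounds_toric} is available with $n-r=1$.

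The core step is then to choose $\wQ_{\mP_f}$ advantageously, using that $\ker[A\,|\,\bm{1}_{\mP_f}]$ is one-dimensional (this matrix being $m\times(m+1)$ of rank $s_{\mP_f}+p_f=m$). For (i): since $\rank[A\,|\,\bm{1}]\le n$ and source deficiency is excluded, $\rank[A\,|\,\bm{1}]=n$. If $\bm{1}\notin\im A$ then $\rank A=n-1$, so $\ker A$ is one-dimensional and, by \Cref{lemquadker1}, spanned by some $w\in\{-1,0,1\}^n$; as $Aw=0$, the vector $(w,0)$ (with $0$ the zero vector of length $p_f$) spans $\ker[A\,|\,\bm{1}_{\mP_f}]$, and I take $\wQ_{\mP_f}=w$. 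If $\bm{1}\in\im A$, then by \Cref{lemquadker2} there is $z\in\{0,1,2\}^n$ with $Az=\bm{2}$; the vector whose first $n$ entries are those of $z$ and whose last $p_f$ entries equal $-2$ lies in, hence spans, $\ker[A\,|\,\bm{1}_{\mP_f}]$, and I take $\wQ_{\mP_f}=z$. In either case $\wQ_{\mP_f}\in\{-1,0,1\}^n$ or $\wQ_{\mP_f}\in\{0,1,2\}^n$, so with $I:=\mathrm{supp}\,Z$ the quantity $q:=\#\big(\{(\wQ_{\mP_f})_i:i\in I\}\cup\{0\}\big)$ is at most $3=(n-r)+2$, and \Cref{thm:bounds_toric} yields the bound $(n-r)+1=2$ on each stoichiometric class (via part (iii) when $q=3$, and via parts (i)/(ii) when $q<3$). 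For (ii): here $\bm{1}\in\im A$, so the construction above gives $\wQ_{\mP_f}=z\in\{0,1,2\}^n$, which has no negative entry; together with the hypothesis that some $\omega\in\ker\Gamma^\top\cap\R^n_{\ge0}\setminus\{0\}$ exists, this means the hypotheses of \Cref{thmnnn1}(iii) hold — trivially, since $(\wQ_{\mP_f})_i\ge0$ for \emph{all} $i$ — and that result gives at most one positive nondegenerate equilibrium per class.

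The main thing to be careful about is bookkeeping: ensuring that every way the network could fail to be $\mP_f$-toric (source deficiency, or $\mP_f$-overdeterminedness) is correctly routed to the strongly degenerate case, and — the point that actually makes the argument bite — verifying the normalisations so that $\wQ_{\mP_f}$ may genuinely be taken equal to the $(-1,0,1)$- or $(0,1,2)$-vector itself, rather than an arbitrary (possibly large) multiple of it. Beyond that, no real estimates are needed; everything follows from \Cref{thmnnn1,thm:bounds_toric,lemquadker1,lemquadker2}.
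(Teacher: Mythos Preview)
Your proof is correct and follows essentially the same line as the paper's: dispose of degenerate cases, invoke \Cref{thmnnn1} to reduce to the single univariate equation \eqref{eq:Z} with $\wQ_{\mP_f}$ a column vector, then split into the subcases $\bm{1}\in\im A$ (use \Cref{lemquadker2} to get $\wQ_{\mP_f}\in\{0,1,2\}^n$) and $\rank A=n-1$ (use \Cref{lemquadker1} to get $\wQ_{\mP_f}\in\{-1,0,1\}^n$), with part~(ii) following from \Cref{thmnnn1}(iii). The only difference is cosmetic: where the paper simply observes that in both subcases the equation $Z[(v/\k)^{\wG}\circ\mu^{\wQ}]=K$ is, after clearing denominators, at-most-quadratic in $\mu$ (giving the bound of two directly), you route the same information through \Cref{thm:bounds_toric} by bounding $q\le 3$; this is a repackaging of the same fact and buys nothing extra, though it is perfectly valid.
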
\vspace{-\baselineskip}
\begin{proof}
 
 Assume that the network is nondegenerate, for otherwise the conclusions are trivial. As in the proof of \Cref{thmnnn1}, the assumptions imply that $p_f=m-n+1$, the network is $\mP_f$-toric, and it has $\mP_f$-independent sources. We choose the finest partition. Then, $Z$ is a row vector, $Q$ is a column vector and, by \Cref{thm:degensolvability}, positive nondegenerate equilibria on the stoichiometric class defined by $Zx = K$ correspond to nondegenerate solutions $\mu \in \mathbb{R}_+$ to
\begin{equation}\label{eq:quad}
Z[(v/\k)^{\wG}\circ \mu^{\wQ}] = K\,.
\end{equation}
We consider the two cases 
 $\bm{1} \in \im A$ or $\rank A<n$ separately:
\begin{enumerate}[align=left,leftmargin=*]
\item[(a)] If $\bm{1} \in \mathrm{im}\,A$, then by \Cref{lem:kerA1}, $\wQ$ can be chosen such that $\mu^{\wQ}$ consists of monomials of degree at most $2$. 
\item[(b)] If $\mathrm{rank}\,A <n$, then we must have $\mathrm{rank}\,A = n-1$ (otherwise the network would be source deficient, hence degenerate). By \Cref{lemquadker1}, we can choose $Q$ (hence $\wQ$) to be a $(-1,0,1)$ vector. 
\end{enumerate}
After clearing denominators if necessary, we obtain that in both cases
\eqref{eq:quad} is at-most-quadratic in $\mu$. This proves (i). 
Part (ii) now follows from \Cref{thmnnn1}(iii), as case (a) applies and hence $\wQ$ has nonnegative integer entries. 
\end{proof}

\begin{cor}
 \label[corollary]{corquadnnn1}
 Quadratic $(n,n,n-1)$ networks admit no more than two positive nondegenerate equilibria on each stochiometric class. The smallest $n$ for which this bound is achieved is $n=3$. The bound becomes $1$ if $\bm{1} \in \mathrm{im}\,A$ and there exists a nonzero $\omega \in \ker \Gamma^\top \cap \R^n_{\geq 0}$.
 \end{cor}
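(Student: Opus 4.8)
The first assertion follows immediately by specialising \Cref{thmquadnnn1} to the case $m=n$, since an $(n,n,n-1)$ network necessarily factors only over the trivial partition (as $m-r=1$ forces $p_f=1=m-n+1$), so the hypotheses $n\le m\le n+p_f-1$ hold with equality. An $(n,n,n-1)$ network which is degenerate trivially satisfies the bound, and if it is nondegenerate then it is not source deficient (source deficiency implies strong degeneracy by \Cref{thmdegen}(iii)), so \Cref{thmquadnnn1}(i) gives at most two positive nondegenerate equilibria on each stoichiometric class. The final assertion is likewise a direct specialisation of \Cref{thmquadnnn1}(ii).

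The substantive part is the sharpness claim: that $n=3$ is the smallest value of $n$ for which the bound $2$ is attained, i.e. (a) no quadratic $(2,2,1)$ network admits two positive nondegenerate equilibria on a stoichiometric class, and (b) some quadratic $(3,3,2)$ network does. Claim (a) is already recorded in \Cref{remnnn1} (it follows from \cite[Lemma 20]{BBH2024smallbif}, or from the rank-one result, or by a direct application of the machinery here: for a quadratic $(2,2,1)$ network the solvability/alternative system reduces to a univariate equation that is at most quadratic in a monomial $\mu$, but the relevant sign pattern — after accounting for $\bm 1\in\im A$ or $\rank A=1$ and the single conservation law — forces at most one positive root). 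For claim (b) the plan is simply to exhibit one explicit quadratic $(3,3,2)$ network together with rate constants, and verify via the alternative system of \Cref{thm:degensolvability} that it has two positive nondegenerate equilibria on some stoichiometric class; by \Cref{thm:degensolvability} nondegeneracy of the equilibria is equivalent to nondegeneracy of the corresponding solutions of the (univariate, at most quadratic) alternative equation, so one only needs the quadratic to have two simple positive roots in the relevant interval. A natural candidate is a network with $\bm 1\notin\im A$ and $\rank A=2$ so that $\wQ$ is a $(-1,0,1)$ column vector, chosen so that the alternative equation genuinely realises a quadratic with two positive roots — for instance a small modification of the multistationary $(2,2,1)$ example of \Cref{remnnn1}, lifted by splitting or by adding a species, whose three-species analogue is guaranteed nondegenerate and multistationary by the inheritance results cited there.

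The main obstacle is purely the bookkeeping for the sharpness example: one must pick a concrete quadratic $(3,3,2)$ network, compute $\Gamma$, check dynamical nontriviality, identify $\ker\Gamma$ (a line through a positive vector, giving the trivial partition and $Y$ a singleton since $m-r=1$), compute $Z$, $\wG$ and $\wQ$, write down the resulting at-most-quadratic alternative equation in $\mu$ on a stoichiometric class, and choose rate constants making it have two simple positive roots. None of this is conceptually hard — it is the same computation performed in \Cref{ex342a} and \Cref{exmoretoric} — but it does require actually producing a valid network rather than an abstract argument, and one must double-check that the two roots are nondegenerate (simple) so that \Cref{thm:degensolvability} delivers two \emph{nondegenerate} equilibria. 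I would also remark, for completeness, that the bound is not attained for $n=2$ for \emph{any} molecularities of the products (not merely quadratic sources), as already noted, so the restriction to quadratic networks is exactly what makes $n=3$ the threshold rather than $n=2$.
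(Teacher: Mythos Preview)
Your proposal is correct and follows essentially the same approach as the paper: reduce the first and last claims to \Cref{thmquadnnn1} via the trivial partition (noting that for an $(n,n,n-1)$ network $[A\,|\,\bm{1}]$ has only $n$ rows, so the hypothesis of \Cref{thmquadnnn1}(i) is automatic), and handle sharpness by citing \Cref{remnnn1} for $n=2$ and exhibiting an example for $n=3$. The only difference is that for the $(3,3,2)$ example the paper simply cites an existing bimolecular instance from \cite[Section~5.1]{banaji:boros:hofbauer:2024a} rather than constructing one from scratch, so your detailed construction plan, while valid, is unnecessary. One minor caution: your parenthetical ``direct application of the machinery here'' sketch for the $n=2$ case is not quite a proof as stated (the sign pattern argument would need care, since \Cref{thmquadnnn1} alone only gives the bound $2$ for $(2,2,1)$ networks), but since you correctly offer the citation to \cite[Lemma~20]{BBH2024smallbif} as the primary route, this is not a gap.
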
\vspace{-\baselineskip}
\begin{proof}
As usual, it is enough to consider networks which are nondegenerate, hence dynamically nontrivial and not source deficient. With the trivial partition we have $\mathrm{rank}\,[A\,|\,\bm{1}]=n$, and hence either $\mathrm{rank}\,A = n$ (and hence $\bm{1} \in \mathrm{im}\,A$); or $\mathrm{rank}\,A = n-1$ and $\bm{1} \not \in \mathrm{im}\,A$. The first and last claims now follow from \Cref{thmquadnnn1}. 
 
We observed in \Cref{remnnn1} that, by \cite[Lemma 20]{BBH2024smallbif}, no quadratic $(2,2,1)$ network admits multiple, positive, nondegenerate equilibria on any stoichiometric class. An example of a quadratic -- in fact bimolecular -- $(3,3,2)$ network with two positive, nondegenerate equilibria on a stoichiometric class was presented in \cite[Section~5.1]{banaji:boros:hofbauer:2024a}. \end{proof}

We close with a result which demonstrates how we may combine various results above to obtain bounds on positive nondegenerate equilibria: in this case for quadratic $\mP$-toric networks with few sources and a single conservation law.
\begin{thm}
\label{thmquadnn2nA}
Let $k \geq 2$. Consider a dynamically nontrivial, quadratic $(n,n+k,n-1)$ network with source rank $n-1$ (i.e., such that $\mathrm{rank}[A\,|\,\bm{1}] = n$).
\begin{enumerate}
\item If the network has $n$ distinct sources, we have an alternative B\'ezout source bound of $2$, and consequently the network admits no more than $2$ positive nondegenerate equilibria on any stoichiometric class. 
\item If the network has $n+1$ distinct sources, we have an alternative B\'ezout source bound of $2n$, and consequently the network admits no more than $2n$ positive nondegenerate equilibria on any stoichiometric class.
\end{enumerate}
\end{thm}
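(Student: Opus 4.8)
The plan is to run the construction of \Cref{thm:degensolvability} with the trivial partition $\mP$ and to bound the two factors of the resulting alternative B\'ezout bound separately, as \Cref{remtoricBezout} permits. As always we may assume the network is nondegenerate, since otherwise all conclusions are vacuous; in particular it is not source deficient, so $s_\mP=\rank[A\,|\,\bm{1}]-1=n-1=r$, the network is $\mP$-toric, $Y$ has dimension $m-r-1=k$, the toricity matrix $\wQ$ is a single column in $\Z^n$, and $Z$ is a single row. By \Cref{remtoricBezout} the alternative B\'ezout source bound is the product of (a) the B\'ezout bound of the solvability system $h(\alpha)^W=\k^W$ read as a square polynomial system in $\alpha$, and (b) the B\'ezout bound of $Z[(h(\alpha)/\k)^{\wG}\circ\mu^{\wQ}]=K$ read as a polynomial in the single variable $\mu$; it therefore suffices to bound each factor.

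First I would dispose of factor (b), which is the same in both parts. Since $\rank[A\,|\,\bm{1}]=n$, exactly one of two situations occurs. If $\bm{1}\in\im A$ then $\rank A=n$ and $\ker A=\{0\}$, so by \Cref{lemquadker2} and \Cref{lem:kerA1} we may take $\wQ$ to be a $(0,1,2)$-vector; then $\mu^{\wQ}$ consists of monomials of degree at most two and, for each fixed $\alpha$, the equation $Z[(h(\alpha)/\k)^{\wG}\circ\mu^{\wQ}]=K$ is an at-most-quadratic polynomial in $\mu$. If instead $\bm{1}\notin\im A$ then $\rank A=n-1$ (strict inequality would make the network source deficient), $\ker A$ is one-dimensional, and since $\wQ$ must lie in $\ker A$ we may take it, by \Cref{lemquadker1}, to be a $(-1,0,1)$-vector; clearing the single denominator $\mu$ again gives an at-most-quadratic equation in $\mu$. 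In both cases factor (b) equals $2$.

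For factor (a) I would choose $W$ via \Cref{lem:Wstruct} with the trivial partition. Writing $\ell$ for the number of distinct sources (note $\ell\ge n$, since $n-1=s\le\ell-1$), that decomposition has $\mathcal{B}_1$ of size $m-\ell$, $\mathcal{B}_2$ of size $\ell-n$ (using $s=s_\mP=n-1$ and $p=1$) and $\mathcal{B}_3$ empty, so $W$ has $m-\ell=(n+k)-\ell$ ``trivial'' rows and $\ell-n$ nontrivial rows. In part (1), $\ell=n$, so $W$ has no nontrivial rows; after clearing denominators the solvability system is linear, factor (a) is $1$, and the alternative B\'ezout source bound is $1\cdot2=2$. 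In part (2), $\ell=n+1$, so $W$ has exactly one nontrivial row $w$, whose support $P$ contains no two reactions with the same source. The $k-1$ trivial rows give $k-1$ independent linear equations which, over the open set $Y$, let us eliminate $k-1$ of the variables $\alpha$, reducing the solvability system to a single univariate polynomial equation of degree $\deg w$ (where, as in \Cref{sec:quadratic_full}, $\deg w$ is the sum of the positive entries of $w$). Thus factor (a) equals $\deg w$, and it remains to prove $\deg w\le n$.

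To this end, as in the proof of \Cref{thmquadnn2n}, I would restrict: delete from $A$ the rows not indexed by $P$ to obtain a quadratic matrix $\widetilde{A}\in\Z_{\ge0}^{|P|\times n}$ with pairwise distinct rows; since the network is not source deficient, $w$ must have full support (otherwise deleting a source yields a square $[\,\cdot\,|\,\bm 1\,]$ of rank $n-1$, i.e.\ a source-deficient subnetwork), so $|P|=n+1$, $\ker[\widetilde{A}\,|\,\bm{1}]^\top$ is one-dimensional and spanned by $w|_P$, and $\deg w=\deg(w|_P)$. The claim $\deg(w|_P)\le n$ is the ``one reaction fewer'' analogue of \Cref{thmquadker}: padding $\widetilde{A}$ with a zero source to reach $n+2$ rows, \Cref{thmquadker} furnishes a nonzero $(-2,-1,0,1,2)$-labelling of the SR graph, i.e.\ a primitive element of $\ker[\widetilde{A}\,|\,\bm{1}]^\top\cap C^{n+1}_2$; but the two configurations in \Cref{remquadker}(1) that would push the degree up to $n+1$ each require $n+2$ R-vertices in the non-isolated part of the graph and a nonzero label on an isolated R-vertex, neither of which is possible once one further R-vertex has been introduced as the zero source. \Cref{lemred} then gives $\deg(w|_P)\le n$, so factor (a) is at most $n$ and the alternative B\'ezout source bound is at most $2n$. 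In both parts \Cref{thm:degensolvability} identifies nondegenerate solutions of the alternative system with positive nondegenerate equilibria on the corresponding stoichiometric class, so the asserted bounds on the number of such equilibria follow. I expect the last step to be the delicate one: one must verify carefully that the low-degree labelling produced by \Cref{thmquadker} can be taken supported away from the added zero source (equivalently, re-run the induction of \Cref{thmquadker} for an $(n+1)\times n$ quadratic exponent matrix), and in particular handle cleanly the subcase $\bm{1}\notin\im A$, where $\rank\widetilde{A}=n-1$ and the enlarged kernel carries an extra degree of freedom.
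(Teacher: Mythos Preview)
Your overall architecture matches the paper: trivial partition, $\mP$-toricity, and the product bound of \Cref{remtoricBezout} splitting into the solvability factor and the $\mu$-factor. Your treatment of factor~(b) and of Part~1 is essentially identical to the paper's.

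The divergence is in Part~2, factor~(a). The paper does \emph{not} pad rows and argue about extremal configurations. Instead it observes that since $\rank[A\,|\,\bm{1}]=n<n+1$, some column of $A$ lies in the span of the remaining columns of $[A\,|\,\bm{1}]$; deleting it yields $[\tilde A\,|\,\bm{1}]$ of size $(n+k)\times n$ with the \emph{same} left kernel, hence the same $W$. Now $\tilde A$ is a quadratic exponent matrix with $n-1$ columns and $n+1$ distinct rows, i.e.\ exactly the $((n-1)+2)\times(n-1)$ shape to which \Cref{thmquadker} applies directly, giving a nontrivial row $w$ with $\deg w\le (n-1)+1=n$. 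No padding, no case analysis, no appeal to \Cref{remquadker}.

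Your route has two genuine gaps. First, the claim that $w$ has full support is not justified: the unique (up to scale) affine dependency among $n+1$ points spanning an $(n-1)$-flat can perfectly well have zero coefficients (e.g.\ three collinear sources plus a fourth off the line in $\R^3$), and ``deleting a source yields a source-deficient subnetwork'' does not follow---you would only conclude that a proper subset of the sources is affinely dependent, which says nothing about the rank of the original network. Second, even granting $|P|=n+1$, padding with a zero source can enlarge $\ker[\,\cdot\,|\,\bm{1}]^\top$ (precisely when the zero source lies in the affine span of the others), and then the low-degree labelling furnished by \Cref{thmquadker} need not be a multiple of $(w|_P,0)$; your exclusion of the extremal configurations therefore does not control $\deg(w|_P)$. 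You correctly flag this as delicate, and the clean way to close it is exactly the paper's column-removal: on $[\widetilde A\,|\,\bm{1}]$ of rank $|P|-1$, drop redundant columns until the matrix has $|P|-2$ columns, and apply \Cref{thmquadker} directly.
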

\vspace{-\baselineskip}\begin{proof}
We consider the network with trivial partition $\mP$ and observe that the network is $\mP$-toric by assumption, and so, by considering the two equations of \eqref{gensol} sequentially, we automatically get an alternative B\'ezout bound, see \Cref{remtoricBezout}. By \Cref{thm:degensolvability} a solution to this pair of equations is nondegenerate if and only if the corresponding equilibrium is. Furthermore, $\wQ$ consists of a single column, which has one of the forms described in \Cref{lem:kerA1}. We choose $W$ as in \Cref{lem:Wstruct}. 

In Part (1),  the solvability system $h(\alpha)^W = \k^W$ is equivalent to a linear system in $\alpha$ (see the proof of \Cref{thmfewsources}), while the equation $Z[(h(\alpha)/\k)^{\wG}\circ \mu^{\wQ}] = K$, can be written as an at-most quadratic equation in $\mu$ (see \Cref{lem:kerA1}), and the first claim follows.

  In Part (2), $W$ has $k-1$ rows which give rise to linear equations, and one nontrivial row, say $w$. We may remove some column of $A$ which is in the span of the remaining columns of $[A\,|\,\bm{1}]$, to obtain an $(n+k) \times n$ matrix, say $[\tilde{A}\,|\,\bm{1}]$, of rank $n$; this does not affect $W$. As in the proof of \Cref{thmquadnn2n}, we can choose  $w$ to have degree no more than $n$. The solvability system thus has a B\'ezout bound of $n$. Moreover, as in Part (1), the equation $Z[(h(\alpha)/\k)^{\wG}\circ \mu^{\wQ}] = K$, can be written as an at-most quadratic equation in $\mu$, leading to a combined B\'ezout bound of $2n$.
\end{proof}

\section{Conclusions}

We have presented approaches characterising the positive equilibrium set of a mass action network based on the network's basic properties, such as the number of species and reactions, the rank, and the configuration of its sources. These approaches begin with writing down alternative systems of equations, often simpler than the original mass action equations, whose solutions are in smooth, one-to-one correspondence with positive equilibria. Moreover, degeneracy or nondegeneracy of equilibria can also be inferred from the alternative systems.

We demonstrated the usefulness of these alternative systems via a number of results and examples, some drawn from the biological literature (see, in particular, \Cref{ex:2site,,exBIOMD,,ex664}). We saw the usefulness of these techniques in parameterising equilibria; in providing bounds on positive nondegenerate equilibria; in characterising the parameter set for multistationarity; and in studying bifurcations. Many of the claims follow from direct calculation or simple applications of B\'ezout's theorem, Descartes' rule of signs, the BKK theorem, and so forth, to the alternative systems. 

We observed that networks may admit more than one partition, and the alternative systems of equations are dependent on a choice of partition; however, the finest partition is often an advantageous choice. Open questions remain about how best to choose various objects such as the generalised inverse $G$ of $[A\,|\,\bm{1}_\mP]$, to obtain the sharpest possible results. 

In the important special case of quadratic networks, we showed that bounds can often be improved. The particularly simple nature of the exponent matrices of quadratic networks opened up new combinatorial techniques for exploring multistationarity, which we only touched on here, but propose as a basis for further theory and applications.

This paper has focussed on developing and illustrating a basic set of techniques; we expect these techniques to find applications in the study of biological networks, especially in conjunction with results on inheritance of nondegenerate behaviours. Much of the work here was originally motivated by the study of bifurcations in mass action networks, and we expect the results here will, in turn, deepen our understanding of bifurcations in mass action networks.

{\bf Acknowledgements}
MB's work was supported by Research England under the Expanding Excellence in England (E3) funding stream awarded to MARS: Mathematics for AI in Real-world Systems in the School of Mathematical Sciences at Lancaster University. EF has been supported by the European Union under the Grant Agreement number 101044561, POSALG. Views and opinions expressed are those of the authors only and do not necessarily reflect those of the European Union or European Research Council (ERC). Neither the European Union nor ERC can be held responsible for them. MB would like to thank Amlan Banaji for several useful conversations on Hausdorff dimension. MB and EF would like to thank Alicia Dickenstein for helpful comments on a draft of this work.

{\bf Conflicts of interest}
The authors have no relevant competing interests to declare.

{\bf Data availability}
Code used to generate and analyse networks is available on github at https://github.com/CRNcode/CRN. The mass action models from the ODEbase repository analysed in this paper are available at https://odebase.org/.

\small
\bibliographystyle{plain}

\begin{thebibliography}{10}

\bibitem{banajiCRNcount}
M.~Banaji.
\newblock {Counting chemical reaction networks with NAUTY}, 2017.
\newblock \href{https://arxiv.org/pdf/1705.10820.pdf}{arXiv:1705.10820}.

\bibitem{muradgithub}
M.~Banaji.
\newblock {CRNcode GitHub repository}, 2022--26.
\newblock \url{https://github.com/CRNcode/CRN}.

\bibitem{banajisplitting}
M.~Banaji.
\newblock Splitting reactions preserves nondegenerate behaviours in chemical
  reaction networks.
\newblock {\em SIAM J. Appl. Math.}, 83(2):748--769, 2023.

\bibitem{banajiborosnonlinearity}
M.~Banaji and B.~Boros.
\newblock The smallest bimolecular mass action reaction networks admitting
  {Andronov--Hopf} bifurcation.
\newblock {\em Nonlinearity}, 36(2):1398, 2023.

\bibitem{BBH2024smallbif}
M.~Banaji, B.~Boros, and J.~Hofbauer.
\newblock {Bifurcations in planar, quadratic mass-action networks with few
  reactions and low molecularity}.
\newblock {\em Nonlinear Dyn.}, 2024.

\bibitem{banaji:boros:hofbauer:2024a}
M.~Banaji, B.~Boros, and J.~Hofbauer.
\newblock Oscillations in three-reaction quadratic mass-action systems.
\newblock {\em Stud. Appl. Math.}, 152(1):249--278, 2024.

\bibitem{banaji:boros:hofbauer:2024b}
M.~Banaji, B.~Boros, and J.~Hofbauer.
\newblock The inheritance of local bifurcations in mass action networks.
\newblock {\em J. Nonlinear Sci.}, 35(72), 2025.

\bibitem{banajicraciun2}
M.~Banaji and G.~Craciun.
\newblock Graph-theoretic approaches to injectivity and multiple equilibria in
  systems of interacting elements.
\newblock {\em Commun. Math. Sci.}, 7(4):867--900, 2009.

\bibitem{banajicraciun}
M.~Banaji and G.~Craciun.
\newblock Graph-theoretic criteria for injectivity and unique equilibria in
  general chemical reaction systems.
\newblock {\em Adv. in Appl. Math.}, 44:168--184, 2010.

\bibitem{banajipantea}
M.~Banaji and C.~Pantea.
\newblock Some results on injectivity and multistationarity in chemical
  reaction networks.
\newblock {\em SIAM J. Appl. Dyn. Syst.}, 15(2):807--869, 2016.

\bibitem{banajipanteaMPNE}
M.~Banaji and C.~Pantea.
\newblock The inheritance of nondegenerate multistationarity in chemical
  reaction networks.
\newblock {\em SIAM J. Appl. Math.}, 78(2):1105--1130, 2018.

\bibitem{BasuAlgorithms}
S.~Basu, R.~Pollack, and M-F. Roy.
\newblock {\em {Algorithms in Real Algebraic Geometry (Algorithms and
  Computation in Mathematics)}}.
\newblock Springer-Verlag, Berlin, Heidelberg, 2006.

\bibitem{bernstein}
D.~N. Bernstein.
\newblock The number of roots of a system of equations.
\newblock {\em Funct. Anal. Appl.}, 9:183--185, 1975.

\bibitem{Bihan:circuit}
F.~Bihan.
\newblock Polynomial systems supported on circuits and dessins d'enfants.
\newblock {\em J. Lond. Math. Soc. (2)}, 75(1):116--132, 2007.

\bibitem{bihan:descartes:1}
F.~Bihan and A.~Dickenstein.
\newblock Descartes' rule of signs for polynomial systems supported on
  circuits.
\newblock {\em Int. Math. Res. Not. IMRN}, (22):6867--6893, 2017.

\bibitem{bihan:descartes:2}
F.~Bihan, A.~Dickenstein, and J.~Forsg\r{a}rd.
\newblock Optimal {D}escartes' rule of signs for systems supported on circuits.
\newblock {\em Math. Ann.}, 381(3-4):1283--1307, 2021.

\bibitem{Bihan:positive_decorated}
F.~Bihan, F.~Santos, and P.-J. Spaenlehauer.
\newblock A polyhedral method for sparse systems with many positive solutions.
\newblock {\em SIAM J. Appl. Algebra Geom.}, 2(4):620--645, 2018.

\bibitem{bihan:gale}
F.~Bihan and F.~Sottile.
\newblock New fewnomial upper bounds from {G}ale dual polynomial systems.
\newblock {\em Mosc. Math. J.}, 7(3):387--407, 573, 2007.

\bibitem{bihan-sottile-complete-gale}
F.~Bihan and F.~Sottile.
\newblock Gale duality for complete intersections.
\newblock {\em Annales de l'Institut Fourier}, 58(3):877--891, 2008.

\bibitem{RAG}
J.~Bochnak, M.~Coste, and M.~Roy.
\newblock {\em Real algebraic geometry}, volume~36 of {\em Ergebnisse der
  Mathematik und ihrer Grenzgebiete (3)}.
\newblock Springer-Verlag, 1998.

\bibitem{boyd2004convex}
S.~Boyd and L.~Vandenberghe.
\newblock {\em Convex optimization}.
\newblock Cambridge university press, 2004.

\bibitem{cappelletti:flow}
D.~Cappelletti, E.~Feliu, and C.~Wiuf.
\newblock Addition of flow reactions preserving multistationarity and
  bistability.
\newblock {\em Math. Biosci.}, 320:108295, 8, 2020.

\bibitem{Clarke:1980tz}
B.~L. Clarke.
\newblock {\em {Stability of Complex Reaction Networks}}, volume~43 of {\em
  Advances in Chemical Physics}.
\newblock John Wiley {\&} Sons, Inc., Hoboken, NJ, USA, 1980.

\bibitem{Conradi.2016aa}
C.~Conradi, E.~Feliu, M.~Mincheva, and C.~Wiuf.
\newblock Identifying parameter regions for multistationarity.
\newblock {\em {PLoS Comput. Biol.}}, 13(10):e1005751, 2017.

\bibitem{conradi-flockerzi}
C.~Conradi, D.~Flockerzi, and J.~Raisch.
\newblock Multistationarity in the activation of a mapk: Parametrizing the
  relevant region in parameter space.
\newblock {\em Math. Biosci.}, 211:105–131, 2008.

\bibitem{conradi2019total}
C.~Conradi, A.~Iosif, and T.~Kahle.
\newblock Multistationarity in the space of total concentrations for systems
  that admit a monomial parametrization.
\newblock {\em Bull. Math. Biol.}, 81(10):4174–4209, 2019.

\bibitem{costesemialgebraic}
M.~Coste.
\newblock An introduction to semi-algebraic geometry, 2000.
\newblock Dip. Mat. Univ. Pisa, Dottorato di Ricerca in Matematica, Istituti
  Editoriali e Poligrafici Internazionali, Pisa.

\bibitem{Craciun2009toric}
G.~Craciun, A.~Dickenstein, A.~Shiu, and B.~Sturmfels.
\newblock Toric dynamical systems.
\newblock {\em J. Symbolic Comput.}, 44:1551--1565, 2009.

\bibitem{craciun}
G.~Craciun and M.~Feinberg.
\newblock Multiple equilibria in complex chemical reaction networks: {I}. {T}he
  injectivity property.
\newblock {\em SIAM J. Appl. Math.}, 65(5):1526--1546, 2005.

\bibitem{craciun-entrapped}
G.~Craciun and M.~Feinberg.
\newblock Multiple equilibria in complex chemical reaction networks: semiopen
  mass action systems.
\newblock {\em SIAM J. Appl. Math.}, 70(6):1859--1877, 2010.

\bibitem{CraciunHeltonWilliams}
G.~Craciun, J.~W. Helton, and R.~J. Williams.
\newblock Homotopy methods for counting reaction network equilibria.
\newblock {\em Math. Biosci.}, 216(2):140--149, 2008.

\bibitem{muller_deshpande}
A.~Deshpande and S.~Müller.
\newblock Existence of a unique solution to parametrized systems of generalized
  polynomial equations.
\newblock {\em ar{X}iv}, 2409.11288, 2024.

\bibitem{dickenstein-binomials}
A.~Dickenstein.
\newblock A world of binomials.
\newblock In {\em Foundations of computational mathematics, {H}ong {K}ong
  2008}, volume 363 of {\em London Math. Soc. Lecture Note Ser.}, pages 42--67.
  Cambridge Univ. Press, Cambridge, 2009.

\bibitem{biomodels}
N.~Le~Novère et~al.
\newblock {{B}io{M}odels database: a free, centralized database of curated,
  published, quantitative kinetic models of biochemical and cellular systems.}
\newblock {\em Nucleic Acids Res.}, 34:D689–--D691, 2006.

\bibitem{falconerfractal}
K.~Falconer.
\newblock {\em {Fractal Geometry -- Mathematical Foundations and
  Applications}}.
\newblock Wiley, 1990.

\bibitem{Feinberg1972complex}
M.~Feinberg.
\newblock Complex balancing in general kinetic systems.
\newblock {\em Arch. Ration. Mech. Anal.}, 49(3):187, 1972.

\bibitem{Feinberg1995class}
M.~Feinberg.
\newblock The existence and uniqueness of steady states for a class of chemical
  reaction networks.
\newblock {\em Arch. Ration. Mech. Anal.}, 132(4):311--370, 1995.

\bibitem{helmer:gale}
E.~Feliu and M.~Helmer.
\newblock Multistationarity and bistability for fewnomial chemical reaction
  networks.
\newblock {\em Bull. Math. Biol.}, 81(4):1089--1121, 2019.

\bibitem{feliu:toric}
E.~Feliu and O.~Henriksson.
\newblock Toric invariance of vertically parametrized systems.
\newblock {\em arXiv:2411.15134v3}, 2025.

\bibitem{feliu:dimension}
E.~Feliu, O.~Henriksson, and B.~Pascual-Escudero.
\newblock The generic geometry of steady state varieties.
\newblock {\em arXiv}, 2412.17798, 2024.

\bibitem{feliu:vertical}
E.~Feliu, O.~Henriksson, and B.~Pascual-Escudero.
\newblock Generic consistency and nondegeneracy of vertically parametrized
  systems.
\newblock {\em J. Algebra}, 677:630--666, 2025.

\bibitem{feliu:intermediates}
E.~Feliu and C.~Wiuf.
\newblock Simplifying biochemical models with intermediate species.
\newblock {\em J. R. Soc. Interface}, 10:20130484, 2013.

\bibitem{gatermann-hopf}
K.~Gatermann, M.~Eiswirth, and A.~Sensse.
\newblock Toric ideals and graph theory to analyze {H}opf bifurcations in mass
  action systems.
\newblock {\em J. Symbolic Comput.}, 40(6):1361--1382, 2005.

\bibitem{gatermann-sparse}
K.~Gatermann and B.~Huber.
\newblock A family of sparse polynomial systems arising in chemical reaction
  systems.
\newblock {\em J. Symbolic Comput.}, 33(3):275--305, 2002.

\bibitem{bryan:partition}
B.~S. Hernandez, D~A. Amistas, R.~J.~L. De~la Cruz, L.~L. Fontanil, V.~A.~A.
  de~los Reyes, and E.~R. Mendoza.
\newblock Independent, incidence independent and weakly reversible
  decompositions of chemical reaction networks.
\newblock {\em MATCH Commun. Math. Comput. Chem.}, 87(2):367--396, 2022.

\bibitem{bryan:partition2}
B.~S. Hernandez and R.~J.~L. De~la Cruz.
\newblock Independent decompositions of chemical reaction networks.
\newblock {\em Bull. Math. Biol.}, 83(7), 2021.

\bibitem{hernandez2023independent}
B.~S. Hernández and E.~R. Mendoza.
\newblock Positive equilibria of power law kinetics on networks with
  independent linkage classes.
\newblock {\em J. Math. Chem.}, 61(3):630--651, 2023.

\bibitem{hornjackson}
F.~Horn and R.~Jackson.
\newblock General mass action kinetics.
\newblock {\em Arch. Rational. Mech. Anal.}, 47:81--116, 1972.

\bibitem{James1978TheGI}
M.~C. James.
\newblock The generalised inverse.
\newblock {\em The Math. Gazette}, 62:109 -- 114, 1978.

\bibitem{joshishiu}
B.~Joshi and A.~Shiu.
\newblock Atoms of multistationarity in chemical reaction networks.
\newblock {\em J. Math. Chem.}, 51(1):153--178, 2013.

\bibitem{joshi-shiu-III}
B.~Joshi and A.~Shiu.
\newblock {A survey of methods for deciding whether a reaction network is
  multistationary}.
\newblock {\em Math. Model. Nat. Phenom.}, 10(5):47--67, 2015.

\bibitem{JoshiShiu2017}
B.~Joshi and A.~Shiu.
\newblock Which small reaction networks are multistationary?
\newblock {\em SIAM J. Appl. Dyn. Syst.}, 16(2):802--833, 2017.

\bibitem{kaihnsa:connectivity}
N.~Kaihnsa and M.~L. Telek.
\newblock Connectivity of parameter regions of multistationarity for multisite
  phosphorylation networks.
\newblock {\em Bull. Math. Biol.}, 86(12), 2024.

\bibitem{Kothamachu2015multistability}
V.~B. Kothamachu, E.~Feliu, L.~Cardelli, and O.~S. Soyer.
\newblock Unlimited multistability and boolean logic in microbial signalling.
\newblock {\em J. Royal Society Interface}, 12(108):20150234, 2015.

\bibitem{kuznetsov:2023}
Y.~A. Kuznetsov.
\newblock {\em Elements of Applied Bifurcation Theory}, volume 112 of {\em
  Applied Mathematical Sciences}.
\newblock Springer Cham, fourth edition, 2023.

\bibitem{odebase2022}
C.~Lüders, T.~Sturm, and O.~Radulescu.
\newblock {ODEbase: a repository of ODE systems for systems biology}.
\newblock {\em Bioinf. Adv.}, 2(1), 2022.

\bibitem{Millan2012toricsteadystates}
M.~Pérez Millán, A.~Dickenstein, A.~Shiu, and C.~Conradi.
\newblock Chemical reaction systems with toric steady states.
\newblock {\em Bull. Math. Biol.}, 74:1027--1065, 2012.

\bibitem{regensburger:gale}
S.~M\"uller and G.~Regensburger.
\newblock Parametrized systems of generalized polynomial inequalities via
  linear algebra and convex geometry.
\newblock {\em Positivity}, 30(1):Paper No. 4, 26, 2026.

\bibitem{Muller2015injectivity}
S.~Müller, E.~Feliu, G.~Regensburger, C.~Conradi, A.~Shiu, and A.~Dickenstein.
\newblock Sign conditions for injectivity of generalized polynomial maps with
  applications to chemical reaction networks and real algebraic geometry.
\newblock {\em Found. Comput. Math.}, 16(1):69–97, 2015.

\bibitem{MR12}
S.~Müller and G.~Regensburger.
\newblock Generalized mass action systems: Complex balancing equilibria and
  sign vectors of the stoichiometric and kinetic-order subspaces.
\newblock {\em SIAM J. Appl. Math.}, 72(6):1926--1947, 2012.

\bibitem{pantea:voitiuk:2022}
C.~Pantea and G.~Voitiuk.
\newblock Classification of multistationarity for mass action networks with
  one-dimensional stoichiometric subspace, 2022.
\newblock \href{https://arxiv.org/pdf/2208.06310.pdf}{arXiv:2208.06310}.

\bibitem{sadeghimanesh:multi}
A.~Sadeghimanesh and E.~Feliu.
\newblock The multistationarity structure of networks with intermediates and a
  binomial core network.
\newblock {\em Bull. Math. Biol.}, 81:2428–2462, 2019.

\bibitem{shinarfeinbergconcord1}
G.~Shinar and M.~Feinberg.
\newblock Concordant chemical reaction networks.
\newblock {\em Math. Biosci.}, 240(2):92--113, 2012.

\bibitem{sottile:book}
F.~Sottile.
\newblock {\em Real solutions to equations from geometry}, volume~57 of {\em
  University Lecture Series}.
\newblock American Mathematical Society, Providence, RI, 2011.

\bibitem{Sturmfels02solvingsystems}
B.~Sturmfels.
\newblock Solving systems of polynomial equations.
\newblock In {\em {American Mathematical Society, {CBMS} Regional Conferences
  Series, No. 97}}, 2002.

\bibitem{Tang2020MultistabilityOS}
X.~Tang and H.~Xu.
\newblock Multistability of small reaction networks.
\newblock {\em SIAM J. Appl. Dyn. Syst.}, 20:608--635, 2020.

\bibitem{Telek2023}
M.~L. Telek and E.~Feliu.
\newblock {Topological descriptors of the parameter region of
  multistationarity: Deciding upon connectivity}.
\newblock {\em PLoS Comput. Biol.}, 19(3):e1010970, 2023.

\bibitem{Tricot_1982}
C.~Tricot.
\newblock Two definitions of fractional dimension.
\newblock {\em Math. Proc. Camb. Philos. Soc.}, 91(1):57–74, 1982.

\bibitem{Wang.2008aa}
L.~Wang and E.~D. Sontag.
\newblock On the number of steady states in a multiple futile cycle.
\newblock {\em J. Math. Biol.}, 57(1):29--52, 2008.

\bibitem{feliuwiufSIADS2013}
C.~Wiuf and E.~Feliu.
\newblock Power-law kinetics and determinant criteria for the preclusion of
  multistationarity in networks of interacting species.
\newblock {\em SIAM J. Appl. Dyn. Syst.}, 12(4):1685--1721, 2013.

\end{thebibliography}

\end{document}